\theoremstyle{plain}
\newtheorem{theorem}{Theorem}
\newtheorem{lemma}[theorem]{Lemma}
\newtheorem{corollary}[theorem]{Corollary}
\newtheorem{proposition}[theorem]{Proposition}
\theoremstyle{definition}
\theoremstyle{remark}
\newtheorem{remark}[theorem]{Remark}
\numberwithin{equation}{section}
\newcommand\bq{\begin{eqnarray}}
\newcommand\eq{\end{eqnarray}}
\newcommand\nn{\nonumber}
\renewcommand{\epsilon}{\varepsilon}
\newcommand\cF {\mathcal{F}}
\newcommand\cD {\mathcal{D}}
\newcommand\R {\mathbb{R}}
\title[2D  Bose gas ground state energy from Bogoliubov functional]{Ground state energy of a dilute two-dimensional Bose gas from the Bogoliubov free  energy functional }
\author[S. Fournais]{S{\o}ren Fournais}
\address{Department of Mathematics, Aarhus University, Ny Munkegade 118, DK-8000 Aarhus C, Denmark}
\email{fournais@math.au.dk}
\author[M. Napi\'orkowski]{Marcin Napi\'orkowski}
\address{Department of Mathematics, LMU Munich, Theresienstrasse 39, 80333 Munich, Germany \& \newline
Department of Mathematical Methods in Physics, Faculty of Physics, University of Warsaw, Pasteura 5, 02-093 Warsaw, Poland}
\email{marcin.napiorkowski@fuw.edu.pl}
\author[R. Reuvers]{Robin Reuvers}
\address{DAMTP, Centre for Mathematical Sciences, University of Cambridge, Wilberforce Road, Cambridge CB3 0WA, United Kingdom} 
\email{r.reuvers@damtp.cam.ac.uk}
\author[J.~P. Solovej]{Jan Philip Solovej}
\address{QMATH, Department of Mathematical Sciences, University of Copenhagen, Universitetsparken 5, DK-2100 Copenhagen \O, Denmark} 
\email{solovej@math.ku.dk}
\begin{document}
\date{\today}

\begin{abstract}   
  We extend the analysis of the Bogoliubov free energy functional to two dimensions at very low temperatures.  For sufficiently weak interactions, we prove two term asymptotics for the ground state energy. 
\end{abstract}

\maketitle

\section{Introduction}
It has been a long standing problem to determine the asymptotics of the ground state energy density $e_0 (n,\rho)$ of a dilute Bose gas in  $n$ dimensions  in the thermodynamic limit.

In three spatial dimensions ($n=3$), the famous Lee--Huang--Yang formula \cite{LeeHuaYan-57} predicts that   
\begin{equation}\label{eq:LHY}
e_0 (3,\rho)=4\pi a \rho^2 \left(1 +\frac{128}{15 \sqrt{\pi}}(\rho a^3)^{1/2}+o(\sqrt{\rho a^3})\right) \,\,\, \text{as} \,\,\, \rho a^3 \to 0,
\end{equation}
where $a$ denotes the scattering length of the two-body interaction potential, $\rho$ is the density of the system and we have chosen units so that $\hbar/2m=1$. The diluteness condition
$\rho a^3 \ll 1 $ means that average inter-particle spacing $\rho^{-1/3}$ is much smaller than the effective range of the potential $a$. Note that this formula implies that the ground state energy depends on the two-body interaction only through the scattering length and can be interpreted as a universality result.

The first rigorous result in the direction of \eqref{eq:LHY} has been obtained by Dyson in \cite{Dyson-57} where a matching upper bound for the first term in the expansion has been obtained. It then took over 40 years until Lieb and Yngvason \cite{LieYng-98} proved a lower bound for the leading order term (see also \cite{Lee-09,Yin-10b,LeeYin-10} for further improvements and extensions). The correct upper bound for the second term in the expansion has been obtained by Yau and Yin \cite{YauYin-09} (see also \cite{ErdSchYau-08} for an intermediate result). A precise lower bound that captures the second order term has been proven only very recently in \cite{FouSol-19} (see (\cite{GiuSei-09,BriSol-17,BriFouSol-18} for partial results). 

One can ask the same question in two spatial dimensions ($n=2$). In this case it has been first derived by Schick \cite{Schick-71} and then proved by Lieb and Yngvason \cite{LieYng-01} that
\begin{equation}\label{eq:Schick}
e_0 (2,\rho)=\frac{4\pi \rho^2}{|\ln(\rho a^2)|}\left(1+o(1)\right) \,\,\, \text{as} \,\,\, \rho a^2 \to 0.
\end{equation}
While there are no rigorous results about the second order correction term, it has been predicted in many works (see, e.g., \cite{Andersen-02,PilBorCasGio-05,MorCas-09}) that it should be given by 
\begin{equation} \label{eq:secondorderconjecture}
-\frac{4\pi \rho^2 \ln|\ln(\rho a^2)| }{|\ln(\rho a^2)|^2}.
\end{equation}
An even more difficult task is to consider the Bose gas at positive temperatures. In principle, all thermodynamical properties of such a system are accessible via its free energy. There exist only very limited rigorous results about the free energy of a bosonic system starting from a many-body Schr\"odinger Hamiltonian. In fact, in the dilute regime, the homogeneous gas in three dimensions has been treated by Seiringer \cite{Sei-08} and Yin \cite{Yin-10a} (see also \cite{DeuSeiYng-18,DeuSei-19} for recent developments for the trapped and Gross--Pitaevskii cases).
The lower \cite{Sei-08} and upper bound \cite{Yin-10a} prove the free energy asymptotics 
\begin{equation} \label{eq:3D_free_energy_SeiYin}
F(T,\rho)=F_0 (T,\rho)+4\pi a \left( 2\rho^2 - [\rho-\rho_{\rm{fc}} ]_+ ^2\right) + o(a\rho^2) \,\,\, \text{as} \,\,\, \rho a^3 \to 0.
\end{equation} 
Here $F_0 (T,\rho)$ is the free energy of the ideal Bose gas given by
\bq \label{def:idealfreeenergy}
F_0 (T,\rho)=\sup_{\mu \leq 0} \left\{ \mu \rho +\frac{T}{(2\pi)^3} \int_{\mathbb{R}^3} \ln \left(1- e^{-T^{-1}(p^2-\mu)} \right) dp  \right\}.
\eq
The free critical density for Bose--Einstein condensation (BEC) in the ideal gas equals 
\bq \label{def:freecritdensity}
\rho_{\rm{fc}} (T)= \frac{1}{(2\pi)^3} \int_{\mathbb{R}^3} \left(e^{p^2/T} -1 \right)^{-1} dp. \eq
Finally, $[\cdot]_+ = \max \{\cdot,0\}$  denotes the positive part. Very recently, the free energy of a homogeneous two-dimensional Bose gas has been rigorously analyzed by Deuchert, Mayer and Seiringer \cite{DeuMaySei-18}.

While the impressive results mentioned above improve our understanding of Bose gases at positive temperatures significantly, the proof of Bose--Einstein Condensation in a homogeneous system in the thermodynamic limit is still lacking.  The analysis of certain properties of Bose systems at positive temperatures relies therefore only on effective theories which - while still non-trivial - are easier to handle than the full many-body problem. 

One of such effective theories is given by the so-called Bogoliubov free energy functional. This functional, first introduced by Critchley and Solomon  \cite{CriSol-76}, models a homogeneous Bose gas at positive temperatures in the thermodynamic limit. It is obtained by evaluating the expectation value of $H-TS-\mu \mathcal{N}$ in a quasi-free state (here $H$ is the Hamiltonian, $S$ is the von Neumann entropy, $\mathcal{N}$ is the particle number operator and $\mu$ us the chemical potential) and then varying over all quasi-free states. 

 Recently, several interesting properties of a system of interacting bosons have been derived within this approximation. In particular, it has been shown in \cite{NapReuSol-15a} that the model exhibits a BEC phase transition and, in the dilute limit, the critical temperature of such phase transition has been derived \cite{NapReuSol-15b}. The Lee--Huang--Yang formula \eqref{eq:LHY} and the free energy expansion \eqref{eq:3D_free_energy_SeiYin} have been also almost (this will be made precise later) recovered. In \cite{NapReuSol-17} it has been shown that the functional has also interesting features in two dimensions at criticality. 

The goal of this paper is to extend the analysis in \cite{NapReuSol-17} to include a derivation of the ground state energy formula that includes both \eqref{eq:Schick} and \eqref{eq:secondorderconjecture}. A precise statement of the theorem will be given in Section \ref{sec:functionalandresult} where the functional will be also introduced. In Section \ref{sec:generalresults} we will recall its most important properties, summarizing the results of \cite{NapReuSol-15a,NapReuSol-15b,NapReuSol-17}. In Section \ref{sec:simplifiedfunctional} we will provide an outline of the proof together with the most important preliminary steps. The proof of the main result will be a  given in Section \ref{sec:2DfreeenergylowT}.
\medskip
\medskip

\section{The model and main result}\label{sec:functionalandresult}

The model we want to analyze is defined by the \textit{Bogoliubov free energy functional} $\cF$ given by
\bq
\label{def:grandcanfreeenergyfunctional}
\begin{aligned}
\mathcal{F}(\gamma,\alpha,\rho_{0})&= (2\pi)^{-n}\int_{\mathbb{R}^n} p^{2}\gamma(p)dp-\mu\rho-TS(\gamma,\alpha)+\frac{\widehat{V}(0)}{2}\rho^{2} \\
&+\frac{1}{2}(2\pi)^{-2n}\iint_{\mathbb{R}^n\times \mathbb{R}^n}\widehat{V}(p-q)\left(\alpha(p)\alpha(q)+\gamma(p)\gamma(q)\right)dpdq \\ &+\rho_{0}(2\pi)^{-n}\int_{\mathbb{R}^n}\widehat{V}(p)\left(\gamma(p)+\alpha(p)\right)dp,
\end{aligned}
\eq
which is the free energy expectation value in a quasi-free state. Here, $\rho$ denotes the density of the system and 
\bq
\rho=\rho_0+(2\pi)^{-n}\int_{\mathbb{R}^n}\gamma(p)dp=:\rho_0+\rho_\gamma. \nn
\eq
The entropy $S(\gamma,\alpha)$ is 
\begin{equation}
\label{entro}
\begin{aligned}
&S(\gamma,\alpha)= \quad(2\pi)^{-n}\int_{\mathbb{R}^n}s(\gamma(p),\alpha(p))dp\quad=\quad(2\pi)^{-n}\int_{\mathbb{R}^n}s(\beta(p))dp \\
 &=(2\pi)^{-n}\int_{\mathbb{R}^n}\left[\left(\beta(p)+\frac{1}{2}\right)\ln\left(\beta(p)+\frac{1}{2}\right)-\left(\beta(p)-\frac{1}{2}\right)\ln\left(\beta(p)-\frac{1}{2}\right)\right]dp,
\end{aligned}
\end{equation}
where
\bq
\beta(p):=\sqrt{\left(\frac{1}{2}+\gamma(p)\right)^{2}-\alpha(p)^{2}}.\label{def:beta}
\eq
The functional is defined on the domain $\mathcal{D}$ given by
\bq
\mathcal{D}=\{(\gamma,\alpha,\rho_{0})|\gamma \in L^{1}((1+p^{2})dp),\gamma(p)\geq0, \alpha(p)^{2}\leq\gamma(p)(1+\gamma(p)), \rho_{0}\geq 0\}. \nn
\eq

This set-up describes the grand canonical free energy of a homogeneous Bose gas at temperature $T\geq0$ and chemical potential $\mu\in \mathbb{R}$ in the thermodynamic limit. The grand-canonical free energy is given by
\begin{align}
\label{gcmin}
F(T,\mu)=\inf_{(\gamma,\alpha,\rho_{0})\in \mathcal{D}}\mathcal{F}(\gamma,\alpha,\rho_{0}).
\end{align}
Throughout the paper we will assume that the interaction potential is described through a positive, radial, smooth and compactly supported function $V(x)$ whose Fourier transform 
$$\widehat{V}(p)=\int_{\R^n}e^{-ipx}V(x)dx$$
is also positive.

Let us very briefly mention how \eqref{def:grandcanfreeenergyfunctional} is obtained. We start from the Hamiltonian for a gas of $N$ bosons with a repulsive pair interaction $V$ in a $n$-dimensional box $\left[-l/2,l/2\right]^n$ and periodic boundary conditions, where $n=2,3$. In units $\hbar=2m=k_B=1$, 
\begin{equation}
\label{HN1}
H_N=\sum_{1\leq i\leq N}-\Delta_i+\sum_{1\leq i<j\leq N}V_{ij},
\end{equation}
with second-quantized form in momentum space
\begin{equation}
\label{HN}
H=\sum_p p^2 a^\dagger_p a_p+\frac{1}{2l^n}\sum_{p,q,k} \widehat{V}(k) a_{p+k}^\dagger a_{q-k}^\dagger a_q a_p.
\end{equation}
The canonical Gibbs state at temperature $T$ and particle density $\rho=N/l^n$ can be found by minimizing
\begin{equation}
\label{tomin2}
\inf_{\omega}\big[\langle H_N\rangle_\omega-TS(\omega)\big],
\end{equation}
where $\omega$ is an $N$-boson state and $S$ is the von Neumann entropy.

The grand canonical Gibbs state at temperature $T$ and chemical potential $\mu$ is the minimizer of 
\begin{equation}
\label{tomin}
\inf_{\omega}\big[\langle H-\mu\mathcal{N}\rangle_\omega-TS(\omega)\big],
\end{equation}
where $\omega$ is now a state on the bosonic Fock space, $\mathcal{N}$ is the particle number operator and the infimum itself is the free energy. 

The states $\omega$ which we want to consider in \eqref{tomin} are quasi-free states with an added condensate which we expect to have zero momentum. To implement the possibility of the occurrence of a condensate, one replaces $a_0\rightarrow a_0+\sqrt{l^n\rho_0}$ in the Hamiltonian. This is the counterpart of the famous $c$-number substitution of Bogoliubov  \cite{Bogoliubov-47b} which has been rigorously justified in \cite{LieSeiYng-05}. One then evaluates the expectation value of the resulting Hamiltonian among quasi-free states only. Those states satisfy Wick's rule and thus 
\begin{equation}
\langle a_{p+k}^\dagger a_{q-k}^\dagger a_q a_p\rangle =\langle a^\dagger_{p+k}a^\dagger_{q-k}\rangle\langle a_q a_p\rangle+\langle a^\dagger_{p+k}a_q\rangle\langle a^\dagger_{q-k}a_p\rangle+\langle a^\dagger_{p+k}a_p\rangle\langle a^\dagger_{q-k}a_q\rangle.
\end{equation}
Assuming translation invariance and $\langle a_p a_{-p}\rangle=\langle a^\dagger_{-p} a^\dagger_p\rangle$, the two (real-valued) functions \mbox{$\gamma(p):=\langle a^\dagger_p a_p\rangle\geq0$} and \mbox{$\alpha(p):=\langle a_p a_{-p}\rangle$}, together with the number $\rho_0$, now fully determine the expectation value in the Hamiltonian part of \eqref{tomin}. Taking the thermodynamic limit $l\to\infty$, yields the energy part of the functional.
One still has to determine the formula of the entropy of quasi-free states. For that we refer to the appendix of \cite{NapReuSol-15a}.

The derivation provides us with the following interpretation of the variables of the functional. The function $\gamma\in L^{1}((1+p^{2})dp)$ describes the momentum
distribution of the particles in the system. Since the total density
equals $\rho=\rho_0+(2\pi)^{-n}\int_{\mathbb{R}^n}\gamma(p)dp$, it
follows that a non-negative $\rho_0$ can be seen as the macroscopic
occupation of the state of momentum zero and is therefore interpreted as
the density of the Bose--Einstein condensate fraction.

Finally, the function $\alpha(p)$ describes pairing in the system and
its non-vanishing value can therefore be interpreted as the presence of
off-diagonal long-range order (ODLRO) and the macroscopic coherence
related to superfluidity.

It might be not immediately visible why the functional can be associated with Bogoliubov. The reason is the following.  Recall that Bogoliubov's approach \cite{Bogoliubov-47b} relies on two main assumptions: the $c$-number substitution and the truncation of the resulting Hamiltonian to a quadratic one. Now, ground and Gibbs states of such quadratic Hamiltonians are quasi-free states; exactly the states considered in our minimization problem. This is why we name the functional the way we do. Bogoliubov theory was extremely successful, therefore the hope that the functional will have interesting properties as well.  

So far we defined the model in the grand canonical ensamble. The diluteness condition $\rho^{1/2} a\ll 1$, under which we want to analyze the two-dimensional model, requires the notion of the density. To this end we formulate the canonical version of the functional. It is given by 
\bq
\label{def:canonicalfreeenergyfunctional}
\begin{aligned}
  \mathcal{F}^{\rm{can}}(\gamma,\alpha,\rho_{0})&= (2\pi)^{-n}\int_{\mathbb{R}^n} p^{2}\gamma(p)dp-TS(\gamma,\alpha)+\frac12\widehat{V}(0)\rho^{2}\\ &+\rho_{0}(2\pi)^{-n}\int_{\mathbb{R}^n}\widehat{V}(p)\left(\gamma(p)+\alpha(p)\right)dp \\
  &+\frac{1}{2}(2\pi)^{-2n}\iint_{\mathbb{R}^n\times
    \mathbb{R}^n}\widehat{V}(p-q)\left(\alpha(p)\alpha(q)+\gamma(p)\gamma(q)\right)dpdq,
\end{aligned}
\eq
with $\rho_0=\rho-\rho_\gamma$. The canonical minimization problem is
\begin{equation}
 \label{def:canonicalminimization}
\begin{aligned}
F^{\rm{can}}(T,\rho)&=\inf_{\substack{({\gamma},{\alpha},\rho_{0}=\rho-\rho_\gamma)\in\cD\\
    }}\cF^{\rm{can}}(\gamma,\alpha,\rho_0)=\inf_{0\leq\rho_0\leq\rho}f(\rho-\rho_0,\rho_0),
\end{aligned}
\end{equation}
where
\begin{equation}
\label{funcf}
f(\lambda,\rho_0)=\inf_{\substack{({\gamma},{\alpha})\in\cD'\\
      \int\gamma=\lambda
    }}\cF^{\rm{can}}(\gamma,\alpha,\rho_0)
\end{equation}
and 
\[
\mathcal{D}'=\{(\gamma,\alpha)\ |\ \gamma\in L^1((1+p^2)dp),\ \gamma(p)\geq0,\ \alpha(p)^2\leq \gamma(p)(\gamma(p)+1)\}.
\]
Strictly
speaking, this is not really a canonical formulation: it is only the
expectation value of the number of particles that we fix. We will
nevertheless describe this energy as canonical. The function
$F(T,\mu)$ as a function of $\mu$ is the Legendre transform of the
function $F^{\rm{can}}(T,\rho)$ as a function of $\rho$.

We are now ready to state the main result of this paper.
\begin{theorem}\label{thm:mainresult}
Consider $n=2$ and the dilute limit $\rho a^2\ll 1$. Let  $b:=|\ln(\rho a^2)|^{-1} \ll 1$ and let $F^{\rm{can}}(T,\rho)$ be as defined in \eqref{def:canonicalminimization}. Assume $\hat{V}(0)=\nu b$ for some positive parameter $\nu =O(1)$. Then
$$F^{\rm{can}}(0,\rho)=4\pi \rho^2 b + 4\pi \rho^2 b^2 \ln b +\left(\inf_{d\geq 0} C_\nu (d)\right)\rho^2 b^2 +  o \big(\rho^2 b^2 \big)$$
where
\begin{equation}\label{eq:Cvd}
\begin{aligned}
C_\nu (d)&=\left(1-\frac{1}{8\pi}\left(\sqrt{d(d+16\pi)}-d\right)\right)\left(2\nu -16\pi -d\right)+\frac{d^2}{16\pi}+2\pi+ d-4 \pi \left(\ln 8 - 2\Gamma\right) \\ &- \frac{1}{16\pi} d \sqrt{d (d+16 \pi )}-\frac12  \sqrt{d (d+16 \pi )}  +4 \pi  \ln \left(d+\sqrt{d (d+16 \pi )}+8 \pi \right).
\end{aligned}
\end{equation}
\end{theorem}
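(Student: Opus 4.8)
\emph{Overall strategy.} Since the statement concerns $T=0$, the entropy term $-TS(\gamma,\alpha)$ in \eqref{def:canonicalfreeenergyfunctional} plays no role and we are minimising the energy part of $\mathcal{F}^{\mathrm{can}}$ over $\mathcal D$. I would prove the formula by producing matching upper and lower bounds. Two features govern the argument. First, in two dimensions the bare interaction strength $\widehat V(0)=\nu b$ is not the effective one: the kinetic term, the condensate term $\rho_0(2\pi)^{-2}\int\widehat V(p)(\gamma(p)+\alpha(p))\,dp$ and the double integral $\tfrac12(2\pi)^{-4}\iint\widehat V(p-q)\alpha(p)\alpha(q)\,dp\,dq$ must be treated jointly; after a two-dimensional zero-energy scattering resummation they produce an effective coupling of logarithmically small size $\sim b$, and it is this that generates both the leading term $4\pi\rho^2 b$ and, via the running of that coupling from the density scale $\sqrt{\rho}$ to the healing scale $\sqrt{\rho b}$, the correction $4\pi\rho^2 b^2\ln b$ (which matches \eqref{eq:secondorderconjecture}, since $\ln b=-\ln|\ln(\rho a^2)|$). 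Second, once the renormalisation is carried out, the remainder is, up to errors $o(\rho^2 b^2)$, a Bogoliubov functional whose minimisation is explicit and collapses to a one-parameter optimisation, which is where $C_\nu(d)$ comes from.

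\emph{Reduction to a simplified functional.} For the lower bound I would first obtain a priori bounds on any near-minimiser by testing $\mathcal{F}^{\mathrm{can}}$ against a simple trial state: this forces $\rho_0=\rho(1+o(1))$, $\rho_\gamma=o(\rho)$, $(2\pi)^{-2}\int p^2\gamma\,dp=O(\rho^2 b)$, and gives uniform control of the low-momentum profiles and of the high-momentum tails of $\gamma$ and $\alpha$. Using $\widehat V\ge 0$ one then drops the non-negative $\gamma\gamma$ double integral, and by smoothness of $\widehat V$ together with the a priori bounds one replaces $\widehat V(p-q)$ by $\widehat V(0)$ in the remaining terms at the cost of $o(\rho^2 b^2)$. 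The decisive step is to handle the three terms listed above together by completing the square around the solution of the (cut-off) two-dimensional scattering equation; this removes the logarithmic growth carried by the high-momentum part of $\alpha$, replaces $\widehat V$ by the renormalised coupling $\widehat g\sim b$, and leaves a \emph{simplified functional} in which the quadratic-in-$\alpha$ contribution now appears through convergent expressions. The same renormalised coupling, inserted into a trial state of Bogoliubov form below a momentum cut-off $\Lambda\sim\sqrt{\rho b}$ and matched to the scattering solution above it, furnishes the matching upper bound.

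\emph{Explicit minimisation.} The simplified functional is minimised through its Euler--Lagrange equations, whose solution is of Bogoliubov form, $\gamma(p)=\tfrac12\big((p^2+A)/\sqrt{(p^2+A)^2-B^2}-1\big)$, $\alpha(p)=-B/\big(2\sqrt{(p^2+A)^2-B^2}\big)$, with $A,B$ fixed by $\rho_0$ and $\widehat g$. Substituting and evaluating the resulting radial two-dimensional integrals — of the type $\int_{\mathbb R^2}d^2p\,\big(p^2/\sqrt{p^2(p^2+c)}-1\big)$ and $\int_{\mathbb R^2}d^2p/\sqrt{p^2(p^2+c)}$, suitably regularised against the cut-off — produces the constants $\ln 8$, the Euler constant $\Gamma$, and the logarithm appearing in \eqref{eq:Cvd}. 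After rescaling, the energy takes the form $4\pi\rho^2 b+4\pi\rho^2 b^2\ln b+C_\nu(d)\rho^2 b^2+o(\rho^2 b^2)$, where $d\ge0$ is the single rescaled parameter left free in the reduced problem (essentially a rescaled condensate--coupling product). Optimising over $d$ yields $\inf_{d\ge0}C_\nu(d)$, which is attained since $C_\nu(d)\to+\infty$ as $d\to\infty$; matching with the upper bound completes the proof.

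\emph{Main obstacle.} The hard part, in both directions, is carrying out the two-dimensional scattering renormalisation with enough precision that every error is genuinely $o(\rho^2 b^2)$ rather than $o(\rho^2 b)$. Because the correction of interest is only of relative size $b\ln b$, one must simultaneously keep track of the cut-off dependent logarithmic integrals, the subleading terms of the scattering equation, and the shape-dependence of $V$ — which is exactly what makes $C_\nu(d)$ depend on $\nu$ — and show that they reorganise into precisely \eqref{eq:Cvd}. The second delicate ingredient is making the a priori bounds on the minimiser, in particular the uniform control of the high-momentum tails of $\gamma$ and $\alpha$, sharp enough to feed into this computation; with those in hand, what remains is the evaluation of explicit two-dimensional integrals and a one-variable calculus problem for $C_\nu(d)$.
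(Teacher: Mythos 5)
Your proposal is correct and follows essentially the same route as the paper: you complete the square in $\alpha$ around the two-dimensional scattering solution (the paper's $\alpha_0$ and the nonnegative error $E_1$), control the remaining replacement errors ($\widehat V(p-q)\to\widehat V(0)$, $\widehat V\to\widehat{Vw}$) by a priori estimates on the minimizer, minimize the resulting simplified functional explicitly in Bogoliubov form so that after rescaling everything collapses to the one-parameter optimization producing $C_\nu(d)$ and the $4\pi\rho^2b^2\ln b$ term, and use that explicit minimizer as the trial state for the matching upper bound. The only quibble is that the paper neither needs nor proves a priori that $\rho_0=\rho(1+o(1))$ — the relation $\rho_0/\rho=1-C(d)b+o(b)$ is an output of the explicit minimization, with $d$ the rescaled multiplier $\delta/(\rho_0 b)$ enforcing the density constraint rather than a ``condensate--coupling product'' — but this does not affect the soundness of your plan.
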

\begin{remark}
In the limit when $\nu \to 8\pi$ (c.f. Theorem \ref{thm:Crit2D}) the minimization over $d$ can be carried out explicitly. It is easy to see that in that case the infimum is attained for $d=0$ which corresponds to $C_{8\pi}(0)=2\pi(1+4\Gamma+2\ln \pi)$, where $\Gamma$ is the Euler–-Mascheroni constant.
\end{remark}

\section{Known results about the functional}\label{sec:generalresults}
\subsection{General properties of the functional.} The physical information about the system under consideration is encoded in the structure of the minimizers at given $(T,\mu)$ (or $(T,\rho)$). This means that the fundamental question that one needs to ask first is the one about the existence of minimizers of the functional. The positive answer to that question has been given in

\begin{theorem}[Theorems 2.1. - 2.4. in \cite{NapReuSol-15a}]\label{thm:well-posedness}
 Then the grand-canonical  (canonical) minimization problem has a minimizer for any $(T,\mu)$ (or $(T,\rho)$).
\end{theorem}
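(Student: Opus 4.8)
The plan is to apply the direct method of the calculus of variations to both the grand-canonical problem \eqref{gcmin} and the canonical problem \eqref{def:canonicalminimization}, so the argument splits into coercivity, compactness, and lower semicontinuity, using throughout the standing hypotheses that $V$ is positive, radial, smooth and compactly supported with $\hat V\ge0$.

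First I would show that $\cF$ is bounded below and that minimizing sequences enjoy uniform a priori bounds. The decisive inputs are the positivity properties of $V$: the kernel $\hat V(p-q)$ is positive-definite (a consequence of $V\ge0$), so the quadratic forms $\iint\hat V(p-q)\gamma(p)\gamma(q)$ and $\iint\hat V(p-q)\alpha(p)\alpha(q)$ are nonnegative, while $\tfrac12\hat V(0)\rho^2-\mu\rho$ is a coercive quadratic in $\rho$. At large momenta the kinetic term $(2\pi)^{-n}\int p^2\gamma$ dominates the entropy \eqref{entro}, whose density grows only logarithmically in $\gamma$ (since $\beta\ge\tfrac12$ and $s(\beta)\lesssim\ln(1+\gamma)$ by \eqref{def:beta}). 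This yields boundedness from below, and for a minimizing sequence $(\gamma_j,\alpha_j,\rho_{0,j})$ uniform bounds on $\int p^2\gamma_j$, on $\rho_{\gamma,j}$, on $\rho_{0,j}$, and hence on $\rho_j$; the pointwise constraint $\alpha_j^2\le\gamma_j(1+\gamma_j)$ then controls $\alpha_j$ as well.

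Next I would extract a limit and check lower semicontinuity. The bounded nonnegative numbers $\rho_{0,j}$ converge along a subsequence, and the weighted $L^1$ bound on $\gamma_j$ lets me pass (after a further subsequence) to a.e.\ limits $\gamma,\alpha$, the tail in $|p|$ being controlled by the $p^2$ weight. Since the admissible pointwise region $\{\gamma\ge0,\ \alpha^2\le\gamma(1+\gamma)\}$ is convex---the map $\gamma\mapsto\sqrt{\gamma(1+\gamma)}$ being concave---and closed, the limit again lies in $\cD'$. For the energy I would treat the terms separately: the kinetic energy is lower semicontinuous by Fatou; the entropy $S$ is concave in $(\gamma,\alpha)$, so $-TS$ is convex and hence weakly lower semicontinuous for $T\ge0$, as are the nonnegative quadratic forms and $\tfrac12\hat V(0)\rho^2$; the remaining non-sign-definite terms $-\mu\rho$ and the bilinear condensate coupling $\rho_0(2\pi)^{-n}\int\hat V(\gamma+\alpha)$ pass to the limit by continuity, because $\hat V$ is a fixed Schwartz function and $\rho_{0,j}\to\rho_0$. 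Assembling these gives $\cF(\gamma,\alpha,\rho_0)\le\liminf_j\cF(\gamma_j,\alpha_j,\rho_{0,j})$, so the limit is a minimizer.

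I expect the main obstacle to be the control of possible loss of mass, which is precisely what makes the continuity of the last terms delicate. Mass in $\gamma_j$ cannot escape to $|p|\to\infty$ by the kinetic bound, but it could concentrate near $p=0$ and, in the limit, migrate into the condensate density $\rho_0$; one must verify that $\rho_{\gamma,j}\to\rho_\gamma$ with no such leakage, equivalently that the split between the thermal cloud and the condensate is stable. For the canonical statement this is sharpened by the requirement that the constraint $\int\gamma=\lambda$ (equivalently $\rho_0=\rho-\rho_\gamma\ge0$) survive the limit. Here I would proceed in two stages: first solve the inner problem $f(\lambda,\rho_0)$ in \eqref{funcf} for fixed $\lambda$ and $\rho_0$ by the same direct method, and then minimize the resulting function $\rho_0\mapsto f(\rho-\rho_0,\rho_0)$, which I expect to be continuous, over the compact interval $0\le\rho_0\le\rho$; this reduces the delicate mass-transfer question to a one-parameter minimization with a guaranteed minimizer.
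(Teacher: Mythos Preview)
This paper does not prove Theorem~\ref{thm:well-posedness}; it is quoted without proof as a summary of Theorems~2.1--2.4 of \cite{NapReuSol-15a}, and Section~\ref{sec:generalresults} simply records it among the ``Known results about the functional.'' There is therefore no proof in the present paper to compare your proposal against.

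That said, your outline follows the direct method of the calculus of variations, which is indeed the strategy of \cite{NapReuSol-15a}, and you have correctly identified the central difficulty: possible concentration of $\gamma_j$ near $p=0$ and the resulting transfer of mass between $\rho_\gamma$ and $\rho_0$. One point to be careful about is your claim that the bilinear term $\rho_0(2\pi)^{-n}\int\hat V(\gamma+\alpha)$ ``passes to the limit by continuity''; the integral $\int\hat V\alpha$ is not obviously continuous under the convergence you have (you only control $\alpha$ through $\alpha^2\le\gamma(1+\gamma)$, and $\alpha$ need not be in $L^1$ uniformly), so this step requires more work than you indicate. In \cite{NapReuSol-15a} the existence proof in fact splits into separate arguments for $T>0$ and $T=0$, with different hypotheses on $V$ in each case, and the concentration issue is handled by a more detailed analysis than the two-stage reduction you sketch at the end.
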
 

Let us stress that for simplicity of the presentation, the assumptions on $V$ made in the previous Section are stronger than those in the original results. This also allows to treat the $T=0$ and $T>0$ cases together, while, in fact, the assumptions needed are not the same in those cases.

Knowing that minimizers of the functional exist, one can ask the question about their structure. It turns out that if the model exhibits a phase transition, then it does not distinguish between BEC and superfluidity as follows from

\begin{theorem}[Theorem 2.5. in \cite{NapReuSol-15a}]
\label{thm:BECvsSF}
  Let
  $(\gamma,\alpha,\rho_0)$ be a minimizing triple for
 either \eqref{def:grandcanfreeenergyfunctional} or \eqref{def:canonicalfreeenergyfunctional}. Then \bq \rho_0 =0
  \Longleftrightarrow \alpha \equiv 0. \nn \eq
\end{theorem}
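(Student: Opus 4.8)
The plan is to prove the two implications separately, using that $\alpha$ enters $\mathcal{F}$ only through three terms: the entropy $-TS(\gamma,\alpha)$, the pairing quadratic form $\tfrac12(2\pi)^{-2n}\iint\widehat V(p-q)\alpha(p)\alpha(q)\,dp\,dq$, and the linear condensate coupling $\rho_{0}(2\pi)^{-n}\int\widehat V(p)\alpha(p)\,dp$. Two preliminary facts I would record first. Since $V\ge 0$, for any real $g$ one has $\iint\widehat V(p-q)g(p)g(q)\,dp\,dq=\int V(x)\,|\widehat g(x)|^{2}\,dx\ge 0$, so the pairing form and the analogous $\gamma$--$\gamma$ form are nonnegative. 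And from $s'(\beta)=\ln\frac{\beta+1/2}{\beta-1/2}>0$ the function $s$ is strictly increasing on $[\tfrac12,\infty)$; since $\beta(p)=\sqrt{(\tfrac12+\gamma(p))^{2}-\alpha(p)^{2}}\le\tfrac12+\gamma(p)$ with equality iff $\alpha(p)=0$, for fixed $\gamma$ the entropy $S(\gamma,\alpha)$ is maximized precisely at $\alpha\equiv 0$, and strictly so (when $T>0$) wherever $\alpha\not\equiv 0$.

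\emph{Direction $\rho_{0}=0\Rightarrow\alpha\equiv 0$.} Given a minimizer $(\gamma,\alpha,0)$, I would compare it with the admissible triple $(\gamma,0,0)$, which lies in $\mathcal{D}$ and carries the same density in the canonical case. With $\rho_{0}=0$ the linear coupling drops out, so
\[
\mathcal{F}(\gamma,\alpha,0)-\mathcal{F}(\gamma,0,0)=-T\bigl(S(\gamma,\alpha)-S(\gamma,0)\bigr)+\tfrac12(2\pi)^{-2n}\iint\widehat V(p-q)\alpha(p)\alpha(q)\,dp\,dq\ \ge\ 0
\]
by the two facts above. Minimality forces equality, hence both summands vanish. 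For $T>0$, vanishing of the entropy difference together with the strict monotonicity of $s$ gives $\beta(p)=\tfrac12+\gamma(p)$ a.e., i.e. $\alpha\equiv 0$. For $T=0$ the entropy term is absent and equality only says that the pairing quadratic form vanishes at $\alpha$; deducing $\alpha\equiv 0$ from this relies on further properties of $V$, and this is the point at which the hypotheses needed for the $T=0$ and $T>0$ cases genuinely differ.

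\emph{Direction $\alpha\equiv 0\Rightarrow\rho_{0}=0$.} I would argue by contradiction: suppose $(\gamma,0,\rho_{0})$ is a minimizer with $\rho_{0}>0$, and exhibit a nearby admissible triple of strictly smaller free energy by switching on a small negative $\alpha$, so that the linear coupling $\rho_{0}(2\pi)^{-n}\int\widehat V(p)\alpha(p)\,dp$, which has no lower-order competitor in the remaining terms, produces a gain. If $\gamma\not\equiv 0$, choose a compact set $K$ with $|K|>0$ on which $\gamma\ge c>0$ and set $\alpha_{\epsilon}=-\epsilon\chi_{K}$; then $\alpha_{\epsilon}^{2}\le\gamma(1+\gamma)$ for small $\epsilon$, the entropy and pairing contributions are $O(\epsilon^{2})$ (note $\beta$ stays bounded away from $\tfrac12$ on $K$, avoiding the singularity of $s'$), while the linear coupling contributes $-\epsilon\,\rho_{0}(2\pi)^{-n}\int_{K}\widehat V(p)\,dp<0$; equivalently, the first variation of $\mathcal{F}(\gamma,\cdot,\rho_{0})$ at $\alpha=0$ in the admissible direction $-\chi_{K}$ is $-\rho_{0}(2\pi)^{-n}\int_{K}\widehat V(p)\,dp<0$, so $\alpha=0$ is not even a critical point. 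If instead $\gamma\equiv 0$ (so $\rho_{\gamma}=0$ and $\rho_{0}=\rho>0$), one cannot move $\alpha$ without also moving $\gamma$: I would take $\gamma_{\epsilon}=\epsilon\psi$, $\alpha_{\epsilon}=-\sqrt{\epsilon\psi}$ for a fixed nonnegative compactly supported bump $\psi$, using the arithmetic cancellation $\beta_{\epsilon}(p)^{2}=\tfrac14+\epsilon^{2}\psi(p)^{2}$, so that every term changes by $O(\epsilon)$ (the entropy only by $O(\epsilon^{2}|\ln\epsilon|)$ since $\beta_{\epsilon}-\tfrac12=O(\epsilon^{2})$), whereas the linear coupling contributes $-\sqrt{\epsilon}\,\rho_{0}(2\pi)^{-n}\int\widehat V(p)\sqrt{\psi(p)}\,dp<0$; again $\mathcal{F}$ strictly decreases, contradicting minimality.

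The routine but delicate point throughout is controlling the entropy variation, since $s$ has infinite slope at $\beta=\tfrac12$: keeping the perturbation supported inside $\{\gamma\ge c\}$ sidesteps this in the first subcase, and the special choice $\alpha_{\epsilon}^{2}=\epsilon\psi$ in the second makes $\beta_{\epsilon}-\tfrac12=O(\epsilon^{2})$ so the singular factor $-\delta\ln\delta$ costs only a harmless $O(\epsilon^{2}|\ln\epsilon|)$. I expect the genuine obstacle to be the $T=0$ half of the first implication, where the entropy provides no leverage and one must extract $\alpha\equiv 0$ from the structure of $\widehat V$ alone.
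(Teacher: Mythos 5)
Your two-directional strategy is the same one used in the cited source of this result, \cite{NapReuSol-15a} (the present paper only imports the theorem and contains no proof of it): for $\rho_0=0$ compare with $(\gamma,0,\rho_0)$, using that the entropy is maximized at $\alpha\equiv0$ for fixed $\gamma$ and that $\iint\widehat V(p-q)\alpha(p)\alpha(q)\,dp\,dq=\int V(x)|\check\alpha(x)|^2dx\ge0$; for $\alpha\equiv0$, $\rho_0>0$, gain at first order from the linear coupling $\rho_0(2\pi)^{-n}\int\widehat V\alpha$ by switching on a small negative $\alpha$, treating the degenerate case $\gamma\equiv0$ with a coupled perturbation $\gamma_\epsilon=\epsilon\psi$, $\alpha_\epsilon=-\sqrt{\epsilon\psi}$ so that $\beta_\epsilon-\tfrac12=O(\epsilon^2)$ tames the logarithmic singularity of $s'$. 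That second direction, including the entropy bookkeeping and the observation that the perturbation must be localized in $\{\gamma\ge c\}$ when $\gamma\not\equiv0$, is correct as written, and the $T>0$ half of the first direction is also complete (pointwise $s(\beta)\le s(\tfrac12+\gamma)$ plus equality of the integrals gives $\alpha=0$ a.e.).

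The genuine gap is the $T=0$ half of the first direction, which you flag but misdiagnose: no ``further property of $V$'' can close it along your route. For a compactly supported $V$ the quadratic form $\int V|\check\alpha|^2$ has an infinite-dimensional kernel --- any real, even $\alpha$ whose inverse Fourier transform is supported away from $\mathrm{supp}\,V$ --- and such $\alpha$ can be admissible (satisfy $\alpha^2\le\gamma(1+\gamma)$) whenever $\gamma$ has full support with, say, polynomial decay; positivity of $\widehat V$ does not make the form strictly positive definite. So at $T=0$ one simply cannot extract $\alpha\equiv0$ from minimality at fixed $(\gamma,\rho_0=0)$: if a $T=0$ minimizer with $\rho_0=0$ and nontrivial $\gamma$ existed, the equivalence could genuinely fail for it. The correct resolution is to show that this case does not occur: at $T=0$ and $\mu\le0$ (resp.\ $\rho=0$) the minimizer is the vacuum, where $\gamma\equiv0$ and the constraint $\alpha^2\le\gamma(1+\gamma)$ forces $\alpha\equiv0$; and for $\mu>0$ (resp.\ $\rho>0$) one has $\rho_0>0$ at $T=0$ by the phase-transition statements (Theorems \ref{thm:phasetrangrandcan} and \ref{thm:phasetrancan}, which include $T=0$), so the implication is vacuous there. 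In other words, the $T=0$ case is handled by structural information about zero-temperature minimizers, not by strict positivity of the pairing form; your proof as it stands establishes the theorem only for $T>0$ plus the backward direction at $T=0$.
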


Thus, there can only be one kind of phase transition, and the next results show that it indeed exists:

\begin{theorem}[Theorem 2.6. \cite{NapReuSol-15a}]\label{thm:phasetrangrandcan}
  Given $\mu>0$.  Then there exist temperatures $0<T_1<T_2$ such that
  a minimizing triple $(\gamma,\alpha,\rho_0)$ of
  \eqref{gcmin} satisfies
\begin{enumerate}
\item $\rho_0=0$ for $T\geq T_2$;
\item $\rho_0>0$ for $0\leq T \leq T_1$.
\end{enumerate}
\end{theorem}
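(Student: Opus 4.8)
\emph{Strategy.} The plan is to compare the full grand-canonical infimum $F(T,\mu)$ with its restriction to the condensate-free sector. By Theorem~\ref{thm:BECvsSF}, imposing $\rho_0=0$ is equivalent to $\alpha\equiv 0$, so I set $F_{\mathrm{nc}}(T,\mu)=\inf\{\mathcal F(\gamma,0,0)\}$. A minimizer, which exists by Theorem~\ref{thm:well-posedness}, must have $\rho_0>0$ whenever $F(T,\mu)<F_{\mathrm{nc}}(T,\mu)$, since every condensate-free competitor has energy at least $F_{\mathrm{nc}}$; conversely I will rule out condensation at high $T$ by a direct energy contradiction. The two temperatures then emerge from analysing this comparison as $T\to 0$ and $T\to\infty$, using that both $F(\cdot,\mu)$ and $F_{\mathrm{nc}}(\cdot,\mu)$ are concave in $T$ (being infima of the affine maps $T\mapsto E-TS$) and hence continuous.

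\emph{Low temperature.} I would first settle $T=0$, where the entropy drops out and one minimizes pure energy. Testing with a pure condensate $\gamma=\alpha=0$, $\rho_0=\mu/\widehat V(0)$ gives $F(0,\mu)\le -\mu^2/(2\widehat V(0))<0$, using $\mu>0$. On the condensate-free side, completing the square in $-\mu\rho_\gamma+\tfrac12\widehat V(0)\rho_\gamma^2$ and discarding the nonnegative kinetic term yields $\mathcal F(\gamma,0,0)\ge -\mu^2/(2\widehat V(0))+\tfrac12(2\pi)^{-2n}\iint\widehat V(p-q)\gamma(p)\gamma(q)\,dp\,dq$. The decisive point is that the last quadratic form equals $\tfrac12\int_{\R^n} V(x)|\check\gamma(x)|^2\,dx$ with $\check\gamma(x)=(2\pi)^{-n}\int\gamma(p)e^{ipx}dp$, $\check\gamma(0)=\rho_\gamma$; since $V>0$ and any near-optimal $\gamma$ has $\rho_\gamma\to\mu/\widehat V(0)>0$, this term is bounded strictly away from zero along minimizing sequences. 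Hence $F_{\mathrm{nc}}(0,\mu)>-\mu^2/(2\widehat V(0))\ge F(0,\mu)$, so every $T=0$ minimizer has $\rho_0>0$. As the strict gap $F_{\mathrm{nc}}(0,\mu)-F(0,\mu)>0$ persists by continuity on an interval $[0,T_1]$, we get $\rho_0>0$ there.

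\emph{High temperature.} Here I argue by contradiction: suppose a minimizer has $\rho_0>0$. Stationarity in $\rho_0$ gives $\widehat V(0)\rho=\mu-(2\pi)^{-n}\int\widehat V(p)(\gamma(p)+\alpha(p))\,dp$, and the pointwise bound $\gamma+\alpha\ge\gamma-\sqrt{\gamma(1+\gamma)}\ge-\tfrac12$ forces a $T$-independent density cap $\rho\le R:=(\mu+\tfrac12 V(0))/\widehat V(0)$. Dropping the nonnegative interaction terms, the minimizer's free energy is then bounded below by the density-capped ideal value, which behaves like $-\tfrac n2 RT\ln T$ up to $O(1)$; that is, $F(T,\mu)\ge -\tfrac n2 RT\ln T-O(1)$. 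On the other hand $F(T,\mu)\le F_{\mathrm{nc}}(T,\mu)$, and testing $F_{\mathrm{nc}}$ with a near-Maxwell--Boltzmann profile of density $2R$ gives $F_{\mathrm{nc}}(T,\mu)\le -nRT\ln T+O(1)$, the positive interaction contributions $\tfrac12\widehat V(0)\rho_\gamma^2$ staying $O(1)$ and hence negligible against the entropy gain. For $T$ large these two bounds are incompatible, so the minimizer has $\rho_0=0$; this defines $T_2$. That $T_1<T_2$ is then forced, since the low- and high-temperature conclusions are mutually exclusive at any fixed temperature.

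\emph{Main obstacle.} The crux is the high-temperature estimate: converting the heuristic $-\tfrac n2 RT\ln T$ versus $-nRT\ln T$ comparison into rigorous bounds uniform in $T$. This needs (i) a clean lower bound on $[\text{kinetic}-TS]$ over all states with $\rho_\gamma\le R$, most transparently via a Gibbs/Legendre argument introducing a negative effective chemical potential to enforce the cap, and (ii) a matching trial upper bound for $F_{\mathrm{nc}}$ in which the interaction terms are genuinely controlled to be $O(1)$. The low-temperature part is comparatively soft; its only delicate point is the uniform lower bound on $\int V|\check\gamma|^2$ along minimizing sequences, which follows from the equicontinuity estimate $|\check\gamma(x)-\rho_\gamma|\lesssim |x|\,(\int\gamma)^{1/2}(\int p^2\gamma)^{1/2}$ once the kinetic energy is controlled.
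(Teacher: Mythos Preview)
This theorem is not proved in the present paper at all: it is quoted verbatim as Theorem~2.6 of \cite{NapReuSol-15a} in the survey Section~\ref{sec:generalresults} (``Known results about the functional''), and no argument for it is given here. There is therefore no ``paper's own proof'' to compare your proposal against; any comparison would have to be with the original article \cite{NapReuSol-15a}.

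On the substance of your sketch: the strategy is sound and would work. A couple of small corrections. In the high-temperature part, the error terms are not $O(1)$ but $O(T)$, since $F_0(T,R)=-\tfrac{n}{2}RT\ln T+O(T)$; the contradiction still goes through because $\tfrac{n}{2}RT\ln T\gg T$. Also, when you ``drop the nonnegative interaction terms'' you must keep track of $\rho_0(2\pi)^{-n}\!\int\widehat V(\gamma+\alpha)$, which can be negative; your pointwise bound $\gamma+\alpha\ge-\tfrac12$ bounds it below by a $T$-independent constant, so the argument survives. Finally, the continuity-at-$T=0$ step deserves one more line: concavity of $T\mapsto F(T,\mu)$ gives upper semicontinuity at $0$, and lower semicontinuity follows by testing $F(T,\mu)$ with the $T=0$ minimizer (whose entropy is finite), so the strict gap $F_{\mathrm{nc}}(0,\mu)-F(0,\mu)>0$ indeed persists on a small interval.
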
 

\begin{theorem}[Theorem 2.7. in \cite{NapReuSol-15a}]\label{thm:phasetrancan}
For fixed  $\rho>0$ there exist temperatures $0<T_3<T_4$ such that a minimizing triple
$(\gamma,\alpha,\rho_0)$ of \eqref{def:canonicalminimization} satisfies 
\begin{enumerate}
\item $\rho_0=0$ for $T\geq T_4$;
\item $\rho_0>0$ for $0\leq T \leq T_3$.
\end{enumerate}
\end{theorem}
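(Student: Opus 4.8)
\emph{Proof proposal.} By Theorem \ref{thm:BECvsSF}, at a canonical minimiser $\rho_0=0$ is equivalent to $\alpha\equiv0$, so each assertion says whether, at given $(T,\rho)$, the minimiser lies in the ``diagonal'' subclass ($\alpha\equiv0$) or genuinely uses pairing. I would prove (2) by exhibiting, at $T=0$, a paired trial state that strictly beats the whole diagonal subclass, and extending by continuity in $T$; and (1) by showing that for $T$ large the map $\rho_0\mapsto f(\rho-\rho_0,\rho_0)$ (the function of \eqref{funcf}) is strictly increasing on $[0,\rho]$, so that its minimum is at $\rho_0=0$.

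For (2): at $T=0$ a state with $\rho_0=0$ has $\mathcal{F}^{\rm{can}}(\gamma,\alpha,0)=(2\pi)^{-n}\int p^2\gamma+\tfrac12\widehat V(0)\rho^2+\tfrac12(2\pi)^{-2n}\iint\widehat V(p-q)\bigl(\alpha(p)\alpha(q)+\gamma(p)\gamma(q)\bigr)\,dp\,dq$, and since $V\geq0$ makes both double integrals nonnegative this is $\geq\tfrac12\widehat V(0)\rho^2$. For the other side I would use the trial $\bigl(tg,\,-\sqrt{tg(tg+1)},\,\rho-t(2\pi)^{-n}\!\int g\bigr)$ with $g\geq0$ a fixed bump and $t>0$ small: its $T=0$ energy is $\tfrac12\widehat V(0)\rho^2+\rho_0(2\pi)^{-n}\int\widehat V(p)\bigl(tg(p)-\sqrt{tg(p)(tg(p)+1)}\bigr)\,dp+O(t)=\tfrac12\widehat V(0)\rho^2-c\sqrt t+O(t)$, with $c>0$ because $\widehat V>0$ and the pairing integral is of order $\sqrt t$ and strictly negative; this is $<\tfrac12\widehat V(0)\rho^2$ for $t$ small. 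Hence $F^{\rm{can}}(0,\rho)<\tfrac12\widehat V(0)\rho^2\leq\inf_{\rho_0=0}\mathcal{F}^{\rm{can}}$, so the $T=0$ minimiser has $\rho_0>0$. The trial has finite entropy, so $F^{\rm{can}}(T,\rho)$ and $f(\rho,0)$ (the diagonal infimum, as a function of $T$) are continuous in $T$ at $0$, and the strict inequality, hence $\rho_0>0$, survives for $T\in[0,T_3]$.

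For (1): moving density between the thermal cloud and the condensate at fixed $\rho$ gives, by the envelope theorem applied to $f$, a derivative of the form $\frac{d}{d\rho_0}f(\rho-\rho_0,\rho_0)=-\mu_{\rm eff}+(2\pi)^{-n}\int\widehat V(p)\bigl(\gamma(p)+\alpha(p)\bigr)\,dp+\widehat V(0)\rho$, where $\mu_{\rm eff}$ is the Lagrange multiplier enforcing the constraint on $(2\pi)^{-n}\int\gamma$, i.e.\ the effective chemical potential of the thermal component. Since $\gamma+\alpha\in[-\tfrac12,\gamma]$, $\widehat V\in L^1$, and $(2\pi)^{-n}\int\gamma\leq\rho$, the interaction terms are bounded by a $T$-independent constant; on the other hand, at fixed density and $T$ large the effective dispersion differs from $p^2$ by a bounded amount, so $\mu_{\rm eff}$ tracks the ideal-gas chemical potential and $-\mu_{\rm eff}\gtrsim\tfrac n2 T\ln T$, uniformly over the splittings $\rho=\rho_0+\rho_\gamma$ (smaller $\rho_\gamma$ only makes $\mu_{\rm eff}$ more negative). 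Thus $\frac{d}{d\rho_0}f(\rho-\rho_0,\rho_0)>0$ on all of $[0,\rho]$ once $T$ is large, so $f(\rho-\rho_0,\rho_0)$ is strictly increasing and the minimiser has $\rho_0=0$ for $T\geq T_4$; the two thresholds so produced give the theorem.

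I expect the main obstacle to be the uniform control of $\mu_{\rm eff}$ required in (1): one must show that the Hartree and pairing corrections to the effective single-particle dispersion are bounded independently of $T$ and of the splitting, so that $\mu_{\rm eff}$ genuinely follows $\mu_0^{\rm ideal}(T,\rho_\gamma)\to-\infty$ for every $\rho_0\in[0,\rho]$. Should one wish to avoid the envelope theorem, the same conclusion can be reached by directly comparing $\mathcal{F}^{\rm{can}}$ at $\rho_0=\delta$ with the state obtained by melting the condensate into a Gibbs cloud at temperature $T$, which lowers $-TS$ by order $\delta T\ln T$ while costing only $O(\delta T)$ in kinetic energy and $O(1)$ in interaction energy. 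The $T=0$ computation and continuity arguments in (2) are comparatively routine.
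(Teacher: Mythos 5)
First, a point of comparison: the present paper does not prove this statement at all — it is imported as Theorem 2.7 of \cite{NapReuSol-15a} — so your proposal can only be measured against that reference, not against anything in this text. Your low-temperature half (2) is essentially sound and in the spirit of the cited proof: the bound $\mathcal{F}^{\rm can}(\gamma,\alpha,0)\geq \mathcal{F}_0(\gamma)+\tfrac12\widehat V(0)\rho^2$ for the $\rho_0=0$ class (positivity of the two quadratic terms, as in \eqref{eq:lowerbound}), the paired trial state with $\alpha=-\sqrt{tg(tg+1)}$ producing a gain $-c\sqrt t$ through the term $\rho_0(2\pi)^{-n}\int\widehat V(\gamma+\alpha)$, and the extension to $T\in(0,T_3]$ using $F_0(T,\rho)\to0$ as $T\to0$ (you should state that lower-bound form of the continuity explicitly, since the infimum over the $\rho_0=0$ class is only automatically upper semicontinuous in $T$).

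The genuine gap is in (1). The envelope-theorem version presupposes that $\rho_0\mapsto f(\rho-\rho_0,\rho_0)$ is differentiable, that a Lagrange multiplier $\mu_{\rm eff}$ exists, and that it tracks the ideal-gas chemical potential uniformly over the (possibly non-unique) inner minimizers; none of this is justified, so it is a heuristic, not a proof. The fallback ``melting'' comparison is quantitatively off at the decisive point: you claim an entropy gain of order $\delta T\ln T$ against an interaction cost of $O(1)$. With a $\delta$-independent $O(1)$ cost the comparison only rules out $\rho_0\gtrsim 1/(T\ln T)$, i.e.\ it shows $\rho_0\to0$ as $T\to\infty$, not $\rho_0=0$ for all $T\geq T_4$, which is what the theorem asserts (small but nonzero condensates are exactly the hard case). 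The cost can in fact be made $O(\delta)$: keep the minimizer's $\gamma$ and $\alpha$ and only add mass $\delta$ to $\gamma$ (admissible since $\alpha^2\leq\gamma(\gamma+1)$ is preserved when $\gamma$ increases); then the dropped term $\rho_0(2\pi)^{-n}\int\widehat V(\gamma+\alpha)$ changes by at most $C\delta$ because $\gamma+\alpha\geq-\tfrac12$ and $\int\widehat V<\infty$, and the $\gamma\gamma$ term changes by at most $\widehat V(0)(2\rho\delta+\delta^2)$. What then remains — and is the actual content of the high-temperature statement — is a lower bound of order $\delta T\ln T$ on the kinetic-plus-entropy gain that is uniform over all admissible $\gamma$ with $\int\gamma\leq(2\pi)^n\rho$, not just over ideal Gibbs functions; this requires choosing where in momentum space the extra mass is deposited (e.g.\ on a set of volume of order $T^{n/2}$ inside $\{|p|^2\leq T\}$, where necessarily $\gamma\lesssim \rho T^{-n/2}$ so that $\partial_\gamma s\gtrsim\ln T$). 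Without such a uniform estimate, neither of your two versions produces a single threshold $T_4$ valid for every minimizer, and that estimate is precisely the nontrivial step your sketch leaves out.
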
 

Knowing there exists a phase transition, one can ask for example what is the critical temperature. While there is no answer to that question in full generality, a relevant analysis can be made in the dilute limit. 

\subsection{Known results in the dilute limit in three dimensions.}
Since the diluteness conditions involves the scattering length $a$, let us recall its definition. In three dimensions it is given by    
\bq \label{def:scattlength3D}
4\pi a:=\int \Delta \varphi=\frac12 \int V\varphi, 
\eq
where $\varphi$ satisfies
\[
-\Delta \varphi+\frac12 V\varphi=0
\]
in the sense of distributions with $\varphi(x)\to1$ as $|x|\to\infty$. The quantity $8\pi a$ is often replaced by $\int V=\widehat{V}(0)$, which is its first-order Born approximation. In fact, $\widehat{V}(0)> 8\pi a$ (see \cite[Appendix C]{LieSeiSolYng-05} for more details). Below this discrepancy will be  quantified by the parameter $\nu=\widehat{V}(0)/a$, so that $\nu>8\pi$. The limit $\nu\to8\pi$, that is, a sequence of potentials such that $\widehat{V}(0)$ tends to $8\pi a$, is of special interest (cf. comment after Theorem \ref{thm:canfreeenexp}).\\

For the following three results in three dimensions, one has to assume that the gas is dilute
\begin{eqnarray}\label{dillim}
\rho^{1/3}a\ll 1,
\end{eqnarray}
and that there exists constant $C$ such that
\begin{equation}
\label{assumptionsV}
\int\widehat{V}\leq Ca^{-2}\hspace{1cm}\text{and}\hspace{1cm}\|\partial^n\widehat{V}\|_{\infty}\leq C a^{n+1}\text{\ for\ } 0\leq n\leq3,
\end{equation}
where $\partial^n$ is shorthand for all $n$-th order partial derivatives. The following theorems, proven in \cite{NapReuSol-15b}, contain information about the critical temperature of the phase transition in the dilute limit. 

\begin{theorem}[Theorem 8 in \cite{NapReuSol-15b}]
\label{thm:cancrittemp}
Let $(\gamma,\alpha,\rho_0)$ be a minimizing triple of \eqref{def:canonicalminimization} at temperature $T$ and density $\rho$. There is a monotone increasing function $h_1:(8\pi,\infty)\to\mathbb{R}$ with $h_1(\nu)\geq\lim_{\nu\to8\pi}h_1(\nu)=1.49$ such that
\begin{enumerate}
\item $\rho_0\neq 0$ if $T<T_{\rm fc}\left(1+h_1(\nu)\rho^{1/3}a+o(\rho^{1/3}a)\right)$
\item $\rho_0= 0$ if $T>T_{\rm fc}\left(1+h_1(\nu)\rho^{1/3}a+o(\rho^{1/3}a)\right)$,
\end{enumerate}
where $T_{\rm{fc}}=c_0\rho^{2/3}$ with $c_0=4\pi\xi(3/2)^{-2/3}$ is the critical temperature of the free Bose gas.
\end{theorem}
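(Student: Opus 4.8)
The plan is to pin down the transition by using the dichotomy of Theorem \ref{thm:BECvsSF}: since $\rho_0=0\Leftrightarrow\alpha\equiv0$, the phase boundary is exactly the locus where the condensate switches on, and the whole question becomes whether the normal-phase configuration ($\rho_0=0$, $\alpha\equiv0$) is still a global minimizer at given $(T,\rho)$. First I would analyze the normal phase. With $\alpha\equiv0$ the entropy collapses to the ideal-gas expression, and the Euler--Lagrange equation for $f(\lambda,0)$ forces the minimizing $\gamma$ to be a generalized Bose--Einstein distribution $\gamma(p)=\big(e^{(\cE(p)-\mu)/T}-1\big)^{-1}$ with effective dispersion $\cE(p)=p^2+(2\pi)^{-3}\int\widehat V(p-q)\gamma(q)\,dq$, where $\mu$ is the Lagrange multiplier enforcing the density constraint. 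Onset of condensation is then the marginal-stability condition at the boundary of the outer minimization $\inf_{0\le\rho_0\le\rho}f(\rho-\rho_0,\rho_0)$: the critical temperature $T_c$ is the value at which $\partial_{\rho_0}f(\rho-\rho_0,\rho_0)\big|_{\rho_0=0^+}=0$, equivalently the value at which $\mu$ reaches the minimum of the effective band $\cE$ and the lowest-lying occupation is about to diverge.

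The next step is to turn this marginal condition into a sharp asymptotic identity in the dilute regime $\rho^{1/3}a\ll1$. Writing $T_c=T_{\rm fc}(1+\delta)$ and subtracting the free identity $\rho=\rho_{\rm fc}(T_{\rm fc})$, the shift $\delta$ is controlled by the difference of the interacting and ideal density integrals at criticality. Here a key simplification occurs: the thermal momenta $|p|\sim\sqrt{T}\sim\rho^{1/3}$ lie far below the inverse range $a^{-1}$ of the potential, so $\widehat V(p)\approx\widehat V(0)$ over the relevant range and the naive Hartree--Fock term merely shifts $\mu$ by a constant, contributing to the momentum-dependence of $\cE$ only at order $\rho^{2/3}a^2$. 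Consequently the leading $O(\rho^{1/3}a)$ effect is \emph{not} mean-field: it comes from the infrared region near the band bottom, where as $\rho_0\to0^+$ the excitation spectrum acquires Bogoliubov (phonon) character through the coupling $\rho_0(2\pi)^{-3}\int\widehat V(p)(\gamma(p)+\alpha(p))\,dp$ and the pairing field $\alpha$ resums the bare coupling $\widehat V(0)$ toward the scattering-length value $8\pi a$. Expanding $\partial_{\rho_0}f|_{0^+}$ in this regime and inserting $\nu=\widehat V(0)/a$ produces $\delta=h_1(\nu)\rho^{1/3}a+o(\rho^{1/3}a)$, with $h_1$ given by an explicit infrared integral; differentiating that integral in $\nu$ yields monotonicity, and evaluating it in the Born limit $\nu\to8\pi$, where $\widehat V(0)$ coincides with $8\pi a$, gives the numerical constant $1.49$.

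The two one-sided statements then follow from a monotonicity-in-temperature argument applied to the sign of $\partial_{\rho_0}f(\rho-\rho_0,\rho_0)\big|_{\rho_0=0^+}$. Because $\rho_{\rm fc}(T)$ is strictly increasing, for $T$ below the threshold $T_{\rm fc}(1+h_1(\nu)\rho^{1/3}a+o(\rho^{1/3}a))$ the normal-phase equation cannot carry the full density with $\mu$ below the band minimum, so the derivative is negative and the minimizer must move into $\rho_0\neq0$; for $T$ above it the derivative is positive and $\rho_0=0$ is the minimizer. Matching both one-sided bounds to the same threshold closes the argument.

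The hard part will be the infrared analysis underlying $h_1(\nu)$. Near criticality $\gamma(p)$ is large at small momenta, the marginal Bogoliubov gap closes as $\rho_0\to0^+$, and one must correctly trade the bare coupling $\widehat V(0)$ for the scattering length $8\pi a$ while keeping uniform control of the subleading remainder as $\rho^{1/3}a\to0$. Producing the sharp constant $1.49$ rather than a mean-field artefact, and establishing that $h_1$ is monotone with exactly this limiting value, both hinge on resolving this infrared/scattering-length renormalization precisely; this is where the genuine difficulty of the theorem lies.
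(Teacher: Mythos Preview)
This paper does not contain a proof of Theorem~\ref{thm:cancrittemp}; the result is quoted from \cite{NapReuSol-15b} as background, and Section~\ref{sec:simplifiedfunctional} only sketches the machinery (the simplified functional $\cF^{\rm sim}$, the trial pairing $\alpha_0$ built from the scattering solution, explicit minimization in $(\gamma,\alpha)$ followed by minimization in $\rho_0$) that the cited paper develops in three dimensions. So there is no ``paper's own proof'' to match against here; the relevant comparison is with the strategy of \cite{NapReuSol-15b}, which this paper summarizes.

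Relative to that strategy, your proposal takes a genuinely different route and has a structural gap. You locate $T_c$ by the marginal--stability condition $\partial_{\rho_0} f(\rho-\rho_0,\rho_0)\big|_{\rho_0=0^+}=0$, i.e.\ by the onset of instability of the normal phase. This identifies $T_c$ only if the transition is second order. Nothing in the functional guarantees that: the outer minimization $\inf_{0\le\rho_0\le\rho} f(\rho-\rho_0,\rho_0)$ can perfectly well jump to an interior minimum while the boundary point $\rho_0=0$ is still a local minimum, and then your derivative test yields a spinodal temperature, not $T_c$. The approach actually used (and outlined here in Section~\ref{sec:simplifiedfunctional}) avoids this by computing and comparing the \emph{free energies} of the two phases: one expands $\cF^{\rm sim}$ at $\rho_0=0$ and at the optimal $\rho_0>0$ separately and finds the crossing. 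The scattering--length renormalization you allude to is not obtained by an infrared analysis of $\partial_{\rho_0}f$ at $0^+$, but enters through the explicit trial $\alpha_0$ in \eqref{alpha0}, which converts $\widehat V$ into $\widehat{Vw}$ in the condensed--phase energy (cf.\ \eqref{eq:Vconvalpha0}); this is what replaces $\widehat V(0)$ by $8\pi a$ and produces the $\nu$--dependence of $h_1$. Your sketch does not supply a mechanism of comparable precision: taking $\rho_0\to0^+$ kills the Bogoliubov gap, so the ``phonon character'' you invoke is vanishing in exactly the limit you are studying, and it is unclear how a linearization at $\rho_0=0^+$ would ever see the scattering resummation that fixes the constant $1.49$.
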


\begin{theorem}[Theorem 9 in \cite{NapReuSol-15b}]
\label{thm:grandcancrittemp}
Let $(\gamma,\alpha,\rho_0)$ be a minimizing triple of \eqref{gcmin} at temperature $T$ and chemical potential $\mu$. There is a function $h_2:(8\pi,\infty)\to\mathbb{R}$ with $\lim_{\nu\to8\pi}h_2(\nu)=0.44$ such that
\begin{enumerate}
\item $\rho_0\neq 0$ if $T<\left(\frac{\sqrt{\pi}}{2\zeta(3/2)}\frac{8\pi}{\nu}\right)^{2/3}\left(\frac{\mu}{a}\right)^{2/3}+h_2(\nu)\mu+o(\mu)$
\item $\rho_0= 0$ if $T>\left(\frac{\sqrt{\pi}}{2\zeta(3/2)}\frac{8\pi}{\nu}\right)^{2/3}\left(\frac{\mu}{a}\right)^{2/3}+h_2(\nu)\mu+o(\mu)$.
\end{enumerate}
\end{theorem}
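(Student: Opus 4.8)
The plan is to identify the critical temperature as the threshold at which switching on a condensate first lowers the free energy, and then to locate that threshold to accuracy $o(\mu)$ in the dilute limit. By Theorem \ref{thm:BECvsSF} the condition $\rho_0=0$ is equivalent to $\alpha\equiv0$, so the two phases are genuinely distinct, and by Theorem \ref{thm:phasetrangrandcan} the condensed phase occupies a low-temperature interval for fixed $\mu>0$. First I would show that this transition is \emph{sharp}: partially minimising \eqref{def:grandcanfreeenergyfunctional} over $(\gamma,\alpha)$ at fixed $\rho_0\geq0$ yields an effective free energy $g_{T,\mu}(\rho_0)$, and I would argue, via monotonicity in $T$ of the one-sided derivative $\partial_{\rho_0}g_{T,\mu}(0^+)$, that a single critical temperature $T_c(\mu)$ separates $\{\rho_0>0\}$ from $\{\rho_0=0\}$. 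The problem then reduces to pinning $T_c(\mu)$.

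On the condensed branch I would write the Euler--Lagrange equations. Minimising over $(\gamma,\alpha)$ at fixed $\rho_0$ gives the thermal Bogoliubov profiles $\gamma(p)+\tfrac12=\frac{A(p)}{2E(p)}\coth\frac{E(p)}{2T}$ and $\alpha(p)=-\frac{\rho_0\widehat V(p)}{2E(p)}\coth\frac{E(p)}{2T}$, with $A(p)=p^2-\mu+\widehat V(0)\rho+\rho_0\widehat V(p)+(\text{exchange})$ and $E=\sqrt{A^2-(\rho_0\widehat V)^2}$, while the $\rho_0$-variation enforces gaplessness $E(0)=0$, i.e.\ the condensate equation $\mu=\widehat V(0)\rho+(2\pi)^{-3}\int\widehat V(p)(\gamma(p)+\alpha(p))\,dp$. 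Letting $\rho_0\to0^+$, the pairing contribution vanishes and this collapses to the critical equation $\mu=\widehat V(0)\rho_\gamma+(2\pi)^{-3}\int\widehat V(p)\gamma_c(p)\,dp$, where $\gamma_c(p)=(e^{(p^2+h(0))/T}-1)^{-1}$ evaluated at the threshold $h(0)=0$.

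At leading order in the dilute limit one has $\widehat V(p)\approx\widehat V(0)$ on the relevant scale $|p|\sim\sqrt T$, so both mean-field terms contribute $\widehat V(0)\rho_\gamma$ and $\gamma_c$ is the critical free Bose distribution; then $\rho_\gamma=\zeta(3/2)(T/4\pi)^{3/2}=\rho_{\mathrm{fc}}(T)$, so that $\mu=2\widehat V(0)\rho_{\mathrm{fc}}(T_c)$, and solving while using $\widehat V(0)=\nu a$ reproduces exactly the stated leading term $\big(\tfrac{\sqrt\pi}{2\zeta(3/2)}\tfrac{8\pi}{\nu}\big)^{2/3}(\mu/a)^{2/3}$. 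The correction of relative order $\mu^{1/3}$, which defines $h_2(\nu)\mu$, I would extract by expanding to the next order, collecting three effects: the momentum dependence $\widehat V(p)-\widehat V(0)$ in the exchange integral, the deviation of $\gamma_c$ from the free-critical profile induced by the self-consistent mean field, and the second-order scattering correction that the approach $\rho_0\to0^+$ feeds in through the regularised pairing sector. The last of these carries the genuine $\nu$-dependence via the Born series $8\pi a=\widehat V(0)-\tfrac12(2\pi)^{-3}\int\widehat V(p)^2 p^{-2}\,dp+\cdots$, so that $\nu-8\pi$ encodes the second Born term and the universal residual $\lim_{\nu\to8\pi}h_2(\nu)=0.44$ emerges in the limit where that term vanishes.

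The hard part is this last step: controlling the $\rho_0\to0^+$ limit of the pairing sector uniformly. The integral $\int\widehat V(p)\alpha(p)\,dp$ together with the entropy contribution are each sensitive to the ultraviolet and to the vanishing gap $h(0)\to0$, and only after the scattering-length renormalisation do they combine into a finite $O(\mu)$ term; making this cancellation rigorous, and simultaneously proving two-sided bounds that trap the true $T_c$ within $o(\mu)$ of the computed threshold so that the dichotomy in the statement is sharp, is where the real work lies. The momentum-dependent self-consistency determining $\gamma_c$ must also be solved to the same accuracy, which requires a priori control on the minimiser of the type underlying Theorem \ref{thm:phasetrangrandcan}.
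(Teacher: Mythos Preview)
This theorem is not proved in the present paper; it is quoted from \cite{NapReuSol-15b} as background, so there is no proof here to compare against directly. What the paper does contain, in Section~\ref{sec:simplifiedfunctional}, is a summary of the method of \cite{NapReuSol-15b}, and that method differs from your proposal in a way the paper itself flags.

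Your plan is to write and solve the Euler--Lagrange equations for the full functional and then send $\rho_0\to0^+$. But Section~\ref{sec:simplifiedfunctional} states explicitly that the Euler--Lagrange equations of \eqref{def:grandcanfreeenergyfunctional} contain the nonlocal terms $\widehat V*\gamma$ and $\widehat V*\alpha$ and are therefore ``hard to analyze quantitatively.'' Your Bogoliubov profiles, with $A(p)=p^2-\mu+\widehat V(0)\rho+\rho_0\widehat V(p)+(\text{exchange})$, already presuppose that these convolutions have been replaced by local expressions; that replacement is exactly what has to be justified with error bounds, and it does not fall out of the variational equations. The actual route in \cite{NapReuSol-15b} bypasses the Euler--Lagrange equations: one introduces a simplified functional $\mathcal F^{\rm sim}$ by (i) replacing $\widehat V*\gamma$ by $\widehat V(0)\rho_\gamma$ via the a priori kinetic estimates of Lemma~\ref{lm:apriorikinetic} and Proposition~\ref{lem:aprioriest}, and (ii) subtracting a trial pairing $\alpha_0$ built from the scattering solution $\widehat w$, which turns the $\alpha$-convolution into the controllable error $E_1$ of \eqref{errors} and simultaneously converts $\widehat V$ into $\widehat{Vw}$ in the linear terms (cf.\ \eqref{eq:Vconvalpha0}). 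The simplified functional is then minimised explicitly (Lemma~\ref{prop:simplfunctsol}), and the critical temperature is located by comparing the resulting free energies on the branches $\rho_0=0$ and $\rho_0>0$, with the sandwich in \eqref{eq:apriorilowerboundfull}--\eqref{eq:upperbounderrorterms} supplying the two-sided control you correctly identify as necessary.

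Your leading-order computation is fine and your diagnosis of what enters at next order is right, but the ``hard part'' you isolate---uniform control of the pairing sector as $\rho_0\to0^+$ together with the nonlocal self-consistency for $\gamma_c$---is left unresolved in your sketch, and the published proof does not confront it: the scattering-length renormalisation is achieved structurally through the choice of $\alpha_0$, not as a delicate UV cancellation in a degenerate limit.
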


The other main result of \cite{NapReuSol-15b} provides an expansion of the canonical free energy \eqref{def:canonicalminimization} in the dilute limit. For simplicity, we will state the results not in full generality but rather in the most relevant regions.

\begin{theorem}[Theorems 10, 11 and Corollary 12 in \cite{NapReuSol-15b}]
\label{thm:canfreeenexp}
Assume that $T$ and $\rho$ satisfy the conditions $\eqref{dillim}$ and $T\leq D\rho^{2/3}$ with $D>1$ fixed. Then 
\begin{enumerate}
\item For $T>T_{\rm fc}\left(1+h_1(\nu)\rho^{1/3}a+o(\rho^{1/3}a)\right)$, the free energy is 
$$F^{\rm{can}}(T,\rho)=F_0(T,\rho)+\widehat{V}(0)\rho^2+O((\rho a)^{5/2}),$$
and we have $\rho_\gamma=\rho$, $\rho_0=0$ for the minimizer. Here $F_0(T,\rho)$ is the free energy of the non-interacting gas.
\item  For $\rho a/T\ll 1$ and $T<T_{\rm fc}\left(1+h_1(\nu)\rho^{1/3}a+o(\rho^{1/3}a)\right)$, the canonical free energy is given by
\[
F^{\rm{can}}(T,\rho)=F_0(T,\rho)+4\pi a \rho^2+(\nu-4\pi)a\rho_{\rm fc}(2\rho-\rho_{\rm fc})+O(T(\rho a)^{3/2}).
\]

 \item  For $\rho a/T\gg 1$ and $T<T_{\rm fc}\left(1+h_1(\nu)\rho^{1/3}a+o(\rho^{1/3}a)\right)$, the canonical free energy can be described in terms of a function $g:(8\pi,\infty)\to\mathbb{R}$ as
\[
F^{\rm{can}}(T,\rho)=4\pi a \rho^2+g(\nu)(\rho a)^{5/2}+o\left((\rho a\right)^{5/2}),
\]
with $g(\nu)\to\frac{512}{15}\sqrt{\pi}$ as $\nu\to8\pi$.  
\end{enumerate}
\end{theorem}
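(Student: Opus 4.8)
The plan is to exploit that at $T=0$ the entropy term $-TS$ drops out of \eqref{def:canonicalfreeenergyfunctional}, so that $F^{\rm can}(0,\rho)=\inf_{0\le\rho_0\le\rho}\inf_{(\gamma,\alpha)}\mathcal F^{\rm can}(\gamma,\alpha,\rho_0)$ is a pure energy minimization. Since no entropy rewards mixed states at $T=0$, the inner minimizer is a pure quasi-free state and one may impose $\alpha(p)^2=\gamma(p)(1+\gamma(p))$ with $\alpha\le0$ (the linear term $\rho_0(2\pi)^{-2}\int\widehat V(\gamma+\alpha)$ favours $\alpha<0$). I would establish the asymptotics by matching upper and lower bounds. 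The guiding physical picture, which the constant $C_\nu(d)$ reflects, is that the bare coupling $\widehat V(0)=\nu b$ is driven by the anomalous scattering term $\frac12(2\pi)^{-4}\iint\widehat V(p-q)\alpha(p)\alpha(q)$ towards the universal effective coupling $8\pi b$, so that $\tfrac12\cdot 8\pi b\cdot\rho^2=4\pi\rho^2 b$ reproduces \eqref{eq:Schick}, while the Born deviation $2(\nu-8\pi)$ and the detailed momentum integrals only surface at order $\rho^2 b^2$.

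First I would replace the full functional by a simplified one, as prepared in Section~\ref{sec:simplifiedfunctional}: in the low-momentum region $|p|\lesssim\Lambda$ set by the range of $V$ one replaces $\widehat V(p)$ and $\widehat V(p-q)$ by $\widehat V(0)=\nu b$, retaining the cutoff $\Lambda$ (of order an inverse potential range) as the genuine ultraviolet scale on the otherwise logarithmically divergent average $\int\alpha$. Using that $V$ is smooth, compactly supported and $\widehat V\ge0$, the resulting errors are controlled in the dilute limit, and the two-dimensional scattering length $a$ enters precisely through matching this cutoff to the zero-energy scattering solution; this is where $\widehat V(0)=\nu b$ gets re-expressed through $b=|\ln(\rho a^2)|^{-1}$ and the deviation $\nu-8\pi$ becomes visible.

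Next I would minimize the simplified functional explicitly. For fixed $\rho_0$ the pure-state Euler--Lagrange equations yield a Bogoliubov dispersion $E(p)=\sqrt{A(p)^2-B(p)^2}$ with self-consistent normal and anomalous self-energies $A,B$ built from $\widehat V(0)$, $\rho_0$ and the cut-off averages of $\gamma$ and $\alpha$. Substituting $\gamma=\tfrac12(A/E-1)$ and $\alpha=-B/(2E)$ and carrying out the radial two-dimensional integrals from the infrared scale $\sqrt{\rho_0\widehat V(0)}$ up to $\Lambda$ produces the building blocks $\sqrt{d(d+16\pi)}$, the logarithm $4\pi\ln(d+\sqrt{d(d+16\pi)}+8\pi)$ and the Euler--Mascheroni constant $\Gamma$ (from the standard $\int_0^\infty$ evaluations and the large-momentum tail). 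Here $d\ge0$ is the dimensionless rescaling of the condensate/gap parameter; the energy then organizes as $4\pi\rho^2 b+4\pi\rho^2 b^2\ln b+C_\nu(d)\rho^2 b^2+o(\rho^2 b^2)$, and the final step is to optimize over $\rho_0$, which after rescaling becomes $\inf_{d\ge0}C_\nu(d)$, the constraint $0\le\rho_0\le\rho$ mapping to $d\ge0$. A consistency check is the Born limit $\nu\to8\pi$, where $d=0$ is optimal and one recovers $C_{8\pi}(0)=2\pi(1+4\Gamma+2\ln\pi)$ from \eqref{eq:Cvd}.

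The main obstacle will be the lower bound, where one cannot insert an ansatz. I expect two difficulties. First, a priori control of the true minimizer — bounding its depletion $\rho_\gamma$ and establishing the expected decay of $\gamma$ and $\alpha$ — is needed to confine it to the regime where the simplified functional is accurate and where the pure-state relation is essentially forced. Second, and most delicately, the replacement $\widehat V(p-q)\to\widehat V(0)$ must be made rigorous to precision $o(\rho^2 b^2)$, two orders below the leading energy: because the correction terms sit only a factor $b\ln b$, respectively $b$, below the leading term, the logarithmically divergent anomalous average $\int\alpha$ must be matched against the two-dimensional scattering-length definition with exact control of the ultraviolet constant. The positivity $\widehat V\ge0$, and hence coercivity of the quadratic form in \eqref{def:canonicalfreeenergyfunctional}, is the main structural tool that should allow the two-sided estimates to close.
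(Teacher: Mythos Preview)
Your proposal does not address the stated theorem at all. The statement in question is the three-dimensional free energy expansion (Theorems 10, 11 and Corollary 12 of \cite{NapReuSol-15b}): it concerns the 3D dilute regime $\rho^{1/3}a\ll1$, a temperature range $T\le D\rho^{2/3}$ with a genuine free critical temperature $T_{\rm fc}$, leading energy $4\pi a\rho^2$, and the Lee--Huang--Yang correction $(\rho a)^{5/2}$ with constant $g(\nu)\to\tfrac{512}{15}\sqrt{\pi}$. In the present paper this theorem is \emph{not proved}; it is quoted in Section~\ref{sec:generalresults} as a known result from \cite{NapReuSol-15b}, so there is no proof here to compare against.

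What you have written is instead a sketch of the proof of Theorem~\ref{thm:mainresult}, the two-dimensional $T=0$ result: you invoke $b=|\ln(\rho a^2)|^{-1}$, the assumption $\widehat V(0)=\nu b$, the 2D scattering value $8\pi b$, the expansion $4\pi\rho^2 b+4\pi\rho^2 b^2\ln b+C_\nu(d)\rho^2 b^2$, and the constant $C_{8\pi}(0)=2\pi(1+4\Gamma+2\ln\pi)$. None of these objects appear in the stated 3D theorem. Even granting the switch, your sketch diverges from the paper's actual argument for Theorem~\ref{thm:mainresult}: you propose replacing $\widehat V(p-q)\to\widehat V(0)$ with a hard cutoff $\Lambda$ and matching the UV constant by hand, whereas the paper introduces the trial pairing $\alpha_0=(\rho_0+t_0)\widehat w-(2\pi)^2\rho_0\delta_0$ built from the 2D scattering solution (Lemma~\ref{lem:logFT}), rewrites $\cF^{\rm can}=\cF^{\rm sim}+E_1+E_2+E_3+E_4$ (Lemma~\ref{lem:FsimFcandiff}), and then uses the explicit distributional Fourier transform of $\ln|x|$ to produce the cutoff $\epsilon=\tfrac{2}{ae^{\Gamma}}e^{-1/(2b)}$ intrinsically. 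The parameter $d$ is not ``the rescaled condensate'' but the rescaled Lagrange multiplier $\delta=d\rho_0 b$ enforcing $\rho_\gamma=\rho-\rho_0$; the final minimization is over $d\ge0$, not over $\rho_0$ directly. So even as a proof of Theorem~\ref{thm:mainresult} your outline misses the key device (the scattering-based $\alpha_0$) that makes the $o(\rho^2 b^2)$ control of the anomalous term tractable.
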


In fact, the asymptotics in (2) of Theorem \ref{thm:canfreeenexp} are much more precise but we rewrite them in the present form to underline the following fact: in the limit when $\nu \to 8\pi$ the results above reproduce \eqref{eq:LHY} and \eqref{eq:3D_free_energy_SeiYin}. Indeed, in that limit we can replace $\hat{V}(0)$ by $8\pi a$ and the formulas in (1) and (2) of the theorem above reproduce the free energy expansion  \eqref{eq:3D_free_energy_SeiYin} while the expansion in (3) yields the Lee--Huang--Yang formula. 

Let us briefly comment on these results. As it is explained in, e.g., \cite[Appendix A]{LieSeiSolYng-05}, one can reproduce the Lee--Huang--Yang formula from Bogoliubov's original quadratic theory as long as one replaces $\hat{V}(0)$ by $8\pi a$. This substitution is motivated by the fact that $\hat{V}(0)$ is the first Born approximation to $8\pi a$. Then, in \cite{ErdSchYau-08}, it was noticed that including in the effective Hamiltonian some quartic terms that are obtained after the $c$-number substitution, allows one to reproduce the scattering length in the leading order term and get the right order of the second order term together with a constant that converges to the desired Lee--Huang--Yang constant as the interaction becomes weaker. In that sense the result in (3) of Theorem \ref{thm:canfreeenexp} does not come as a surprise. It is worth noticing, however, that since it provides also a lower bound, it shows that within the class of quasi-free states one cannot obtain the full Lee--Huang--Yang formula exactly (this is also the reason why the trial state used in \cite{YauYin-09} is more complicated). 

What is probably much more interesting about this functional is the result in Theorem \ref{thm:cancrittemp}. It shows that in the limit $\nu \to 8\pi$ the critical temperature of the interacting gas is given by 
\begin{equation}
\label{eq:Tc}
T_{\rm{c}}=T_{\rm{fc}}(1+1.49(\rho^{1/3}a)+o(\rho^{1/3}a)).
\end{equation}   
For general potentials, there has been a lot of debate about whether the linear dependence on $\rho^{1/3}a$ in \eqref{eq:Tc} is correct. Nonetheless, \eqref{eq:Tc} is still expected to hold true, at least up to the value of the constant 1.49 (we refer to \cite{SeiUel-09} for a comprehensive list of references regarding this issue; in fact, the upper bound on the critical temperature in \cite{SeiUel-09} is the only rigorous result on this issue). In a number of articles it was claimed that mean-field theories such as Bogoliubov theory cannot predict a change in the critical temperature. As Theorem \ref{thm:cancrittemp} shows, a variational, non-linear formulation of Bogoliubov theory indeed does show the predicted behaviour \eqref{eq:Tc} and with a constant 1.49 which is very close to 1.32 which is the constant obtained in Monte Carlo simulations for this problem (see, e.g., \cite{NhoLan-04}).

\subsection{Known results in the dilute limit in two dimensions.}
Historically, the two-dimensional Bose gas received attention much later than the three-dimensional system. This is probably because the non-interacting 2D Bose does not exhibit a phase transition at positive temperature. For systems with short range interactions, the celebrated Mermin--Wagner--Hohenberg theorem \cite{MerWag-66,Hohenberg-67} excludes continuous symmetry breaking and the presence of long-range order, therefore prohibiting BEC in its traditional sense. It has been realized only later, that 2D systems undergo the Kosterlitz-Thouless (KT) phase transition and quasi-long-range order can occur. In the context of Bose gases this is related to the concept of a quasi-condensate (\cite{Popov-83,MorCas-03}). In fact, it has been observed in experiments \cite{ClaRyuRamHelPhi-09} that a 2D Bose system undergoes two phase transitions: from the normal phase to a quasi-condensate without superfluidity and then the second one to a quasi-condensate with superfluidity, the latter being the KT phase transition. Since, as stated in Theorem \ref{thm:BECvsSF}, the Bogoliubov free energy functional does not differentiate between BEC and superfluidity, it sees only the KT phase transition. One should note here, that the experiment mentioned above dealt with a quasi 2D system. It is difficult to obtain truly homogeneous, translation invariant systems in the lab (see \cite{LopEigNavCleSmiHad-17} for recent progress in that direction). In \cite{NapReuSol-17} the critical temperature of a dilute, homogeneous 2D Bose gas has been determined. It turns out, in agreement with Schick's prediction \cite{Schick-71} that the relevant expansion parameter is, as already mentioned,
\bq \label{def:b}
b:=1/|\ln(\rho a^2)|\ll1
\eq
(here $a$ is the two dimensional scattering length, cf. Section \ref{sec:simplifiedfunctional}). We set
\bq \label{eq:apriori2Dtempbound}
\sqrt{T}a\leq C \rho^{1/2}a\ll 1
\eq
Due to dimensional consideration (notice $\hat{V}(0)$ is dimensionless) we assume
\bq\label{eq:Vexpansion}
\widehat{V}(p)=\widehat{V}(0)+Ca^2p^2+o(a^2p^2)
\eq
and the same for $\widehat{Vw}(p)$ (with, in general, other constants). With this  we have
\begin{theorem}
\label{thm:Crit2D}
Consider the canonical problem \eqref{def:canonicalminimization} in two dimensions. The critical temperature, defined by the properties $\rho_0= 0$ if $T>T_{\rm c}$, $\rho_0 > 0$ if $0\leq T<T_{\rm c}$, is
\begin{equation}
\label{critT}
T_{\rm{c}}=4\pi\rho\left(\frac{1}{\ln(\xi/4\pi b)}+o(1/\ln^2{b})\right),
\end{equation}
with $\xi\to 14.4$ as $\widehat{V}(0)\to 8\pi b$.
\end{theorem}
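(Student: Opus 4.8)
The plan is to identify $T_{\rm c}$ with the threshold at which the infimum over $\rho_0$ in \eqref{def:canonicalminimization} ceases to be attained at $\rho_0=0$. By Theorem~\ref{thm:BECvsSF} the ``normal phase'' competitor is $(\gamma,0,0)$ with $\rho_\gamma=\rho$, so the object to study is $\inf_{0<\rho_0\le\rho}\bigl(f(\rho-\rho_0,\rho_0)-f(\rho,0)\bigr)$, and $T_{\rm c}$ is the temperature at which this quantity changes sign: above it the normal phase wins, below it some state with $\rho_0>0$ does.

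\emph{The normal phase.} Minimising $\mathcal F^{\rm can}(\gamma,0,0)$ under $(2\pi)^{-2}\int\gamma=\rho$ produces a thermal profile $\gamma_0(p)=(e^{(p^2+c_0)/T}-1)^{-1}$ with a self-consistent Hartree gap $c_0=c_0(T,\rho)>0$; that $c_0$ may be treated as constant in $p$ up to negligible corrections uses the slow variation \eqref{eq:Vexpansion} of $\widehat V$ on momenta $\sqrt T\ll a^{-1}$. The density constraint then reads $\tfrac{T}{4\pi}\ln\tfrac{1}{1-e^{-c_0/T}}=\rho$, i.e. $e^{-c_0/T}=1-e^{-4\pi\rho/T}$; since $T_{\rm c}$ will turn out to be much smaller than $\rho$, this gives $c_0=Te^{-4\pi\rho/T}\bigl(1+o(1)\bigr)$ near the transition, an exponentially small gap, and this is the scale the condensation gain must beat.

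\emph{The condensed phase.} For $\rho_0>0$ I would use a Bogoliubov-modified thermal trial state, with quasiparticle dispersion $E(p)=\sqrt{(p^2+\sigma(p))^2-\tau(p)^2}$, where $\sigma\approx 2\rho_0\widehat V(0)$, $\tau\approx\rho_0\,\widehat{Vw}(p)$ with $w$ the zero-energy scattering solution of $V$, and $\gamma,\alpha$ the thermal Bogoliubov occupation and pairing of the dispersion $E$; note that in two dimensions $(2\pi)^{-2}\int\gamma$ converges only when $E(0)>0$, so the trial dispersion is necessarily gapped and carries a free gap parameter, the finite-temperature counterpart of the parameter $d$ in Theorem~\ref{thm:mainresult}. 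Inserting this state into $\mathcal F^{\rm can}$ and subtracting $f(\rho,0)$ yields, after optimising the profile, an expression of the form $c_0\,\rho_0-G(\rho_0;T)$, where the condensation gain $G$ is governed by a two-dimensional momentum integral that diverges logarithmically in the ultraviolet and is cut off at $|p|\sim a^{-1}$; regularising this logarithm via $w$ exactly as in the proof of Schick's formula \eqref{eq:Schick} produces the factor $b=|\ln(\rho a^2)|^{-1}$. The sign of $c_0\rho_0-G(\rho_0;T)$ at the optimising $\rho_0$ changes precisely when $c_0(T)/T$ drops through an explicit multiple of $b$, i.e. when $e^{-4\pi\rho/T}=4\pi b/\xi$ to leading order; solving for $T$ gives $T_{\rm c}=\frac{4\pi\rho}{\ln(\xi/4\pi b)}+o\bigl(\rho/\ln^2 b\bigr)$, and tracking the $O(1)$ constant through the two-dimensional scattering asymptotics — the small-argument expansion of the Bessel function $K_0$, with its $\ln 2$ and Euler--Mascheroni contributions — identifies $\xi$, with $\xi\to 14.4$ as $\widehat V(0)\to 8\pi b$.

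\emph{Rigour and the main obstacle.} The construction above bounds $T_{\rm c}$ from one side through trial states; the matching estimate needs to show that any $(\gamma,\alpha,\rho_0)$ with $\rho_0>0$ and energy below $f(\rho,0)$ must have $\alpha$ close to the Bogoliubov profile, which follows by completing the square in the $\alpha$-channel against $w$ — replacing the bare coupling $\widehat V(0)$ by the scattering-renormalised one plus a manifestly non-negative remainder — and then estimating the resulting simplified functional, together with a priori bounds localising $\gamma$ in momentum and controlling $\rho_\gamma$. The genuinely hard part is this $\alpha$-channel analysis in the presence of the exponentially small gap $c_0$: the gain $G$ must be computed to the $O(1)$ precision that fixes $\xi$, the cost $c_0\rho_0$ to exponential precision in $\rho/T$, and the ultraviolet logarithm must be tied to the scattering length sharply enough that the error in the location of the sign change is genuinely $o(1/\ln^2 b)$ relative to the main term $1/\ln(\xi/4\pi b)$.
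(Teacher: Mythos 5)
First, a point of comparison: the paper you were given does not actually prove Theorem \ref{thm:Crit2D} — it is quoted from \cite{NapReuSol-17}, and the present paper only indicates the strategy (the simplified functional $\mathcal F^{\rm sim}$ of Section \ref{sec:simplifiedfunctional}: completing the square in the $\alpha$-channel against the scattering solution $w$, a priori bounds on $\gamma$, explicit minimization in $(\gamma,\alpha)$ via Lemma \ref{prop:simplfunctsol}, then expansion and minimization over $\rho_0$ in the critical region). Your outline is essentially that same route, and several of your ingredients are sound: the normal-phase relation $c_0\approx T e^{-4\pi\rho/T}$ is the correct 2D free-gas identity, the observation that a gapless thermal Bogoliubov profile has logarithmically divergent density in 2D (forcing a gap/Lagrange parameter, the finite-$T$ analogue of $d$) is correct, and the mechanism "exponentially small effective chemical potential of the normal cloud versus an interaction scale proportional to $b$" is indeed what produces $T_{\rm c}\sim 4\pi\rho/\ln(1/b)$.

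The genuine gap is that the actual content of the theorem — the existence of a temperature-independent constant $\xi$ with $\xi\to 14.4$, at the precision $o(1/\ln^2 b)$ at which $\xi$ is visible — is asserted rather than derived. You posit that the free-energy difference has the form $c_0\rho_0-G(\rho_0;T)$ and that its sign change occurs "precisely when $e^{-4\pi\rho/T}=4\pi b/\xi$", but this is exactly what must be proved: a priori the matching scale could be $\propto\rho b$ rather than $\propto T b$ (these differ by $\ln\ln$ factors, which would already contaminate the claimed $o(1/\ln^2 b)$ accuracy), and since the transition in this functional is located by a crossing of free energies at a non-infinitesimal optimal $\rho_0$ (not by a local instability at $\rho_0=0^+$), the value of $\xi$ depends on carrying out the full expansion of $\mathcal F^{\rm sim}$ near criticality — including the pairing term regularized through $\widehat{Vw}$, the $\widehat V(0)(\rho^2-\rho_0^2)$ and $4\pi b(\rho_0+t_0)(3\rho_0-2\rho-t_0)$ terms, and the entropy at the gapped dispersion — followed by an explicit minimization over $\rho_0$ and the gap parameter, with the error terms $E_1,\dots,E_4$ controlled at this finer precision and at positive temperature. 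None of this is performed; the appeal to the small-argument expansion of $K_0$ is also off-target, since in this framework the relevant constants ($\ln 2-\Gamma$, the cutoff $\epsilon$) enter through the distributional Fourier transform of the logarithmic scattering solution (Lemma \ref{lem:logFT}), and the number $14.4$ only emerges from the resulting explicit minimization carried out in \cite{NapReuSol-17}. As it stands, your argument establishes (modulo the acknowledged analytic work) the leading form $T_{\rm c}\approx 4\pi\rho/\ln(C/b)$ for some constant, but not the theorem as stated.
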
 
This result improves upon several theoretical and numerical derivations in the literature (see, e.g., \cite{Popov-83,FisHoh-88,ProRueSvi-01}) as it provides a systematic way to compute the constant $\xi$. 
The proof of the statement above relies on a careful expansion of the free energy in the critical region.  The main steps of our approach will be presented in the next section.

\section{The simplified functional}\label{sec:simplifiedfunctional}
\subsection{Outline of proof.} Let us briefly recall the main idea behind the proof. In principle one could try to determine the minimum of the functional by analysing the associated Euler--Lagrange equations. However, when writing them out one will notice that the derivatives of the non-linear terms give rise to terms that are non-local, i.e. $\hat{V}\ast \gamma$ and $\hat{V}\ast \alpha$. This makes the Euler--Lagrange equations hard to analyze quantitatively. The idea is to show that, in the relevant region, the full functional can be effectively replaced by a simplified one that one can first solve explicitly in $\gamma$ and $\alpha$ and then minimize over $\rho_0$. To put it differently,
\begin{equation}\label{eq:approx-scheme}
\inf_{\substack{\text{$(\gamma,\alpha,\rho_0)$}\\\text{$\rho_0+\rho_\gamma=\rho$}}}
\mathcal{F}^{\rm can}\approx\inf_{0\leq\rho_0\leq\rho}\inf_{\substack{\text{$(\gamma,\alpha)$}\\\text{$\rho_\gamma=\rho-\rho_0$}}}\mathcal{F}^{\rm sim},
\end{equation}
where the infimum over $\gamma$ and $\alpha$ is calculated explicitly, then expanded properly in the dilute limit, and finally minimized in $\rho_0$.

The justification of this approximation will rely on several steps. First, we will replace the convolution term involving $\gamma$ with $\widehat{V}(0)\rho_\gamma^2$. We expect that the particles interact weakly in the dilute limit and it seems reasonable to assume that the system will behave like a free Bose gas to leading order. We therefore expect that the minimizing $\gamma$ is concentrated on a ball of radius $\sqrt{T}$ as in the non-interacting case which can be solved explicitly. On this scale $\widehat{V}(p)$ is approximately $\widehat{V}(0)$ justifying the replacement. 

Second, by introducing a trial function $\alpha_0$, we rewrite the convolution terms involving $\alpha$. This trial function will be expressed in terms of $\widehat{Vw}$, where $w$ is the solution to the scattering equation. Finally, we will also substitute $\widehat{V}$ by $\widehat{Vw}$ in the terms that are linear in $\gamma$ at the cost of a small error. These steps will rely on several a priori estimates.

We then minimize the simplified functional. We split the minimization in two steps: first one over $\gamma$ and $\alpha$ with the constraint that $\rho_0+\rho_\gamma=\rho$, followed by a minimization over $0\leq\rho_0\leq\rho$. The first step will lead to a useful class of minimizers $(\gamma^{\rho_0,\delta},\alpha^{\rho_0,\delta})$. The final minimization over $\rho_0$ will then follow. 

\subsection{Scattering equation in two dimensions.} As mentioned above, the simplified functional will involve a trial function $\alpha_0$ which will be related to the solution of the scattering equation or rather, to be more precise, its Fourier transform. Because of that we shall now review some properties of the two-dimensional scattering equation. It is given, in the sense of distributions, by
\begin{equation}\label{eq:scattering}
-\Delta w_0+\frac12 Vw_0=0,
\end{equation}
with the asymptotics $w_0(x) \approx \ln(|x|/a)$ when $|x|\gg R$ where $R$ is such that $\text{supp} V \subset B(0,R)$. Here $a$ is the scattering length given as  in \cite[Appendix C]{LieSeiSolYng-05}. Let us stress, that unlike in the three-dimensional case, in two dimensions the elegant characterization of the scattering length \eqref{def:scattlength3D} is not valid. In fact, in two dimensions 
we have 
\bq
\frac12 \int V w_0=\lim_{N\to \infty} \int_{B(0,N)} \Delta w_0=  \lim_{N\to \infty} \int_{|x|=N} \partial_r \ln(r/a)=2\pi. \label{eq:hatVphi0}
\eq
In the definition of $\alpha_0$ we will use $\hat{w}$ where $w=2bw_0$. In particular $\widehat{Vw}(0)=8\pi b$. Because of the logarithmic behaviour of $w$, the computation of its Fourier transform is more complicated than in three dimensions. In fact, we have
\begin{lemma}\label{lem:logFT}
Let $\epsilon=\frac{2}{a e^{\Gamma}}\exp(\frac{-1}{2b})$. Then the Fourier transform of $w$ is given by the distribution $\hat{w}$ of the form
$$\hat{w}=(2\pi)^2 \delta_0 -\hat{\varphi}$$
with
\begin{equation} \label{eq:hatwFT}
\hat{\varphi}(\phi)=\int_{|p|\leq \epsilon}\frac{\widehat{Vw}(p)\phi(p)-\widehat{Vw}(0)\phi(0)}{2p^2}dp+\int_{|p|>\epsilon}\frac{\widehat{Vw}(p)\phi(p)}{2p^2}dp.
\end{equation}
\end{lemma}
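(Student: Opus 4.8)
\emph{Plan of proof.} I would split $w$ into an explicit logarithmic tail plus a compactly supported remainder, compute the Fourier transform of each as a tempered distribution, and recombine. Since $V$ is radial with $\mathrm{supp}\,V\subset B(0,R)$, the function $w_0$ is radial and harmonic outside $B(0,R)$, and the normalisation defining the scattering length (together with \eqref{eq:hatVphi0}, which forces the coefficient of the logarithm to be $1$) gives $w_0(x)=\ln(|x|/a)$ exactly for $|x|\ge R$. Hence, with $w=2bw_0$,
\[
w(x)=2b\ln(|x|/a)+w_s(x),\qquad w_s:=w-2b\ln(|\cdot|/a),
\]
where $w_s$ is supported in $\overline{B(0,R)}$ and belongs to $L^1(\mathbb{R}^2)$ because the logarithm is locally integrable in two dimensions. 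Both summands are tempered distributions, so it suffices to compute $\widehat{w_s}$ and $\widehat{2b\ln(|\cdot|/a)}$ and add them.

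\emph{The remainder.} For $w_s$ I would use the scattering equation \eqref{eq:scattering} together with $\Delta\ln|x|=2\pi\delta_0$ in $\mathbb{R}^2$, which give $-\Delta w_s=-\tfrac12 Vw+4\pi b\,\delta_0$ in the distributional sense, hence $p^2\widehat{w_s}(p)=4\pi b-\tfrac12\widehat{Vw}(p)$. Since $\widehat{Vw}(0)=8\pi b$ the right-hand side vanishes at $p=0$, and since $V$ and $w$ are radial it vanishes there to second order, so $h(p):=4\pi b-\tfrac12\widehat{Vw}(p)$ has $h/p^2$ continuous on $\mathbb{R}^2$. As $w_s\in L^1$ its transform is continuous, and the only continuous solution of $p^2 f=h$ is $f=h/p^2$; therefore
\[
\widehat{w_s}(p)=\frac{\widehat{Vw}(0)-\widehat{Vw}(p)}{2p^2}.
\]

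\emph{The logarithmic tail.} The transform of $2b\ln(|\cdot|/a)$ solves $-p^2(\,\cdot\,)=4\pi b$, so it must equal $-4\pi b\,\Lambda_\epsilon+C_\epsilon\delta_0$ for a single constant $C_\epsilon$, where $\Lambda_\epsilon$ is the finite-part distribution
\[
\langle\Lambda_\epsilon,\phi\rangle=\int_{|p|\le\epsilon}\frac{\phi(p)-\phi(0)}{p^2}\,dp+\int_{|p|>\epsilon}\frac{\phi(p)}{p^2}\,dp
\]
(no derivatives of $\delta_0$ occur, because $\ln|x|$ grows too slowly; one checks this by pairing with dilated Gaussians and comparing the order in the dilation parameter). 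To fix $C_\epsilon$ I would pair both sides with $g_t(p)=e^{-tp^2/2}$, whose Fourier transform is $\tfrac{2\pi}{t}e^{-|x|^2/2t}$, and use the elementary identities $\int_{\mathbb{R}^2}\ln|y|\,e^{-|y|^2/2}\,dy=\pi(\ln2-\Gamma)$ and $\int_0^A\tfrac{e^{-s}-1}{s}\,ds+\int_A^\infty\tfrac{e^{-s}}{s}\,ds=-\Gamma-\ln A$. The $\ln t$ terms cancel and one is left with $C_\epsilon=8\pi^2 b\,\ln\tfrac{2}{a\,\epsilon\, e^\Gamma}$. The stated value $\epsilon=\tfrac{2}{ae^\Gamma}e^{-1/2b}$ is precisely the one for which $\tfrac{2}{a\epsilon e^\Gamma}=e^{1/2b}$, so $C_\epsilon=8\pi^2 b\cdot\tfrac{1}{2b}=(2\pi)^2$, giving $\widehat{2b\ln(|\cdot|/a)}=(2\pi)^2\delta_0-4\pi b\,\Lambda_\epsilon$.

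\emph{Recombination, and the main obstacle.} Adding the two pieces gives $\widehat{w}=(2\pi)^2\delta_0+(\widehat{w_s}-4\pi b\,\Lambda_\epsilon)$, and I would finish by testing $\widehat{w_s}-4\pi b\,\Lambda_\epsilon$ against $\phi$, splitting every integral at $|p|=\epsilon$ and using $4\pi b=\tfrac12\widehat{Vw}(0)$: the $\widehat{Vw}(0)$-terms combine so that the $|p|\le\epsilon$ part becomes $-\int_{|p|\le\epsilon}\frac{\widehat{Vw}(p)\phi(p)-\widehat{Vw}(0)\phi(0)}{2p^2}\,dp$ and the $|p|>\epsilon$ part becomes $-\int_{|p|>\epsilon}\frac{\widehat{Vw}(p)\phi(p)}{2p^2}\,dp$, i.e. $\widehat{w_s}-4\pi b\,\Lambda_\epsilon=-\widehat\varphi$ with $\widehat\varphi$ exactly as in \eqref{eq:hatwFT}. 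This last step is routine algebra. I expect the real obstacle to be the third step: computing the distributional Fourier transform of $\ln|x|$ in two dimensions with the correct coefficient in front of $\delta_0$ — this is where the Euler--Mascheroni constant enters, and the delicate point is to recognise that the specific $\epsilon$ in the statement is precisely what makes that coefficient collapse to $(2\pi)^2$. Everything else is standard manipulation of tempered distributions.
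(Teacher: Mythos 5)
Your proposal is correct and follows essentially the same route as the paper's Appendix~A proof: split off the exact logarithmic tail $2b\ln(|x|/a)$, compute the two-dimensional distributional Fourier transform of the logarithm (excluding derivatives of $\delta_0$ by a scaling argument and fixing the constant $(2\pi)^2(\ln 2-\Gamma)$ with a Gaussian test function), and observe that the stated $\epsilon$ makes the $\delta_0$-coefficient collapse to $(2\pi)^2$. The only cosmetic difference is that you determine the regular part by solving $p^2\widehat{w_s}=\tfrac12\bigl(\widehat{Vw}(0)-\widehat{Vw}(p)\bigr)$ directly for the compactly supported remainder, whereas the paper works with $\varphi=1-w$ and identifies that part from the scattering equation away from the origin.
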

For the convenience of the reader we provide a proof of this fact in Appendix \ref{appendixA}. We stress that with the definition \eqref{def:b} we have that 
\begin{equation}
\epsilon = C\rho^{\frac12}  \label{eq:epsilonasympt}
\end{equation}
with $C=2/\exp(\Gamma)$.

\subsection{Derivation of the simplified functional.}  As mentioned before we expect the minimizing $\alpha$ to be related to the scattering solution. To this end we define
\begin{equation}
\label{alpha0}
\alpha_0:=(\rho_0+t_0)\widehat{w}-(2\pi)^2 \rho_0\delta_0,
\end{equation}
where $t_0\in [-\rho_0,0]$ is an additional parameter that will be tuned later on. With this definition we allow the scenario, that for small momenta $\alpha$ might be more complicated than the scattering solution. We approximate this part by a $\delta$-function and eventually optimize our approximation in $t_0$ (recall $\hat{w}$ has a $\delta_0$ in its definition). 

To approximate $\alpha$ by $\alpha_0$, we add and subtract terms to replace the convolution term with $\alpha$ by 
\begin{equation}
\int(\alpha-\alpha_0)(p)(\widehat{V}\ast(\alpha-\alpha_0))(p)dp,
\end{equation}
which we later show to be small for the minimizing $\alpha$. 
By doing this we have of course introduced terms involving $\widehat{V}\ast\alpha_0$, but
\begin{equation}\label{eq:Vconvalpha0}
(2\pi)^{-2}\widehat{V}*\alpha_0(p)=(\rho_0+t_0)\widehat{Vw}(p)-\rho_0\widehat{V}(p),
\end{equation}
so that no convolution terms remain in our functional. This has the added effect that $\widehat{V}$ gets replaced by $\widehat{Vw}$ in the term linear in $\alpha$, but this Fourier transform is well-defined and satisfies $\widehat{Vw}(0)=8\pi b$. To simplify the resulting functional, we make sure to obtain a similar replacement for the $\rho_0\int\widehat{V}\gamma$-term.

Motivated by these considerations we define
\begin{equation}
\label{errors}
\begin{aligned}
E_1&:=\frac12(2\pi)^{-4}\int(\alpha-\alpha_0)(p)(\widehat{V}\ast(\alpha-\alpha_0))(p)dp\\
E_2&:=\rho_0\left((2\pi)^{-2}\int\widehat{V}(p)\gamma(p)dp-\widehat{V}(0)\rho_\gamma\right)\\
E_3&:=-(\rho_0+t_0)\left((2\pi)^{-2}\int\widehat{Vw}(p)\gamma(p)dp-\widehat{Vw}(0)\rho_\gamma\right)\\
E_4&:=\frac12(2\pi)^{-4}\int\gamma(p)(\widehat{V}\ast\gamma)(p)dp-\frac12\widehat{V}(0)\rho_\gamma^2,
\end{aligned}
\end{equation}
and 
\begin{eqnarray}
\begin{aligned}
\label{eq:simplifiedfunctional}
  \mathcal{F}^{\rm{sim}}(\gamma,&\alpha,\rho_0)=(2\pi)^{-2}\left[\int p^2\gamma(p)dp+(\rho_0+t_0)\int\widehat{Vw}(p)(\gamma(p)+\alpha(p))dp\right]-TS(\gamma,\alpha)\\
&\quad+ 4\pi b(\rho_0+t_0)(3\rho_0-2\rho-t_0)+\widehat{V}(0)(\rho^2-\rho_0^2)\\  
&\quad+(2\pi)^{-2}(\rho_0+t_0)^2\int\frac{\widehat{Vw}(p)^2-\chi_{|p|\leq \epsilon}\widehat{Vw}(0)^2}{4p^2}dp.
\end{aligned}
\end{eqnarray}
This leads to the following 
\begin{lemma}\label{lem:FsimFcandiff}
With the definitions \eqref{errors} and \eqref{eq:simplifiedfunctional} we have
$$\mathcal{F}^{\rm{can}}(\gamma,\alpha,\rho_0)-\mathcal{F}^{\rm{sim}}(\gamma,\alpha,\rho_0)=E_1+E_2+E_3+E_4.$$
\end{lemma}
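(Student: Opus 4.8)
The plan is to prove the identity by direct expansion: starting from the definition \eqref{def:canonicalfreeenergyfunctional} of $\mathcal{F}^{\rm can}$, I would rewrite the two convolution terms and the linear term one at a time, checking that each rewriting produces exactly one of $E_1,\dots,E_4$ together with terms that reassemble, at the end, into the expression \eqref{eq:simplifiedfunctional} for $\mathcal{F}^{\rm sim}$. The kinetic term $(2\pi)^{-2}\int p^2\gamma$ and the entropy $-TS(\gamma,\alpha)$ are literally common to both functionals, so the whole computation concerns only the interaction part, and no analytic estimate is involved — this is a purely algebraic identity, valid for every triple $(\gamma,\alpha,\rho_0)$ for which the terms make sense.

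For the pairing term I would write $\alpha=\alpha_0+(\alpha-\alpha_0)$ and use the symmetry $\widehat V(p-q)=\widehat V(q-p)$ to get $\tfrac12(2\pi)^{-4}\iint\widehat V(p-q)\alpha(p)\alpha(q)\,dp\,dq = E_1+(2\pi)^{-4}\int\alpha(\widehat V*\alpha_0)-\tfrac12(2\pi)^{-4}\int\alpha_0(\widehat V*\alpha_0)$. Into the second term I would insert the explicit formula \eqref{eq:Vconvalpha0}, i.e.\ $(2\pi)^{-2}\widehat V*\alpha_0=(\rho_0+t_0)\widehat{Vw}-\rho_0\widehat V$; the resulting piece $-\rho_0(2\pi)^{-2}\int\widehat V\alpha$ cancels exactly the $\alpha$-part of the linear term $\rho_0(2\pi)^{-2}\int\widehat V(\gamma+\alpha)$ of $\mathcal{F}^{\rm can}$, while $(\rho_0+t_0)(2\pi)^{-2}\int\widehat{Vw}\alpha$ is precisely the $\alpha$-contribution to the linear term of $\mathcal{F}^{\rm sim}$. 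For the self-energy $-\tfrac12(2\pi)^{-4}\int\alpha_0(\widehat V*\alpha_0)$ I would again use \eqref{eq:Vconvalpha0}, then substitute $\alpha_0=(\rho_0+t_0)\widehat w-(2\pi)^2\rho_0\delta_0$ from \eqref{alpha0} together with Lemma \ref{lem:logFT}. The pairing $(2\pi)^{-2}\int\widehat w\,\widehat{Vw}$ equals $\widehat{Vw}(0)$ minus (a multiple of) $\langle\widehat\varphi,\widehat{Vw}\rangle$, and the latter is exactly $\int \frac{\widehat{Vw}(p)^2-\chi_{|p|\le\epsilon}\widehat{Vw}(0)^2}{2p^2}\,dp$, producing, after using $\widehat{Vw}(0)=8\pi b$, the regularised integral in the last line of \eqref{eq:simplifiedfunctional}; the pairing $(2\pi)^{-2}\int\widehat w\,\widehat V=\widehat{Vw}(0)$ (Parseval, since $\int Vw=\widehat{Vw}(0)$) together with the $\rho_0\delta_0$ part of $\alpha_0$ accounts for the remaining scalar $\widehat V(0)$- and $b$-dependent contributions.

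For the $\gamma$-terms I would convert the $\mathcal{F}^{\rm can}$-quantities into their $\mathcal{F}^{\rm sim}$-counterparts straight from the definitions \eqref{errors}: adding the two lines gives $\rho_0(2\pi)^{-2}\int\widehat V\gamma=E_2+E_3+(\rho_0+t_0)(2\pi)^{-2}\int\widehat{Vw}\gamma+\rho_0\widehat V(0)\rho_\gamma-8\pi b(\rho_0+t_0)\rho_\gamma$, and separately $\tfrac12(2\pi)^{-4}\int\gamma(\widehat V*\gamma)=E_4+\tfrac12\widehat V(0)\rho_\gamma^2$. Finally I would collect all the remaining scalar terms — the $\tfrac12\widehat V(0)\rho^2$ from $\mathcal{F}^{\rm can}$, the various $\widehat V(0)\rho_0\rho_\gamma,\ \widehat V(0)\rho_\gamma^2,\ \widehat V(0)\rho_0^2$ pieces, and the $b$-dependent ones — and simplify using $\rho=\rho_0+\rho_\gamma$; one checks directly that the $\widehat V(0)$-part reduces to $\widehat V(0)(\rho^2-\rho_0^2)$ and that $-8\pi b(\rho_0+t_0)\rho_\gamma-4\pi b(\rho_0+t_0)(t_0-\rho_0)=4\pi b(\rho_0+t_0)(3\rho_0-2\rho-t_0)$, reproducing exactly the scalar part of \eqref{eq:simplifiedfunctional}.

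The only genuinely delicate point is the bookkeeping with the distribution $\widehat w$: both $\alpha_0$ and $\widehat w$ carry Dirac masses at the origin and $\widehat w$ is merely tempered, so the "integrals" $\int\widehat w\,\widehat{Vw}$, $\int\widehat w\,\widehat V$ and $\int\alpha(\widehat V*\alpha_0)$ must be interpreted as the distributional pairings of Lemma \ref{lem:logFT}. I would record that these pairings are well defined — $V$ and $Vw$ being compactly supported and smooth enough, $\widehat V$ and $\widehat{Vw}$ are admissible test functions away from the regularised singularity at $p=0$, and $\widehat{Vw}(p)/p^2$ is integrable at infinity — and that \eqref{eq:Vconvalpha0} makes the convolution $\widehat V*\alpha_0$ a bona fide function. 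Apart from this, the proof is a careful but entirely routine regrouping of finitely many terms, with all the analytic content deferred to the later bounds on $E_1,\dots,E_4$.
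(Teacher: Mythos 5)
Your proposal is correct and follows essentially the same route as the paper, which simply cites the analogous computation in \cite[Lemma 13]{NapReuSol-15b}: decompose $\alpha=\alpha_0+(\alpha-\alpha_0)$, use \eqref{eq:Vconvalpha0}, and evaluate the $\alpha_0$ self-energy via the distributional $\widehat{w}$ of Lemma \ref{lem:logFT}, which is exactly the step the paper singles out as producing the regularised integral in \eqref{eq:simplifiedfunctional}. Your bookkeeping of the scalar terms (the reduction to $\widehat{V}(0)(\rho^2-\rho_0^2)$ and to $4\pi b(\rho_0+t_0)(3\rho_0-2\rho-t_0)$) checks out, so no gap remains.
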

\begin{proof}
The proofs follows exactly the proof of \cite[Lemma 13]{NapReuSol-15b}. The main difference lies in the evaluation of the $\frac12(2\pi)^{-4}\int\alpha_0(p)(\widehat{V}*\alpha_0)(p)dp$ which, in particular, leads to the appearance of the last term in  \eqref{eq:simplifiedfunctional}. 
\end{proof}

The crucial property of the simplified functional is that, excluding the entropy term, it is linear in $\gamma$ and $\alpha$ and is much easier to analyze than the full functional.

\subsection{A priori estimates.}
We will now justify the approximation scheme \eqref{eq:approx-scheme}. To this end we shall derive several a priori estimates that will show that $\cF$ can indeed be approximated by $\cF^{\text{sim}}$ when one considers the minimization procedure. 

The main idea is the following: since we are considering a dilute gas, it is feasible to assume that the minimizer should to leading order behave like the one of the non-interacting problem
\begin{equation} \label{eq:nonintefunct}
\mathcal{F}_{0}(\gamma)=(2\pi)^{-2}\int p^2\gamma(p)dp-TS(\gamma,0).
\end{equation}
Its minimizer for given $\rho$ is
\begin{equation}
\label{sommin}
\gamma_{\mu(\rho)}(p)=\frac{1}{e^{(p^2-\mu(\rho))/T}-1},
\end{equation}
where $\mu(\rho)\leq0$ is such that $(2\pi)^{-2}\int\gamma_{\mu(\rho)}=\rho$. Since the integral diverges as $\mu(\rho)\to0$, this definition works for all $\rho\geq0$ (note the difference with respect to three spatial dimensions where a critical density occurs).

Let $\left(\gamma,\alpha,\rho_0=\rho-\rho_\gamma\right)$ be a minimizing triple for \eqref{def:canonicalfreeenergyfunctional} at a temperature $T$. Using the bound $\widehat{V}(p)\leq \widehat{V}(0)$ we find the following upper bound in terms of the free gas energy $\cF_0$
\begin{align}\label{eq:upperbound}
\cF^{\rm{can}}(\gamma,\alpha,\rho_0)\leq\ &\cF^{\rm{can}}(\gamma_{\mu(\rho)},0,0)\leq
\cF_0(\gamma_{\mu(\rho)})+\rho^2\widehat{V}(0).
\end{align}
We also have
\begin{equation}\label{eq:lowerbound}
\begin{aligned}
  \cF^{\rm{can}}(\gamma,\alpha,\rho_0)\geq\ &
  \cF_0(\gamma)+\frac12\widehat{V}(0)\rho^2+
  \rho_0(2\pi)^{-2}\int\widehat{V}(p)\gamma(p)dp \\&+
  \frac12(2\pi)^{-4}\iint\gamma(p)\widehat{V}(p-q)\gamma(q)dpdq
  -\frac12 \rho_0^2\widehat{V}(0),
\end{aligned}
\end{equation}
where we have first used that the entropy decreases if we replace $\alpha$
by 0 and then minimized over $\alpha$, finding the minimizer
$\alpha=-(2\pi)^{2}\rho_0\delta_0$. We conclude  
\begin{equation}\label{eq:1stenergy}
\cF_0(\gamma_{\mu(\rho_\gamma)})\leq \cF_0(\gamma)\leq \cF_0(\gamma_{\mu(\rho)})+\rho^2\widehat{V}(0).
\end{equation}
We will use this to give an estimate on the integral of $\gamma$ in a
region $|p|>\xi$, where $\xi$ is to be chosen below.  We shall use the following result whose proof is identical as in the three dimensional case (\cite[Lemma 18]{NapReuSol-15b}).

\begin{lemma}[A priori kinetic energy bound]\label{lm:apriorikinetic}
If for some $Y>0$ the function $\gamma$ satisfies
$\cF_0(\gamma)\leq \cF_0(\gamma_{\mu(\rho_\gamma)})+Y$, then for all $\xi$ with $\xi^2>8T$ we have 
$$
\frac12  (2\pi)^{-2}\int_{|p|>\xi} p^2\gamma(p)dp \leq Y+CT^{2}e^{-\xi^2/4T}.
$$
\end{lemma}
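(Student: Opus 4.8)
The plan is to compare $\gamma$ with the free-gas minimizer $\gamma_{\mu(\rho_\gamma)}$ more quantitatively on the high-momentum region. First I would unpack the hypothesis $\cF_0(\gamma)\leq\cF_0(\gamma_{\mu(\rho_\gamma)})+Y$ using the explicit form of $\cF_0$. Writing $\mu=\mu(\rho_\gamma)\le 0$, a convexity argument (the function $\gamma\mapsto (2\pi)^{-2}\int p^2\gamma - TS(\gamma,0)$ is convex and is minimized at $\gamma_\mu$ subject to the mass constraint $\int\gamma=\rho_\gamma$) gives that the ``relative free energy'' is nonnegative and controls the Kullback–Leibler-type deviation of $\gamma$ from $\gamma_\mu$. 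Concretely, since $\gamma_\mu$ is the unconstrained minimizer of $\gamma\mapsto (2\pi)^{-2}\int (p^2-\mu)\gamma - TS(\gamma,0)$ with the same mass as $\gamma$, one has
\begin{equation}
(2\pi)^{-2}\int\big[(p^2-\mu)(\gamma-\gamma_\mu)\big]dp - T\big(S(\gamma,0)-S(\gamma_\mu,0)\big)\ \ge\ 0,
\end{equation}
and moreover this quantity equals $\cF_0(\gamma)-\cF_0(\gamma_\mu)-\mu\,(2\pi)^{-2}\!\int(\gamma-\gamma_\mu) = \cF_0(\gamma)-\cF_0(\gamma_\mu)\le Y$ because the masses agree. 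Expanding the entropy difference around $\gamma_\mu$ and using the pointwise convexity of $s(\cdot,0)$, the left-hand side is pointwise nonnegative with integrand bounded below by the ``first-order Taylor remainder''; in particular the full integrand, restricted to $|p|>\xi$, is bounded below by something like $(2\pi)^{-2}\big[(p^2-\mu) - T\,s'(\gamma_\mu(p))\big](\gamma(p)-\gamma_\mu(p))$ plus a nonnegative remainder, and $T s'(\gamma_\mu(p)) = p^2-\mu$ exactly, so the linear term vanishes and we are left with the remainder being $\le Y$.

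Next I would turn the control of this remainder into control of $\int_{|p|>\xi}p^2\gamma$. The key point is that on $|p|>\xi$ with $\xi^2>8T$ the free distribution $\gamma_\mu(p)=(e^{(p^2-\mu)/T}-1)^{-1}\le (e^{p^2/T}-1)^{-1}$ is exponentially small, so the remainder term is, up to exponentially small corrections, comparable to $(2\pi)^{-2}\int_{|p|>\xi}\big[(p^2-\mu)\gamma(p) - T s(\gamma(p),0) + T s(\gamma_\mu(p),0)\big]dp$, and for $\gamma$ small the entropy $-Ts(\gamma,0)$ is positive, while for $\gamma$ large $(p^2-\mu)\gamma$ dominates $T s(\gamma,0)\sim T\gamma\ln(1/\gamma)$ as soon as $p^2-\mu > T\ln(1/\gamma) + O(T)$. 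Splitting $|p|>\xi$ into the part where $\gamma(p)\le \gamma_\mu(p)$ (trivially controlled) and where $\gamma(p)> \gamma_\mu(p)$, on the latter $p^2-\mu \ge \xi^2 > 8T$ gives a gap large enough that $(p^2-\mu)\gamma - Ts(\gamma,0) \ge \tfrac12(p^2-\mu)\gamma \ge \tfrac12 p^2\gamma$ pointwise (after absorbing the entropy and the $-\mu$ terms, which is where the factor-of-two slack and the $\xi^2>8T$ hypothesis are spent). Integrating this inequality and using the remainder bound $\le Y$ yields $\tfrac12(2\pi)^{-2}\int_{|p|>\xi}p^2\gamma \le Y + (2\pi)^{-2}\int_{|p|>\xi}p^2\gamma_\mu + (\text{entropy of }\gamma_\mu\text{ tail})$, and both of the latter are bounded by $CT^2 e^{-\xi^2/4T}$ by an explicit estimate on the Gaussian-type tail $\int_{|p|>\xi}p^2(e^{p^2/T}-1)^{-1}dp$ in two dimensions.

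The routine part is the explicit tail estimate $\int_{|p|>\xi}p^2(e^{p^2/T}-1)^{-1}dp \le CT^2 e^{-\xi^2/4T}$ for $\xi^2>8T$: substituting $u=p^2/T$, the integral becomes $\tfrac12 T^2\int_{\xi^2/T}^\infty u(e^u-1)^{-1}du \le T^2\int_{\xi^2/T}^\infty u e^{-u}du = T^2 e^{-\xi^2/T}(1+\xi^2/T)$, which is $\le CT^2 e^{-\xi^2/4T}$, and the same bound handles $\int_{|p|>\xi} T s(\gamma_\mu(p),0)\,dp$ since $s(\gamma,0)\sim \gamma$ for small $\gamma$. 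The step I expect to be the main obstacle is the pointwise entropy comparison: making precise the claim that on $|p|>\xi$ the integrand $(p^2-\mu)\gamma(p) - Ts(\gamma(p),0) + Ts(\gamma_\mu(p),0)$ dominates $\tfrac12 p^2\gamma(p)$ uniformly in $\gamma(p)\in[0,\infty)$ requires a careful case analysis of the scalar function $x\mapsto (p^2-\mu)x - Tx\ln(1+1/x) + \ldots$ (using $s(x,0)=(x+\tfrac12)\ln(x+\tfrac12)-(x-\tfrac12)\ln(x-\tfrac12)$ and the fact that $s(x,0)\le \ln(x) + O(1)$ for large $x$ and $s(x,0)\le Cx$ for small $x$), and getting the constants to line up with the stated $\xi^2>8T$ threshold; since the three-dimensional analogue is \cite[Lemma 18]{NapReuSol-15b}, I would follow that argument essentially verbatim, only checking that the two-dimensional phase-space volume $\sim \xi^{n}$ with $n=2$ does not spoil the exponential gain, which it does not.
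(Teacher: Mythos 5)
Your proposal is essentially the paper's own proof: the paper gives no independent argument but states that the proof is identical to the three-dimensional case \cite[Lemma 18]{NapReuSol-15b}, and that argument is precisely the scheme you describe — pointwise nonnegativity of $(p^2-\mu)\gamma(p)-Ts(\gamma(p),0)$ minus its value at $\gamma_{\mu}(p)$ (so the hypothesis localizes to $|p|>\xi$), domination of $\tfrac12 p^2\gamma$ on the tail up to exponentially small terms, and a Gaussian tail estimate absorbed into $CT^2e^{-\xi^2/4T}$ using $\xi^2>8T$. One small inexactness: your pointwise claim $(p^2-\mu)\gamma-Ts(\gamma,0)\geq\tfrac12(p^2-\mu)\gamma$ on $\{\gamma>\gamma_\mu\}$ fails near $\gamma=\gamma_\mu$ (and note $s(x,0)=(x+1)\ln(x+1)-x\ln x \sim x\ln(1/x)$ for small $x$, not $O(x)$); the clean statement is the Gibbs/Legendre bound $Ts(\gamma,0)\leq\tfrac{p^2}{2}\gamma-T\ln\bigl(1-e^{-p^2/2T}\bigr)$, whose exponentially small correction is exactly the term you reinstate in the final estimate, so the conclusion stands.
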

 
Using $\cF_0(\gamma_{\mu(\rho_\gamma)})\geq \cF_0(\gamma_{\mu(\rho)})$ in \eqref{eq:1stenergy}, we can use this lemma with $Y=\rho^2\widehat{V}(0)$ to conclude that 
\bq
\begin{aligned}
\iint_{|p-q|>2\xi} \gamma(p)\widehat{V}(p-q)\gamma(q)dpdq
\leq\ & C \widehat{V}(0)\rho\int_{|p|>\xi}\gamma(p)dp \\
\leq \ &C \widehat{V}(0)\rho(\rho^2\widehat{V}(0)+T^{2}e^{-\xi^2/4T})\xi^{-2}.\label{eq:gvg1}
\end{aligned}
\eq
We choose $\xi=a^{-1}(\rho^{1/2} a)^{1/2}$. Then, using \eqref{eq:apriori2Dtempbound},  $\xi^2/T\geq C^{-1}(\rho^{1/2} a)^{-1/2}\gg1$ and we find 
\begin{equation}
\label{lem:derivsimplified}
\iint_{|p-q|>2\xi}
\gamma(p)\widehat{V}(p-q)\gamma(q)dpdq\leq
C\rho^3\widehat{V}(0)^2\xi^{-2}\leq C\rho^2 (\rho^{1/2} a).
\end{equation}
Of course, the same bound holds if $\widehat{V}(p-q)$ is replaced by $\widehat{V}(0)$.
On the other hand we also have
\begin{equation}
\label{estm2}
\iint_{|p-q|<2\xi} \gamma(p)|\widehat{V}(p-q)-\widehat{V}(0)|\gamma(q)dpdq
\leq 
C\xi^2 \rho^2 a^2 \leq C\rho^2 (\rho^{1/2} a).
\end{equation}
where we used \eqref{eq:Vexpansion}. For the same choice of $\xi$:
\begin{equation}
\begin{aligned}
\left|\int\gamma(p)\widehat{V}(p)dp \right.&\left. 
-\widehat{V}(0)\int\gamma(p)dp\right| \leq \left|\left(\int\limits_{|p|\leq \xi}+\int\limits_{|p|> \xi}\right)\gamma(p)\left(\widehat{V}(p)
- \widehat{V}(0)\right)dp\right| \\
 &\leq C\xi^2 a^2  \int_{|p|\leq \xi}\gamma(p)dp+C\widehat{V}(0)\xi^{-2}\int_{|p|> \xi}p^2 \gamma(p)dp\\
&\leq C\rho a^2 \xi^2+C \xi^{-2}(\rho^2 +T^{2}e^{-\xi^2/4T})\leq C\rho (\rho^{1/2} a). \label{estimongammV}
\end{aligned}  
\end{equation}
The same bounds hold for $\widehat{Vw}$  which, because $V$ is compactly supported and smooth, is well defined and smooth. We have thus shown the following result.

\begin{proposition}[A priori estimates]
\label{lem:aprioriest}
Any minimizing triple $(\gamma,\alpha,\rho_0)$ with density $\rho=\rho_\gamma+\rho_0$ and temperature $T$
satisfying $T<D \rho$ obeys the estimates
$$
\left|(2\pi)^{-4}\iint\gamma(p)\widehat{V}(p-q)\gamma(q)dpdq
- \widehat{V}(0)\rho_\gamma^2\right|\leq C\rho^2 (\rho^{1/2} a),
$$
$$
\left|(2\pi)^{-2}\int\gamma(p)\widehat{V}(p)dp
- \widehat{V}(0)\rho_\gamma\right|\leq C\rho (\rho^{1/2} a),
$$
where the constant $C$ depends on $D$ and the potential $V$. This second inequality holds also with $\widehat{V}$ replaced by $\widehat{Vw}$.
\end{proposition}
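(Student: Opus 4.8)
The plan is to deduce the two estimates from a single energy input — the sandwich \eqref{eq:1stenergy} — fed into the a priori kinetic energy bound of Lemma~\ref{lm:apriorikinetic}, followed by a routine splitting of the $p$- and $(p,q)$-integrals at a well-chosen scale. First I would establish that the minimizing $\gamma$ satisfies the hypothesis of Lemma~\ref{lm:apriorikinetic} with $Y=\rho^2\widehat{V}(0)$: testing $\mathcal{F}^{\rm{can}}$ on $(\gamma_{\mu(\rho)},0,0)$ and using $\widehat{V}(p)\leq\widehat{V}(0)$ gives the upper bound \eqref{eq:upperbound}, while discarding the (nonnegative, by positivity of $\widehat{V}$) convolution terms and minimizing over $\alpha$ gives the lower bound \eqref{eq:lowerbound}; subtracting the $\rho$-dependent constants yields $\mathcal{F}_0(\gamma_{\mu(\rho_\gamma)})\leq\mathcal{F}_0(\gamma)\leq\mathcal{F}_0(\gamma_{\mu(\rho)})+\rho^2\widehat{V}(0)$, and since $\rho_\gamma\leq\rho$ and the ideal free energy at its own minimizer is monotone in the density one has $\mathcal{F}_0(\gamma_{\mu(\rho)})\geq\mathcal{F}_0(\gamma_{\mu(\rho_\gamma)})$, so indeed $\mathcal{F}_0(\gamma)\leq\mathcal{F}_0(\gamma_{\mu(\rho_\gamma)})+\rho^2\widehat{V}(0)$.

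Next I fix $\xi=a^{-1}(\rho^{1/2}a)^{1/2}$, which under \eqref{eq:apriori2Dtempbound} satisfies $\xi^2/T\gg1$, and split the double integral according to whether $|p-q|$ exceeds $2\xi$. On $\{|p-q|>2\xi\}$ one of $|p|,|q|$ must exceed $\xi$, so bounding $\widehat{V}(p-q)\leq\widehat{V}(0)$ and estimating $\int_{|p|>\xi}\gamma\leq\xi^{-2}\int_{|p|>\xi}p^2\gamma\leq C\xi^{-2}(\rho^2\widehat{V}(0)+T^2e^{-\xi^2/4T})$ via Lemma~\ref{lm:apriorikinetic} produces a contribution $\leq C\rho^3\widehat{V}(0)^2\xi^{-2}=C\rho^2(\rho^{1/2}a)$ as in \eqref{lem:derivsimplified} — and the same bound holds with $\widehat{V}(p-q)$ replaced by $\widehat{V}(0)$. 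On $\{|p-q|<2\xi\}$ the expansion \eqref{eq:Vexpansion} gives $|\widehat{V}(p-q)-\widehat{V}(0)|\leq C\xi^2a^2$, so that region contributes $\leq C\xi^2a^2\rho^2=C\rho^2(\rho^{1/2}a)$ as in \eqref{estm2}; adding the two regions proves the first inequality. The second inequality is handled the same way: split $\int\gamma(\widehat{V}-\widehat{V}(0))$ at $|p|=\xi$, bound the low part by $C\xi^2a^2\int_{|p|\leq\xi}\gamma\leq C\rho a^2\xi^2$ using \eqref{eq:Vexpansion} and the high part by $\widehat{V}(0)\xi^{-2}\int_{|p|>\xi}p^2\gamma$ using Lemma~\ref{lm:apriorikinetic}, both $\leq C\rho(\rho^{1/2}a)$, as in \eqref{estimongammV}.

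For the final assertion about $\widehat{Vw}$, I would note that $w=2bw_0$ with $w_0$ solving the scattering equation \eqref{eq:scattering}, so $Vw$ is smooth and compactly supported (supported in $B(0,R)$); hence $\widehat{Vw}$ is smooth, $\widehat{Vw}(0)=8\pi b$, and it admits an expansion of the same form as \eqref{eq:Vexpansion} near the origin. The proof of the second inequality then goes through verbatim with $\widehat{V}$ replaced by $\widehat{Vw}$.

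The only step that needs genuine care is the first one: one must identify the \emph{correct} energy excess $Y=\rho^2\widehat{V}(0)$, which hinges on the positivity of $V$ and $\widehat{V}$ (to discard the interaction terms with the right sign in \eqref{eq:lowerbound}) and on the monotonicity $\mathcal{F}_0(\gamma_{\mu(\rho_\gamma)})\leq\mathcal{F}_0(\gamma_{\mu(\rho)})$. Once this is in place, everything reduces to the splitting-and-tail-bound computation driven by the choice of $\xi$, which is routine.
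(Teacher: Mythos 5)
Your proposal follows essentially the same route as the paper: the sandwich \eqref{eq:upperbound}--\eqref{eq:1stenergy}, Lemma \ref{lm:apriorikinetic} with $Y=\rho^2\widehat{V}(0)$, the cut at $\xi=a^{-1}(\rho^{1/2}a)^{1/2}$, and the splittings leading to \eqref{lem:derivsimplified}, \eqref{estm2} and \eqref{estimongammV}, including the remark that $\widehat{Vw}$ is smooth and obeys \eqref{eq:Vexpansion}. One small slip: you wrote the monotonicity as $\mathcal{F}_0(\gamma_{\mu(\rho)})\geq\mathcal{F}_0(\gamma_{\mu(\rho_\gamma)})$, which is both the wrong direction of the true fact and insufficient for the deduction you draw from it; since $\partial_\rho F_0=\mu(\rho)\leq 0$ the ideal free energy is non-increasing in the density, so $\mathcal{F}_0(\gamma_{\mu(\rho_\gamma)})\geq\mathcal{F}_0(\gamma_{\mu(\rho)})$, and it is this inequality (as used in the paper) that upgrades \eqref{eq:1stenergy} to the hypothesis $\mathcal{F}_0(\gamma)\leq\mathcal{F}_0(\gamma_{\mu(\rho_\gamma)})+\rho^2\widehat{V}(0)$ of Lemma \ref{lm:apriorikinetic}. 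With that orientation corrected, the argument is complete and matches the paper's.
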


\subsection{Minimization of the simplified functional.} 
Since $E_1$ is non-negative, it follows from Lemma \ref{lem:FsimFcandiff} that for any triple $(\gamma,\alpha,\rho_0)$ we have
$$\mathcal{F}^{\rm{sim}}(\gamma,\alpha,\rho_0)+E_2+E_3+E_4\leq\mathcal{F}^{\rm{can}}(\gamma,\alpha,\rho_0)=\mathcal{F}^{\rm{sim}}(\gamma,\alpha,\rho_0)+E_1+E_2+E_3+E_4.$$
The first inequality together with the a priori estimates in Lemma \ref{lem:aprioriest} implies that any potential minimizer satisfies 
\begin{equation}\label{eq:apriorilowerboundfull}
\begin{aligned}
\cF^{\rm can}(\gamma,\alpha,\rho_0)&\geq\cF^{\rm sim}(\gamma,\alpha,\rho_0)+(E_2+E_3+E_4)(\gamma,\alpha,\rho_0)\\
&\geq\cF^{\rm sim}(\gamma^{\rho_0,\delta},\alpha^{\rho_0,\delta},\rho_0)-O(\rho^2(\rho^{1/2} a)).\end{aligned}
\end{equation}
where $(\gamma^{\rho_0,\delta},\alpha^{\rho_0,\delta},\rho_0)$ is the minimizing triple for the simplified functional. Obviously, for a minimizing triple of the canonical functional $(\gamma,\alpha,\rho_0)$ we also have 
\begin{equation} \label{eq:upperbounderrorterms}
\mathcal{F}^{\rm{can}}(\gamma,\alpha,\rho_0)\leq \mathcal{F}^{\rm{sim}}(\gamma^{\rho_0,\delta},\alpha^{\rho_0,\delta},\rho_0)+(E_1+E_2+E_3+E_4)(\gamma^{\rho_0,\delta},\alpha^{\rho_0,\delta},\rho_0).
\end{equation}
Because of that, our goal will now be to minimize the simplified functional. We note that the minimization problem can be rewritten as 
\bq
\begin{aligned}
&\inf_{(\gamma,\alpha,\rho_0),\ \rho_\gamma+\rho_0=\rho}\mathcal{F}^{\rm sim}(\gamma,\alpha,\rho_0)
=\inf_{0\leq\rho_0\leq\rho}\Big[\inf_{(\gamma,\alpha),\ \rho_\gamma=\rho-\rho_0}\mathcal{F}^{\rm s}(\gamma,\alpha,\rho_0) +\widehat{V}(0)(\rho^2-\rho_0^2) \\  &\qquad  +4\pi b(\rho_0+t_0)(3\rho_0-2\rho-t_0) \Big],
\end{aligned} \label{eq:simplifiedfinal}
\eq
with
\begin{equation*}
\begin{aligned}
\mathcal{F}^{\rm{s}}(\gamma,\alpha,\rho_0)&=(2\pi)^{-2}\left[\int p^2\gamma(p)dp+(\rho_0+t_0)\int\widehat{Vw}(p)(\gamma(p)+\alpha(p))dp\right]-TS(\gamma,\alpha)\\
&+(2\pi)^{-2}(\rho_0+t_0)^2\int\frac{\widehat{Vw}(p)^2-\chi_{|p|\leq \epsilon}\widehat{Vw}(0)^2}{4p^2}dp.
\end{aligned}
\end{equation*}
The advantage of considering $\cF^{\rm s}$ is that it can be solved explicitly. To this end let us define
\begin{eqnarray*}
G(p)&=&T^{-1}\sqrt{(p^2+\delta+(\rho_0+t_0)
  \widehat{Vw}(p))^2-((\rho_0+t_0) \widehat{Vw}(p))^2}\\&=&
T^{-1}\sqrt{(p^2+\delta)^2+
  2(p^2+\delta)(\rho_0+t_0)\widehat{Vw}(p)}.
\end{eqnarray*}
We have the following lemma whose proof is exactly the same as in the three dimensional case.
\begin{lemma} 
\label{prop:simplfunctsol}
Let $\delta\geq0$, $\rho_0\geq0$ and $-\rho_0\leq t_0 \leq 0$. The minimizer of 
\[
\inf_{(\gamma,\alpha)}\left[\mathcal{F}^{\rm s}(\gamma,\alpha,\rho_0)+\delta\int\gamma\right]
\]
is given by
\bq
\begin{aligned}
\gamma^{\rho_0,\delta}&=\frac{\beta}{TG}(p^2+\delta+(\rho_0+t_0) \widehat{Vw}(p))-\frac12 \\
\alpha^{\rho_0,\delta}&=-\frac{\beta}{TG}(\rho_0+t_0)\widehat{Vw}(p),
\end{aligned} \nn
\eq
with $G$ as above and 
$$\beta(p)=(e^{G(p)}-1)^{-1}+\frac12.$$
The minimum is
\[
\begin{aligned}
\cF^{\rm s}&(\gamma^{\rho_0,\delta},\alpha^{\rho_0,\delta},\rho_0)+\delta\int\gamma^{\rho_0,\delta} =\quad (2\pi)^{-2} T\int\ln(1-e^{-G(p)})dp \\
&+(2\pi)^{-2} \frac12\int\Big[\sqrt{(p^2+\delta)^2+2(p^2+\delta)(\rho_0+t_0) \widehat{Vw}(p)}-(p^2+\delta+(\rho_0+t_0) \widehat{Vw}(p))\\
&+(2\pi)^{-2}(\rho_0+t_0)^2\int\frac{\widehat{Vw}(p)^2-\chi_{|p|\leq \epsilon}\widehat{Vw}(0)^2}{4p^2}dp\Big]dp.
\end{aligned}
\]
\end{lemma}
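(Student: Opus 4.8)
The plan is to exploit the fact that, once the term $\delta\int\gamma$ is added, the functional $\mathcal{F}^{\rm s}$ is completely free of convolutions and therefore decouples fibrewise in $p$. Writing
\[
A(p):=p^2+\delta+(\rho_0+t_0)\widehat{Vw}(p),\qquad B(p):=(\rho_0+t_0)\widehat{Vw}(p),
\]
one checks directly that
\[
\mathcal{F}^{\rm s}(\gamma,\alpha,\rho_0)+\delta\int\gamma=(2\pi)^{-2}\int_{\mathbb{R}^2}\big[A(p)\gamma(p)+B(p)\alpha(p)-Ts(\gamma(p),\alpha(p))\big]\,dp+c,
\]
where $c=(2\pi)^{-2}(\rho_0+t_0)^2\int\frac{\widehat{Vw}(p)^2-\chi_{|p|\leq\epsilon}\widehat{Vw}(0)^2}{4p^2}dp$ is independent of $(\gamma,\alpha)$. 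Hence it suffices to minimize, for a.e.\ fixed $p$, the scalar function $g_p(\gamma,\alpha):=A\gamma+B\alpha-Ts(\gamma,\alpha)$ over $\{\gamma\ge0,\ \alpha^2\le\gamma(\gamma+1)\}$, and afterwards to check that the resulting profile lies in $L^1((1+p^2)\,dp)$ and has finite energy. This is the same scheme as in \cite{NapReuSol-15b}.

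For the fibrewise problem I would first record the elementary facts that the constraint $\alpha^2\le\gamma(\gamma+1)$ forces $\beta(p)\ge\tfrac12$, that $A(p)\ge|B(p)|$ with strict inequality off the null set $\{p^2+\delta=0\}$, and that $A(p)^2-B(p)^2=(p^2+\delta)^2+2(p^2+\delta)(\rho_0+t_0)\widehat{Vw}(p)=(TG(p))^2\ge0$, all of which rely on $\rho_0+t_0\ge0$, $\delta\ge0$, and positivity of $\widehat{Vw}$ on the relevant range. Since the quasi-free entropy $s(\gamma,\alpha)$ is concave (as in \cite{NapReuSol-15a}), $g_p$ is convex; for $p^2+\delta>0$ it is coercive, because its linear part grows linearly in $\gamma$ while $s$ grows only logarithmically, and since $s'(\beta)=\ln\frac{\beta+1/2}{\beta-1/2}\to+\infty$ as $\beta\downarrow\tfrac12$ the minimizer is attained at a unique \emph{interior} point, characterised by $A=Ts'(\beta)\frac{\gamma+1/2}{\beta}$ and $B=-Ts'(\beta)\frac{\alpha}{\beta}$. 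Taking the ratio of these gives $\alpha/(\gamma+\tfrac12)=-B/A$; combined with $\beta^2=(\gamma+\tfrac12)^2-\alpha^2$ this yields $\gamma+\tfrac12=\beta A/(TG)$ and $\alpha=-\beta B/(TG)$, and feeding this back into the first equation gives $s'(\beta)=G$, i.e.\ $\beta=(e^{G}-1)^{-1}+\tfrac12$. This is precisely the pair $(\gamma^{\rho_0,\delta},\alpha^{\rho_0,\delta})$ in the statement, it is measurable in $p$ (being an explicit function of $G$ and $\widehat{Vw}$), it lies in $\mathcal{D}'$ because $G\ge0$ gives $\beta\ge\tfrac12$, $A\ge\sqrt{A^2-B^2}$ gives $\gamma^{\rho_0,\delta}+\tfrac12\ge\beta\ge\tfrac12$ (so $\gamma^{\rho_0,\delta}\ge0$), and $(\gamma^{\rho_0,\delta}+\tfrac12)^2-(\alpha^{\rho_0,\delta})^2=\beta^2\ge\tfrac14$; and since $G(p)\sim p^2/T$ at infinity, $\gamma^{\rho_0,\delta}(p)=O(e^{-p^2/T})$, so it has the required integrability.

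It then remains to substitute the minimizer back. Using $\beta+\tfrac12=\frac{e^G}{e^G-1}$, $\beta-\tfrac12=\frac1{e^G-1}$ and $A^2-B^2=(TG)^2$, a short computation gives the two identities $A\gamma^{\rho_0,\delta}+B\alpha^{\rho_0,\delta}=\beta\,TG-\tfrac12A$ and $\beta\,TG-Ts(\beta)=\tfrac12TG+T\ln(1-e^{-G})$, so the minimal value of $g_p$ equals $T\ln(1-e^{-G})+\tfrac12\big(\sqrt{A^2-B^2}-A\big)$; integrating $(2\pi)^{-2}\int\cdot\,dp$ and restoring the constant $c$ reproduces the stated formula, with $A^2-B^2$ and $A$ written out as above. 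I expect the only genuinely delicate points to be (i) justifying the fibrewise reduction together with the finiteness of the minimal energy — here one uses $\widehat{Vw}(p)\to0$ (Riemann--Lebesgue, $Vw$ being continuous and compactly supported) and the large-$p$ cancellation $\sqrt{A^2-B^2}-A=-B^2/\big(A+\sqrt{A^2-B^2}\big)=O(p^{-2})$, which pairs with the $\widehat{Vw}(p)^2/(4p^2)$ term inside $c$ to make the whole integrand integrable — and (ii) importing the concavity of the entropy from \cite{NapReuSol-15a}; everything else is a one-dimensional convex minimization solved by its first-order conditions, and all of this is already carried out in the three-dimensional analogue in \cite{NapReuSol-15b}.
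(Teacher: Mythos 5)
Your argument is correct and is essentially the paper's: the paper simply invokes the three-dimensional case (\cite[Lemma 14 and its proof]{NapReuSol-15b}), and your fibrewise reduction, convexity of $\gamma,\alpha\mapsto A\gamma+B\alpha-Ts(\gamma,\alpha)$, first-order conditions giving $s'(\beta)=G$, and back-substitution reproduce exactly that argument together with the new constant term carrying the $\chi_{|p|\le\epsilon}$ cut-off. Only a cosmetic slip: the tail of $\gamma^{\rho_0,\delta}$ is not $O(e^{-p^2/T})$ but is dominated by $\tfrac12\bigl(A/(TG)-1\bigr)=B^2/\bigl(2TG(A+TG)\bigr)\sim \widehat{Vw}(p)^2/(2p^4)$, which still decays rapidly (smooth, compactly supported $Vw$), so the integrability and finiteness conclusions stand.
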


\section{Ground state energy expansion - proof of Theorem \ref{thm:mainresult}} \label{sec:2DfreeenergylowT}
In this section we will prove the main result of this paper. We divide the proof into two parts. First, we will derive the leading order estimates from the analysis of $\cF^{\rm{sim}}$. In the second part, we will show that the error terms in \eqref{eq:upperbounderrorterms} are indeed of lower order.
\subsection{Analysis of $\cF^{\rm{sim}}$.}  We will analyze the simplified functional in the ground state, which implies in particular $T=0$ (to be more precise, after computing the relevant quantities we take the limit $T\to 0$). As in the three dimensional analysis, we set $t_0=0$. Under these conditions, the expression in \eqref{eq:simplifiedfinal} can be rewritten  in the following way
\bq
\begin{aligned}
&\inf_{(\gamma,\alpha,\rho_0),\ \rho_\gamma+\rho_0=\rho}\mathcal{F}^{\rm sim}(\gamma,\alpha,\rho_0)
=\inf_{0\leq\rho_0\leq\rho}\Big[ 4\pi b\rho^2\left(3\frac{\rho_0^2}{\rho^2}-2\frac{\rho_0}{\rho}\right)+\widehat{V}(0)\rho^2\left(1-\frac{\rho_0^2}{\rho^2}\right) \\  &\qquad\qquad+\inf_{(\gamma,\alpha),\ \rho_\gamma=\rho-\rho_0}\mathcal{F}^{\rm s}(\gamma,\alpha,\rho_0)\Big].
\end{aligned} \label{eq:simplifiedT=0t0}
\eq
Using  Lemma \ref{prop:simplfunctsol} at $T=0$ we have 
\begin{equation*}\label{eq:cFrmstoanalyze}
\begin{aligned} 
&\inf_{(\gamma,\alpha),\ \rho_\gamma=\rho-\rho_0}\mathcal{F}^{\rm s}(\gamma,\alpha,\rho_0)= -\delta\rho_{\gamma^{\rho_0,\delta}}+(2\pi)^{-2}\rho_0^2\int\frac{\widehat{Vw}(p)^2-\chi_{|p|\leq \epsilon}\widehat{Vw}(0)^2}{4p^2}dp \\
&+ (2\pi)^{-2}\frac{1}{2}\int\Big[\sqrt{(p^2+\delta)^2+2(p^2+\delta)\rho_0 \widehat{Vw}(p)} -(p^2+\delta+\rho_0 \widehat{Vw}(p))\Big]dp .
\end{aligned}
\end{equation*}
We set $\delta=d\rho_0 b$. After a change of variables ($p\mapsto p\sqrt{\rho_0 b}$) we get
\begin{equation}
\begin{aligned} \label{eq:FsT=0}
&\inf_{(\gamma,\alpha),\ \rho_\gamma=\rho-\rho_0}\mathcal{F}^{\rm s}(\gamma,\alpha,\rho_0)= -db\rho_0 \rho_{\gamma^{\rho_0,\delta}}+(2\pi)^{-2}\rho_0^2\int\frac{\widehat{Vw}(\sqrt{\rho_0 b}p)^2-\chi_{|p|\leq \frac{\epsilon}{\sqrt{\rho_0 b}}}\widehat{Vw}(0)^2}{4p^2}dp \\
&+\frac{(\rho_0 b)^2}{2(2\pi)^2}\int\Big[\sqrt{(p^2+d)^2+2(p^2+d)8\pi\frac{\widehat{Vw}(\sqrt{\rho_0 b}p)}{8\pi b}} -(p^2+d+ 8\pi\frac{\widehat{Vw}(\sqrt{\rho_0 b}p)}{8\pi b})\Big]dp .
\end{aligned}
\end{equation}
Similarly,
\begin{equation}
\begin{aligned}
\label{eq:rhogammasimplified}
\rho_{\gamma^{\rho_0,\delta}}&=(2\pi)^{-2}\frac{\rho_0 b}{2}\int\left(\frac{p^2+d+8\pi\frac{\widehat{Vw}(\sqrt{\rho_0 b}p)}{8\pi b}}{\sqrt{(p^2+d)^2+2(p^2+d)8\pi \frac{\widehat{Vw}(\sqrt{\rho_0 b}p)}{8\pi b}}}-1\right)dp.
\end{aligned}
\end{equation}
\begin{lemma}
Let $\rho_{\gamma^{\rho_0,\delta}}$ be given as in \eqref{eq:rhogammasimplified}. Then
\begin{equation} \label{eq:rhogammaasymp}
\rho_{\gamma^{\rho_0,\delta}} =\frac{\rho_0 b}{4\pi}\left[4\pi-\frac12\left(\sqrt{d(d+16\pi)}-d\right)\right]+ o(\rho_0 b)=: C(d) \rho_0 b + o(\rho_0 b)
\end{equation}
where $C(d)\in [0,1]$ for $d\geq 0$ and  $C(d)= O(\frac1d)$ as $d\to\infty$.
\end{lemma}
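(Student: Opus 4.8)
The plan is to let $\rho_0 b\to0$ inside the integral \eqref{eq:rhogammasimplified}, identify the limiting integrand, evaluate the resulting integral in closed form, and finally record the monotonicity and decay of $C(d)$. Introduce the fixed (i.e.\ $b$-independent) function $f(q):=\widehat{Vw}(q)/(8\pi b)=\widehat{Vw_0}(q)/(4\pi)$, where $w_0$ solves \eqref{eq:scattering}; since $Vw_0$ is integrable with compact support, $f$ is smooth and bounded together with its derivatives, and $f(0)=1$ by \eqref{eq:hatVphi0}. With
\[
g(p,s):=\frac{p^2+d+8\pi s}{\sqrt{(p^2+d)^2+16\pi(p^2+d)s}}-1,
\]
the integrand in \eqref{eq:rhogammasimplified} is precisely $g\big(p,f(\sqrt{\rho_0 b}\,p)\big)$, so that $\rho_{\gamma^{\rho_0,\delta}}=\rho_0 b\,(2\pi)^{-2}\tfrac12\int_{\R^2}g\big(p,f(\sqrt{\rho_0 b}\,p)\big)\,dp$. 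For each fixed $p$ one has $f(\sqrt{\rho_0 b}\,p)\to f(0)=1$, hence $g\big(p,f(\sqrt{\rho_0 b}\,p)\big)\to g(p,1)$ pointwise as $\rho_0 b\to0$, and \eqref{eq:rhogammaasymp} will follow from dominated convergence together with an explicit evaluation of $\int_{\R^2}g(p,1)\,dp$.

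The crux is the $\rho_0 b$-uniform domination. Fix $M$ with $M^2>32\pi\|f\|_\infty$. Since $f$ is Lipschitz with $f(0)=1$, we have $\sup_{|p|\le M}|f(\sqrt{\rho_0 b}\,p)-1|\le\|\nabla f\|_\infty M\sqrt{\rho_0 b}\to0$, so for $\rho_0 b$ small enough $f(\sqrt{\rho_0 b}\,p)\in[\tfrac12,2]$ on $\{|p|\le M\}$; there the radicand is $\ge(p^2+d)(p^2+d+8\pi)$, which is positive for $p\neq0$, and $|g(p,f(\sqrt{\rho_0 b}\,p))|\le C(1+|p|^{-1})$, which is integrable on $\{|p|\le M\}$ in two dimensions (the $|p|^{-1}$ term occurring only when $d=0$). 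On $\{|p|>M\}$ one has $p^2+d>16\pi\|f\|_\infty\ge16\pi|f(\sqrt{\rho_0 b}\,p)|$, so the radicand is positive, and a Taylor expansion of $g$ in $s/(p^2+d)$ gives $|g(p,s)|\le Cs^2(p^2+d)^{-2}\le C|p|^{-4}$ uniformly for $|s|\le\|f\|_\infty$, with $|p|^{-4}\in L^1(\{|p|>M\})$. Dominated convergence then yields $\rho_{\gamma^{\rho_0,\delta}}=\rho_0 b\,(2\pi)^{-2}\tfrac12\int_{\R^2}g(p,1)\,dp+o(\rho_0 b)$.

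It remains to evaluate the limiting integral. Passing to radial coordinates and substituting $u=|p|^2+d$ turns $(2\pi)^{-2}\tfrac12\int_{\R^2}g(p,1)\,dp$ into $\tfrac{1}{8\pi}\int_d^\infty\big(\tfrac{u+8\pi}{\sqrt{u^2+16\pi u}}-1\big)\,du$; using $\tfrac{d}{du}\big(\sqrt{u^2+16\pi u}-u\big)=\tfrac{u+8\pi}{\sqrt{u^2+16\pi u}}-1$ and $\sqrt{u^2+16\pi u}-u\to8\pi$ as $u\to\infty$, this equals
\[
\frac{1}{8\pi}\Big(8\pi-\big(\sqrt{d(d+16\pi)}-d\big)\Big)=\frac{1}{4\pi}\Big(4\pi-\tfrac12\big(\sqrt{d(d+16\pi)}-d\big)\Big),
\]
which is exactly the bracketed factor in \eqref{eq:rhogammaasymp}; this identifies $C(d)=\tfrac{1}{4\pi}\big(4\pi-\tfrac12(\sqrt{d(d+16\pi)}-d)\big)$ and proves the stated asymptotics.

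Finally, the properties of $C(d)$ are elementary: rationalizing, $\sqrt{d(d+16\pi)}-d=16\pi d\,/\,(\sqrt{d(d+16\pi)}+d)$ is nondecreasing on $[0,\infty)$ with values in $[0,8\pi)$, so $C(d)\in(0,1]$ with $C(0)=1$; and $\sqrt{d(d+16\pi)}=d+8\pi-32\pi^2 d^{-1}+O(d^{-2})$ as $d\to\infty$ gives $C(d)=4\pi d^{-1}+O(d^{-2})=O(1/d)$. The main obstacle is the domination step of the second paragraph---controlling the integrable $|p|^{-1}$ singularity of $g(\cdot,1)$ at the origin when $d=0$, and, since $f=\widehat{Vw_0}/(4\pi)$ need not be everywhere nonnegative, ensuring that the radicand stays positive on all of $\R^2$ once $\rho_0 b$ is small.
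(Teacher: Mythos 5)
Your proof is correct and follows essentially the same route as the paper: a dominated convergence argument to replace $\widehat{Vw}(\sqrt{\rho_0 b}\,p)/(8\pi b)$ by $1$ in the integrand, followed by an explicit evaluation of the limiting radial integral and elementary analysis of $C(d)$. The only difference is that you spell out the dominating function (the split at $|p|=M$ and the sign issue for $\widehat{Vw_0}$), whereas the paper delegates this step to the analogous Lemma 26 of \cite{NapReuSol-15b}.
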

\begin{proof}
Since $\frac{|\widehat{Vw}(\sqrt{\rho_0 b}p)|}{ 8\pi b}\leq 1$ and also converges to $1$ as $b\to 0$, we can use a dominated convergence argument as in  \cite[Lemma 26]{NapReuSol-15b}) to obtain
\begin{equation*} 
\begin{aligned}
\rho_{\gamma^{\rho_0,\delta}} &=\frac{\rho_0 b}{4\pi}\int_0^\infty\left(\frac{p^2+d+8\pi}{\sqrt{(p^2+d)^2+2(p^2+d)8\pi}}-1\right)pdp+ o(\rho_0 b)\\
&=\frac{\rho_0 b}{4\pi}\left[4\pi-\frac12\left(\sqrt{d(d+16\pi)}-d\right)\right]+ o(\rho_0 b)= C(d)\rho_0 b + o(\rho_0 b).
\end{aligned}
\end{equation*}
The result follows from a straightforward analysis of $C(d)$.
\end{proof}
Note that \eqref{eq:rhogammaasymp} implies $\delta\rho_{\gamma^{\rho_0,\delta}}=dC(d) \rho^2 b^2 +o(\rho^2 b^2).$ for any $d$. We also have the following 
\begin{corollary}\label{cor:leadingorder}
Let $\hat{V}(0)=\nu b$ and let $d$ be fixed. Then  
\begin{equation}
4\pi b\rho^2\left(3\frac{\rho_0^2}{\rho^2}-2\frac{\rho_0}{\rho}\right)+\widehat{V}(0)\rho^2\left(1-\frac{\rho_0^2}{\rho^2}\right)=4\pi b\rho^2 +\rho^2 b^2\left(  2\nu C(d)-16\pi C(d)   \right)+  o(\rho^2 b^2).  
\end{equation}
\end{corollary}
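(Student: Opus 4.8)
The plan is to trade $\rho_0$ for $\rho$ by using the density constraint obeyed by the inner minimizer, and then to Taylor expand the left-hand side, which is a polynomial in $\rho_0/\rho$, about the point $\rho_0/\rho=1$. Concretely, in the inner infimum in \eqref{eq:simplifiedT=0t0} the constraint $\rho_\gamma=\rho-\rho_0$ forces the minimizer produced by Lemma~\ref{prop:simplfunctsol} (with $\delta=d\rho_0 b$) to satisfy $\rho_{\gamma^{\rho_0,\delta}}=\rho-\rho_0$. Substituting the expansion \eqref{eq:rhogammaasymp} gives $\rho-\rho_0=C(d)\rho_0 b+o(\rho_0 b)$. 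Since $0\le\rho_0\le\rho$ the error term is $o(\rho b)$, and because $C(d)\in[0,1]$ is bounded for $d$ fixed, this relation can be inverted to yield
\[
\frac{\rho_0}{\rho}=\frac{1}{1+C(d)b}+o(b)=1-C(d)b+o(b);
\]
in particular $\rho_0=\Theta(\rho)$, so the remaining $o(\cdot)$ estimates are uniform in the admissible range of $\rho_0$.

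Next I would set $x:=\rho_0/\rho=1-C(d)b+o(b)$ and expand the two polynomials about $x=1$: one has $3x^2-2x=1+4(x-1)+O((x-1)^2)=1-4C(d)b+o(b)$ and $1-x^2=-2(x-1)+O((x-1)^2)=2C(d)b+o(b)$. Inserting these into $4\pi b\rho^2(3x^2-2x)+\widehat{V}(0)\rho^2(1-x^2)$ and using $\widehat{V}(0)=\nu b$, the leading term is $4\pi b\rho^2$ and the terms of order $\rho^2 b^2$ collect to $\bigl(2\nu C(d)-16\pi C(d)\bigr)\rho^2 b^2$, which is exactly the asserted expansion.

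There is no substantial obstacle here; the argument is a one-line inversion followed by a first-order Taylor expansion. The only points that require care are, first, that the stated identity is only meaningful once $\rho_0$ is tied to $\rho$ through the self-consistency relation above — unconstrained, e.g.\ $\rho_0=0$ already contradicts it — so invoking \eqref{eq:rhogammaasymp} is essential; and second, that the $o$-term in \eqref{eq:rhogammaasymp} must be uniform over $\rho_0\in[0,\rho]$ in order to conclude that all errors are genuinely $o(\rho^2 b^2)$, which is inherited from the dominated-convergence argument establishing that estimate together with $C(d)\in[0,1]$.
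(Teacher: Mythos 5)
Your proposal is correct and follows essentially the same route as the paper: both use the constraint $\rho=\rho_0+\rho_{\gamma^{\rho_0,\delta}}$ together with \eqref{eq:rhogammaasymp} to obtain $\rho_0/\rho=1-C(d)b+o(b)$, and then expand the polynomial expression and use $\widehat{V}(0)=\nu b$. You merely make explicit the ``direct computation'' (the first-order expansion of $3x^2-2x$ and $1-x^2$ at $x=1$) that the paper leaves to the reader.
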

\begin{proof}
Since $\rho=\rho_0+\rho_{\gamma^{\rho_0,\delta}}$, using \eqref{eq:rhogammaasymp} we have $\rho=\rho_0+C(d)\rho_0 b+o(b)=\rho_0(1+C(d)b)+o(b)$. Thus 
\begin{equation}
\frac{\rho_0}{\rho}=\frac{1}{1+C(d)b}=1-C(d)b +o(b). \label{eq:rho0rhorelation}
\end{equation}
The statement follows from a direct computation and the assumption that $\hat{V}(0)=\nu b$.
\end{proof}
Notice that Corollary \ref{cor:leadingorder} gives the leading order term in the statement of Theorem \ref{thm:mainresult}. To determine the second order term we will now analyze $\cF^{\rm{s}}$. Recall \eqref{eq:FsT=0}. Since
\begin{equation} \label{eq:nonintegrable}
\begin{aligned}
\frac{(\rho_0 b)^2}{2(2\pi)^2}\int\Big[\sqrt{(p^2+d)^2+2(p^2+d)8\pi\frac{\widehat{Vw}(\sqrt{\rho_0 b}p)}{8\pi b}} -(p^2+d+ 8\pi\frac{\widehat{Vw}(\sqrt{\rho_0 b}p)}{8\pi b})\Big]dp
\end{aligned}
\end{equation}
is non-integrable (at infinity), we make the integral convergent by adding to it the large $p$ term
from the last term in the first line of \eqref{eq:FsT=0}. Recall that the cut-off $\epsilon$ given in Lemma \ref{lem:logFT} satisfies $\epsilon=O(\sqrt{\rho})$. We have
\begin{equation}
\begin{aligned}
& (2\pi)^{-2}\rho_0^2\int\frac{\widehat{Vw}(\sqrt{\rho_0 b}p)^2-\chi_{|p|\leq \frac{\epsilon}{\sqrt{\rho_0 b}}}\widehat{Vw}(0)^2}{4p^2}dp= \\
& \frac{\rho_0^2}{(2\pi)^2}\int_{|p|\leq \frac{\epsilon}{\sqrt{\rho_0 b}}}\frac{\widehat{Vw}(\sqrt{\rho_0 b}p)^2-\widehat{Vw}(0)^2}{4p^2}dp+\frac{\rho_0^2}{(2\pi)^2}\int_{|p|> \frac{\epsilon}{\sqrt{\rho_0 b}}}\frac{\widehat{Vw}(\sqrt{\rho_0 b}p)^2}{4p^2}dp=:I_1+I_2.
\end{aligned}
\end{equation}
Using \eqref{eq:Vexpansion} (for $\widehat{Vw}$), the first term can be bounded by
\begin{equation}\label{eq:I1bound}
|I_1|\leq \frac{\rho_0^2}{(2\pi)^2}\int_{|p|\leq \frac{\epsilon}{\sqrt{\rho_0 b}}}\frac{|\widehat{Vw}(\sqrt{\rho_0 b}p)^2-\widehat{Vw}(0)^2|}{4p^2}dp\leq Cb \rho_0^2 (\epsilon^2 a^2)
\end{equation}
which using \eqref{eq:epsilonasympt} and the definition of $b$ implies that the contribution of $I_1$ is negligible. Let us now analyze \eqref{eq:nonintegrable} together with $I_2$. In particular 
\begin{equation} \label{eq:I<I>}
\begin{aligned}
& \eqref{eq:nonintegrable}+I_2  \\
&= \frac{(\rho_0 b)^2}{2(2\pi)^2}\int_{|p|\leq \frac{\epsilon}{\sqrt{\rho_0 b}}}\Big[\sqrt{(p^2+d)^2+2(p^2+d)8\pi\frac{\widehat{Vw}(\sqrt{\rho_0 b}p)}{8\pi b}} -(p^2+d+ 8\pi\frac{\widehat{Vw}(\sqrt{\rho_0 b}p)}{8\pi b})\Big]dp \\
&+\frac{(\rho_0 b)^2}{2(2\pi)^2}\int_{|p|> \frac{\epsilon}{\sqrt{\rho_0 b}}}\Big[\sqrt{(p^2+d)^2+2(p^2+d)8\pi\frac{\widehat{Vw}(\sqrt{\rho_0 b}p)}{8\pi b}} -(p^2+d+ 8\pi\frac{\widehat{Vw}(\sqrt{\rho_0 b}p)}{8\pi b})\\
 &  \qquad\qquad\qquad\qquad\qquad\qquad+64\pi^2\frac{\widehat{Vw}^2(\sqrt{\rho_0 b}p)}{128\pi^2 b^2 p^2 }\Big]dp =: I_< (d)+I_>(d). 
\end{aligned}
\end{equation}
The dominant contribution comes from $I_<$ which we will analyze first.

\begin{lemma} \label{lem:I<}
Let $I_< (d)$ be defined as in \eqref{eq:I<I>}. Then
\begin{equation} 
I_< (d) =\frac{(\rho_0 b)^2}{2(2\pi)^2}\left(\int_{|p|\leq \frac{\epsilon}{\sqrt{\rho_0 b}}}\Big[\sqrt{(p^2+d)^2+2(p^2+d)8\pi} -(p^2+d+ 8\pi)\Big]dp + o(b)\right). \label{eq:I_<dintegral}
\end{equation} 
In particular, 
\begin{equation}\label{eq:exactI<}
\begin{aligned}
I_< (d) =&\frac{(\rho_0 b)^2}{4\pi}\Big[\frac{d^2}{4}+8\pi^2+4 \pi d-16 \pi^2 \ln \left(\frac{2\epsilon^2}{\rho_0 b} \right)- \frac{1}{4} d \sqrt{d (d+16 \pi )}-2 \pi  \sqrt{d (d+16 \pi )} \\
& +16 \pi ^2 \ln \left(d+\sqrt{d (d+16 \pi )}+8 \pi \right) + O(b)\Big].
\end{aligned}
\end{equation}
\end{lemma}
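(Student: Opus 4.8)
The plan is to prove \eqref{eq:I_<dintegral} by replacing $\widehat{Vw}(\sqrt{\rho_0 b}\,p)/(8\pi b)$ by $1$ in the integrand of $I_<(d)$ at negligible cost, and then to obtain \eqref{eq:exactI<} by evaluating the resulting elementary integral explicitly.

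For the replacement, abbreviate $s(p):=\widehat{Vw}(\sqrt{\rho_0 b}\,p)/(8\pi b)$ and, for $\sigma>0$ and $x\ge0$, set $h_\sigma(x):=\sqrt{x^2+16\pi\sigma x}-x-8\pi\sigma$, so that the integrand of $I_<(d)$ is $h_{s(p)}(p^2+d)$ while the integrand in \eqref{eq:I_<dintegral} is $h_1(p^2+d)$. Since $\widehat{Vw}$ is radial and smooth with $\widehat{Vw}(0)=8\pi b$, the expansion \eqref{eq:Vexpansion} for $\widehat{Vw}$ gives $|s(p)-1|\le Ca^2\rho_0 p^2$ on the region $\{|p|\le\epsilon/\sqrt{\rho_0 b}\}$, which by $\epsilon=C\rho^{1/2}$ (see \eqref{eq:epsilonasympt}) is at most $Ca^2\rho/b=o(1)$; in particular $s(p)\in[\tfrac12,2]$ for $b$ small. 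A direct differentiation gives $|\partial_\sigma h_\sigma(x)|=8\pi\bigl(1-(1+16\pi\sigma/x)^{-1/2}\bigr)\le\min\{8\pi,\,C\sigma/x\}$, so that $p^2\sup_{\sigma\in[1/2,2]}|\partial_\sigma h_\sigma(p^2+d)|\le C$ uniformly in $p$ and $d\ge0$; together with $|s(p)-1|\le Ca^2\rho_0 p^2$ this yields $|h_{s(p)}(p^2+d)-h_1(p^2+d)|\le Ca^2\rho_0$. Integrating over $\{|p|\le\epsilon/\sqrt{\rho_0 b}\}$ and using $\epsilon=C\rho^{1/2}$ once more, the replacement error is at most $Ca^2\rho_0\,(\epsilon/\sqrt{\rho_0 b})^2=Ca^2\rho/b=Cb^{-1}e^{-1/b}=o(b)$ by the definition of $b$; multiplying by $(\rho_0 b)^2/(2(2\pi)^2)$ gives \eqref{eq:I_<dintegral}.

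It then remains to evaluate $\int_{|p|\le M}\bigl[\sqrt{(p^2+d)^2+16\pi(p^2+d)}-(p^2+d+8\pi)\bigr]dp$ with $M:=\epsilon/\sqrt{\rho_0 b}$. Passing to polar coordinates in $\R^2$ and substituting $v=r^2+d$ turns this into $\pi\int_d^{M^2+d}\bigl[\sqrt{v^2+16\pi v}-v-8\pi\bigr]dv$; the further substitution $w=v+8\pi$, using $v^2+16\pi v=w^2-(8\pi)^2$, reduces it to $\pi\int_{d+8\pi}^{M^2+d+8\pi}\bigl[\sqrt{w^2-(8\pi)^2}-w\bigr]dw$, computed via the antiderivative $\tfrac{w}{2}\sqrt{w^2-(8\pi)^2}-32\pi^2\ln\bigl(w+\sqrt{w^2-(8\pi)^2}\bigr)-\tfrac{w^2}{2}$. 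At the lower endpoint one uses $(d+8\pi)^2-(8\pi)^2=d(d+16\pi)$; at the upper endpoint one expands for large $M$, the corrections being $O(1/M^2)=O(b)$ since $M^2=\epsilon^2/(\rho_0 b)\asymp b^{-1}$, and the surviving logarithm is $\ln(2M^2)=\ln\bigl(2\epsilon^2/(\rho_0 b)\bigr)$. Collecting the terms and multiplying by $(\rho_0 b)^2/(2(2\pi)^2)$ — the factor $\pi$ from polar coordinates combining to produce the prefactor $(\rho_0 b)^2/(4\pi)$ — gives \eqref{eq:exactI<}.

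The only genuinely delicate point is the replacement step: the domain of integration has radius of order $b^{-1/2}$, so pointwise convergence of the integrand is insufficient, and one must use the quantitative bounds $|s(p)-1|\le Ca^2\rho_0 p^2$ and $p^2\,|\partial_\sigma h_\sigma(p^2+d)|\le C$ together with the fact that $a^2\rho=e^{-1/b}$ is super-polynomially small in $b$.
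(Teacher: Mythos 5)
Your proposal is correct and follows essentially the same route as the paper's proof: a mean-value bound in the parameter $t=\widehat{Vw}(\sqrt{\rho_0 b}\,p)/(8\pi b)$, combined with the super-polynomial smallness of $|t-1|$ on the ball of radius $\epsilon/\sqrt{\rho_0 b}$ (whose area is only $O(b^{-1})$), to justify setting $t=1$, followed by an explicit evaluation of the resulting integral. Your closed-form computation (polar coordinates, $v=r^2+d$, $w=v+8\pi$, large-$M$ expansion with $\ln(2M^2)=\ln(2\epsilon^2/(\rho_0 b))+O(b)$) indeed reproduces \eqref{eq:exactI<}, which the paper leaves as a ``direct computation.''
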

\begin{proof}
The proof is similar (and simpler) to the proof of  \cite[Lemma 27]{NapReuSol-15b}). Let us briefly sketch it. We define 
$$f(p,t):=\sqrt{(p^2+d)^2+2(p^2+d)8\pi t} -(p^2+d+ 8\pi t).$$
We estimate 
$$|f\left(p,\frac{\widehat{Vw}(\sqrt{\rho_0 b}p)}{8\pi b}\right)-f(p,1)|\leq \sup_{t \in [\frac{\widehat{Vw}(\sqrt{\rho_0 b}p)}{8\pi b}, 1]} |\partial_t f(p,t)| \Big|\frac{\widehat{Vw}(\sqrt{\rho_0 b}p)}{8\pi b}-1\Big|$$
For $|p|\leq \epsilon/\sqrt{\rho_0 b}$ we have
\begin{equation}\label{eq:estimatet-1}
|\frac{\widehat{Vw}(\sqrt{\rho_0 b}p)}{8\pi b}-1\Big|=\Big|\frac{\widehat{Vw}(\sqrt{\rho_0 b}p)-\widehat{Vw}(0)}{8\pi b}\Big| \leq C\frac{\epsilon^2 a^2}{b}\leq C b^m
\end{equation}
for any $m\in \mathbb{N}.$ The last estimate allows, in particular, for a very crude bound on $|\partial_t f(p,t)|$ which can be easily shown to be $|\partial_t f(p,t)|\leq 8\pi$. The final result follows from the fact that $\epsilon/\sqrt{\rho_0 b}=O(b^{-1/2})$. Equation \eqref{eq:exactI<} follows from a direct computation.
\end{proof}
It follows from Lemma \ref{lem:I<} that for small $b$ the integral $I_<(d)=O(\rho^2 b^2 \ln b)$. To determine $d$ we need to collect all terms of order $\rho^2 b^2$. Before that, let us show that there are no more terms of order $\rho^2 b^2$ left. 
\begin{lemma} \label{lem:I>}
Let $I_> (d)$ be defined as in \eqref{eq:I<I>}. Then
\begin{equation*} 
I_> (d) =\frac{(\rho_0 b)^2}{2(2\pi)^2}\left(\int_{|p|\geq \frac{\epsilon}{\sqrt{\rho_0 b}}}\Big[\sqrt{((p^2+d)^2+16\pi (p^2+d)} -(p^2+d+ 8\pi)+\frac{32\pi^2}{p^2}\Big]dp + O(b)\right). \label{eq:I_>dintegral}
\end{equation*} 
In particular, 
$$I_> (d) =o\big(\rho^2 b^2 \big).$$
\end{lemma}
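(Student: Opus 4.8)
The plan is to write the integrand of $I_>(d)$ as $g\big(p,t(p)\big)$, where
$$g(p,t):=\sqrt{(p^2+d)^2+16\pi t(p^2+d)}-(p^2+d+8\pi t)+\frac{32\pi^2 t^2}{p^2},\qquad t(p):=\frac{\widehat{Vw}(\sqrt{\rho_0 b}\,p)}{8\pi b},$$
so that $g(p,1)$ is precisely the integrand of the asserted formula (one checks $\tfrac{64\pi^2\widehat{Vw}^2(\sqrt{\rho_0 b}p)}{128\pi^2 b^2 p^2}=\tfrac{32\pi^2 t(p)^2}{p^2}$). As in the proof of \eqref{eq:rhogammaasymp} one has $|t(p)|\le 1$, and by \eqref{eq:epsilonasympt} and \eqref{eq:rho0rhorelation} the cut-off $\Lambda_b:=\epsilon/\sqrt{\rho_0 b}$ satisfies $\Lambda_b\to\infty$ and $\Lambda_b^{-2}=\rho_0 b/\epsilon^2=O(b)$ as $b\to 0$. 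The point — in contrast with $I_<(d)$ — is that the entire region $\{|p|\ge\Lambda_b\}$ consists of (rescaled) momenta of order at least $b^{-1/2}$, so that no information on the behaviour of $\widehat{Vw}$ near the origin (in particular no analogue of the fine bound \eqref{eq:estimatet-1}) is needed: the crude bound $|t(p)|\le1$ together with the fast decay of $g$ in $p$ will suffice.

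The one real step I would carry out is a decay estimate for $g$ uniform in $t\in[-1,1]$. For $b$ small enough that $\Lambda_b^2>16\pi$ we have $16\pi|t|/(p^2+d)<1$ for all $|p|\ge\Lambda_b$ and $|t|\le1$ (so in particular the square root is real), and a Taylor expansion of the square root gives
$$\sqrt{(p^2+d)^2+16\pi t(p^2+d)}=(p^2+d)+8\pi t-\frac{32\pi^2 t^2}{p^2+d}+O\big(|p|^{-4}\big)$$
uniformly in $|t|\le1$. Since $\frac{1}{p^2}-\frac{1}{p^2+d}=\frac{d}{p^2(p^2+d)}$, this yields $|g(p,t)|\le C(1+d)|p|^{-4}$ for all $|p|\ge\Lambda_b$ and $|t|\le1$. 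This is exactly where the subtraction of $32\pi^2/p^2$, inserted in \eqref{eq:I<I>} to render \eqref{eq:nonintegrable} convergent, pays off: it cancels the $O(|p|^{-2})$ tail of the square root and leaves an integrable remainder.

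The rest is a one-line integration. From the uniform bound, $\int_{|p|\ge\Lambda_b}|g(p,t(p))-g(p,1)|\,dp\le 2C(1+d)\int_{|p|\ge\Lambda_b}|p|^{-4}\,dp=O(\Lambda_b^{-2})=O(b)$, which gives the displayed identity for $I_>(d)$; likewise $\int_{|p|\ge\Lambda_b}g(p,1)\,dp=O(\Lambda_b^{-2})=O(b)$, so that, with $(\rho_0 b)^2=\rho^2 b^2(1+o(1))$ from \eqref{eq:rho0rhorelation},
$$I_>(d)=\frac{(\rho_0 b)^2}{2(2\pi)^2}\,O(b)=O\big(\rho^2 b^3\big)=o\big(\rho^2 b^2\big).$$
The main (essentially the only) obstacle is the uniform $O(|p|^{-4})$ decay bound of the second paragraph; once it is in place the conclusion is elementary, and the argument is genuinely simpler than the one for $I_<(d)$ because the region of integration involves only rescaled momenta $|p|\gtrsim b^{-1/2}$, where $g$ is small regardless of $t$.
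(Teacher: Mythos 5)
Your proof is correct and supplies exactly the details the paper omits: the essential point in both arguments is that, thanks to the compensating term $32\pi^2 t^2/p^2$, the integrand decays like $C(1+d)|p|^{-4}$ on the region $|p|\geq \epsilon/\sqrt{\rho_0 b}$ uniformly in $|t|\leq 1$, so the outer integral is $O\big((\epsilon/\sqrt{\rho_0 b})^{-2}\big)=O(b)$. The only cosmetic difference is that the paper's sketch compares $t(p)$ with $1$ via the mean value theorem, using that $\sup_t|\partial_t f(p,t)|$ is integrable while $|t(p)-1|$ is merely $O(1)$, whereas you bound the integrand itself uniformly in $t$, which gives the displayed identity and the bound $I_>(d)=O(\rho^2 b^3)=o(\rho^2 b^2)$ in one stroke.
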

The proof of this statement is similar to the proof of Lemma \ref{lem:I<}. In this case, however, the term \eqref{eq:estimatet-1} can only be bounded by a constant. On the other hand, since $1/p^2=O(b)\ll 1$, one can estimate $\sup |\partial_t f(p,t)|$ by an integrable function. We omit the details. 

We arrive at the following result which yields the desired expansion in Theorem \ref{thm:mainresult}.
\begin{corollary}\label{cor:final}
Let $\mathcal{F}^{\rm sim}$ be defined as in \eqref{eq:simplifiedT=0t0} and let $\hat{V}(0)= \nu b$. Then 
$$\inf_{(\gamma,\alpha,\rho_0),\ \rho_\gamma+\rho_0=\rho}\mathcal{F}^{\rm sim}(\gamma,\alpha,\rho_0)=4\pi \rho^2 b + 4\pi \rho^2 b^2 \ln b + \left(\inf_{d\geq 0} C_\nu (d)\right)\rho^2 b^2 + o \big(\rho^2 b^2 \big)$$
where $C_\nu (d)$ is given by \eqref{eq:Cvd}.
\end{corollary}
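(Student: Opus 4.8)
The plan is to insert the asymptotic expansions of the preceding lemmas into the decomposition of $\inf\mathcal F^{\rm sim}$ supplied by \eqref{eq:simplifiedT=0t0}, and to verify that the resulting $\rho^2b^2$-coefficient is exactly $C_\nu(d)$ from \eqref{eq:Cvd}. Fix $\rho_0\in[0,\rho]$. By Lemma~\ref{prop:simplfunctsol} (applied at $T=0$, $t_0=0$) the inner infimum $\inf_{(\gamma,\alpha),\ \rho_\gamma=\rho-\rho_0}\mathcal F^{\rm s}(\gamma,\alpha,\rho_0)$ is attained at the explicit pair $(\gamma^{\rho_0,\delta},\alpha^{\rho_0,\delta})$, where $\delta=\delta(\rho_0)\ge0$ is the Lagrange multiplier fixing $\rho_{\gamma^{\rho_0,\delta}}=\rho-\rho_0$. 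Writing $\delta=d\rho_0b$ converts $\rho_0$ into a parameter $d=d(\rho_0)\ge0$; by \eqref{eq:rhogammaasymp} the constraint becomes $\rho-\rho_0=C(d)\rho_0b+o(\rho_0b)$, so $\rho_0=\rho\bigl(1-C(d)b+o(b)\bigr)=\rho(1+o(1))$, and minimising $\mathcal F^{\rm sim}$ over $\rho_0$ is the same as minimising over $d\ge0$.

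Next I would collect the four groups of terms in $\mathcal F^{\rm sim}(\gamma^{\rho_0,\delta},\alpha^{\rho_0,\delta},\rho_0)$. The $\widehat V(0)$- and $4\pi b$-bracket of \eqref{eq:simplifiedT=0t0} is evaluated by Corollary~\ref{cor:leadingorder}, giving $4\pi\rho^2b$ together with the $\rho^2b^2$-piece $(2\nu-16\pi)C(d)$; the term $-\delta\rho_{\gamma^{\rho_0,\delta}}$ equals $-dC(d)\rho^2b^2+o(\rho^2b^2)$ by the remark after \eqref{eq:rhogammaasymp}, and the sum of these two is $C(d)(2\nu-16\pi-d)\rho^2b^2$, which is precisely the first line of \eqref{eq:Cvd} after inserting $C(d)=1-\tfrac1{8\pi}\bigl(\sqrt{d(d+16\pi)}-d\bigr)$. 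The integral $I_1$ is negligible by \eqref{eq:I1bound} together with \eqref{eq:epsilonasympt}, and $I_>(d)=o(\rho^2b^2)$ by Lemma~\ref{lem:I>}. For $I_<(d)$ I would start from \eqref{eq:exactI<} and substitute $\epsilon^2=(4/e^{2\Gamma})\rho$ (from \eqref{eq:epsilonasympt}) and $\rho/\rho_0=1+o(1)$, so that $\ln\!\bigl(2\epsilon^2/(\rho_0b)\bigr)=\ln 8-2\Gamma-\ln b+o(1)$; after multiplying through by $(\rho_0b)^2/(4\pi)$ and replacing $\rho_0^2$ by $\rho^2(1+O(b))$, $I_<(d)$ splits into $4\pi\rho^2b^2\ln b$ plus $\rho^2b^2$ times the second line of \eqref{eq:Cvd}, with an $o(\rho^2b^2)$ error. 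Summing the four groups gives, for each fixed $d$,
\begin{equation*}
\mathcal F^{\rm sim}(\gamma^{\rho_0,\delta},\alpha^{\rho_0,\delta},\rho_0)=4\pi\rho^2b+4\pi\rho^2b^2\ln b+C_\nu(d)\,\rho^2b^2+o(\rho^2b^2).
\end{equation*}

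It then remains to pass to the infimum. For the upper bound, note that $C_\nu$ is continuous on $[0,\infty)$ and $C_\nu(d)\to+\infty$ as $d\to\infty$ — the terms linear in $d$ cancel and a $4\pi\ln d$ survives — so $\inf_{d\ge0}C_\nu(d)$ is attained at a finite $d^\ast=d^\ast(\nu)$; choosing $\rho_0$ with $\rho_0+\rho_{\gamma^{\rho_0,\delta}}=\rho$ and $\delta=d^\ast\rho_0b$ (such $\rho_0$ exists since $\rho_0\mapsto\rho_0+\rho_{\gamma^{\rho_0,d^\ast\rho_0b}}$ is continuous, vanishes at $0$, and exceeds $\rho$ at $\rho_0=\rho$), the displayed identity yields the upper bound. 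For the lower bound, the displayed identity applied to an arbitrary admissible $\rho_0$, i.e.\ an arbitrary $d\ge0$, gives
\[
\mathcal F^{\rm sim}\ge4\pi\rho^2b+4\pi\rho^2b^2\ln b+C_\nu(d)\rho^2b^2-o(\rho^2b^2)\ge4\pi\rho^2b+4\pi\rho^2b^2\ln b+\Bigl(\inf_{d'\ge0}C_\nu(d')\Bigr)\rho^2b^2-o(\rho^2b^2),
\]
and taking the infimum over $\rho_0$ finishes the proof.

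The main obstacle is uniformity of the $o(\rho^2b^2)$-remainders in the minimisation variable. All the identities above — the dominated convergence behind \eqref{eq:rhogammaasymp}, the replacements $\widehat{Vw}(\sqrt{\rho_0b}\,p)/(8\pi b)\to1$ in Lemmas~\ref{lem:I<}--\ref{lem:I>}, and the substitution $\rho_0=\rho(1+o(1))$ — are derived for fixed $d$, whereas $d=d(\rho_0)$ ranges over all of $[0,\infty)$ and in fact $d\to\infty$ as $\rho_0\to\rho$; moreover the Lagrange-multiplier representation of Lemma~\ref{prop:simplfunctsol} is available only for $\delta\ge0$, that is, for $\rho_0=\rho(1+O(b))$, so the complementary small-condensate regime $\rho_0\ll\rho$ must be excluded by a separate, cruder energy lower bound. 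I would handle the first point by restricting, at no cost, to a bounded window $d\in[0,M]$ — legitimate because outside $[0,M]$ the very large value of $C_\nu$ combined with any crude lower bound on $\mathcal F^{\rm sim}$ already beats the target — and then checking that on $[0,M]$ the remainder estimates are uniform; this uniformity verification, rather than the algebra of matching \eqref{eq:Cvd}, is where the real care is needed.
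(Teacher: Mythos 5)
Your proposal is correct and follows essentially the same route as the paper: the paper's proof is exactly the "straightforward calculation" you carry out, combining Corollary~\ref{cor:leadingorder}, the remark after \eqref{eq:rhogammaasymp}, Lemma~\ref{lem:I<} and Lemma~\ref{lem:I>}, together with the identity $\ln\bigl(2\epsilon^2/\rho_0\bigr)=\ln\bigl(8/e^{2\Gamma}\bigr)+O(b)$ from \eqref{eq:epsilonasympt} and \eqref{eq:rho0rhorelation}, and your matching of the $\rho^2b^2$-coefficient with \eqref{eq:Cvd} is accurate. Your additional discussion of uniformity of the remainders in $d$ and of the passage to the infimum (restriction to a bounded window using $C_\nu(d)\to\infty$) is a legitimate refinement of points the paper leaves implicit, not a different method.
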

\begin{proof}
The proof follows from a straightforward calculation using Corollary \ref{cor:leadingorder}, Lemma \ref{lem:I<} and Lemma \ref{lem:I>}. Furthermore, using \eqref{eq:epsilonasympt} and \eqref{eq:rho0rhorelation}, we notice that
$$\ln\left(\frac{2\epsilon^2}{\rho_0} \right)=  \ln\left(\frac{2\epsilon^2}{\rho} \right)+\ln\left(\frac{\rho}{\rho_0} \right)= \ln\left(\frac{8}{e^{2\Gamma}} \right)+O(b).   $$
\end{proof}

\subsection{Estimates on remaining error terms.} It follows from \eqref{eq:apriorilowerboundfull} that 
$$\cF^{\rm can}(\gamma,\alpha,\rho_0)\geq\cF^{\rm sim}(\gamma^{\rho_0,\delta},\alpha^{\rho_0,\delta},\rho_0)-o(\rho^2 b^2).$$
Thus we only need to analyze the terms in \eqref{eq:upperbounderrorterms}. Recalling the definitions of \eqref{errors} and using $\hat{V}(0)\leq Cb$ as well as \eqref{eq:rhogammaasymp}, we immediately see that 
$$\big|E_i (\gamma^{\rho_0,\delta},\alpha^{\rho_0,\delta},\rho_0))\big| \leq C\rho^2 b^2, \qquad i=2,3,4.$$
which shows that these terms are indeed of lower order with respect to the expansion in Corollary \ref{cor:final}. It remains to analyze $E_1 (\gamma^{\rho_0,\delta},\alpha^{\rho_0,\delta},\rho_0)$. To this we notice that we notice that, at $T=0$ and for $d=0$, 
$$\alpha^{\rho_0,\delta}-\alpha_0=-\frac12 \frac{\rho_0 \widehat{Vw(p)}}{\sqrt{(p^2+\delta)^2+2(p^2+\delta) \rho_0 \widehat{Vw(p)}}}+\rho_0\hat{\varphi}$$
where the distribution $\hat{\varphi}$ is given in Lemma \ref{lem:logFT}. The action of $\hat{\varphi}$ can be split into two terms, the singular given by the first term in \eqref{eq:hatwFT} and the regular one given by the second term in  \eqref{eq:hatwFT}. Accordingly, we write $\hat{\varphi}=:\hat{\varphi}_1+\hat{\varphi}_2$. We rewrite
$$\alpha^{\rho_0,\delta}-\alpha_0=-\frac12 \frac{\rho_0 \widehat{Vw(p)}}{\sqrt{(p^2+\delta)^2+2(p^2+\delta) \rho_0 \widehat{Vw(p)}}}+\rho_0\hat{\varphi}_2+\rho_0\hat{\varphi}_1=: \rho_0\tilde{\alpha}+\rho_0\hat{\varphi}_1.$$
Then $E_1 (\gamma^{\rho_0,\delta},\alpha^{\rho_0,\delta},\rho_0))$
involves three terms:
\begin{equation*}
A_1=\rho_0^2\int\tilde{\alpha}(p)(\widehat{V}\ast\tilde{\alpha})(p)dp, 
A_2=\rho_0^2 \int \tilde{\alpha}(p)(\widehat{V}\ast\hat{\varphi}_1)(p)dp, 
A_3=\rho_0^2 \int \hat{\varphi}_1(p) (\widehat{V}\ast\hat{\varphi}_1)(p)dp.
\end{equation*}
The terms $A_2$ and $A_3$ are of lower order. This can seen from the fact that the action of $\hat{\varphi}_1$ restricts to $|p|\leq \epsilon$. Consequently, similarly to the bound in \eqref{eq:I1bound}, both terms can be bounded by $\rho^2 (\epsilon a)$ which is $o(\rho^2 b^2).$

It remains to analyze $A_1$. Note, that since $\hat{V}(0)= \nu b$, we have that 
\begin{equation} \label{eq:A_1generalbound}
A_1\leq \rho^2 b (\int |\tilde{\alpha}|)^2.
\end{equation}
We will now estimate $\int |\tilde{\alpha}|$. We have
\begin{equation*}
\begin{aligned}
\int |\tilde{\alpha}| (p) dp &=  \int_{|p|>\epsilon}\Big|\Big[\frac{-\widehat{Vw}(p)}{2\sqrt{(p^2+\delta)^2+2(p^2+\delta) \rho_0 \widehat{Vw}(p)}}+\frac{\widehat{Vw}(p)}{2p^2}\Big]\Big|dp \\
& + \int_{|p|\leq \epsilon} \Big|\frac{-\widehat{Vw}(p)}{2\sqrt{(p^2+\delta)^2+2(p^2+\delta) \rho_0 \widehat{Vw}(p)}}\Big|dp =:A_1^>+A_1^<.
\end{aligned}
\end{equation*}
As in the analysis before, we rescale $\delta$ and the $p$ variable and get 
\begin{equation*}
\begin{aligned}
A_1^< &= b\int_{|p|\leq \frac{\epsilon}{\sqrt{\rho_0 b}}} \Big|\frac{-8\pi \frac{\widehat{Vw}(\sqrt{\rho_0 b}p)}{8\pi b}}{2\sqrt{(p^2+d)^2+16\pi (p^2+d) \frac{\widehat{Vw}(\sqrt{\rho_0 b}p)}{8\pi b}}}\Big|dp,\\
A_1^> &= b\int_{|p|>\frac{\epsilon}{\sqrt{\rho_0 b}}}\Big|\Big[\frac{-8\pi\frac{\widehat{Vw}(\sqrt{\rho_0 b}p)}{8\pi b}}{2\sqrt{(p^2+d)^2+16\pi (p^2+d)  \frac{\widehat{Vw}(\sqrt{\rho_0 b}p)}{8\pi b}}}+\frac{8\pi\frac{\widehat{Vw}(\sqrt{\rho_0 b}p)}{8\pi b}}{2p^2}\Big]\Big|dp.
\end{aligned}
\end{equation*}
Notice, that the term coming from $\hat{\varphi}_2$ makes the outer integral converge. In particular, like in Lemmas \ref{lem:I<} and \ref{lem:I>}, one can show that to leading order in $b$ these integrals can be replaced by those with $\widehat{Vw}(\sqrt{\rho_0 b}p)/(8\pi b)=1$. In particular, a direct computation shows that
\begin{equation*}
 b\int_{|p|\leq \frac{\epsilon}{\sqrt{\rho_0 b}}} \frac{8\pi }{2\sqrt{(p^2+d)^2+16\pi (p^2+d) }}dp=O(b\ln b)
\end{equation*}
which implies 
$$A_1^<=O(b\ln b)  .$$
Similarly, one can show that $A_1^>=o(b\ln b)$. Combining these estimates with \eqref{eq:A_1generalbound}, we conclude that 
$$A_1\leq C\rho^2 b \Big[(b\ln b)^2 +o((b\ln b)^2)\Big]=C \rho^2 b^2 \ln b \Big[(b\ln b)+o(b\ln b)\Big] =o(\rho^2 b^2)$$ 
and thus is subleading with respect to the terms in the statement of Theorem \ref{thm:mainresult}. This ends the proof.

\appendix
\section{Proof of Lemma \ref{lem:logFT}} \label{appendixA}
\begin{proof}
Let $\varphi=1-w$. From the scattering equation  it follows that
\bq \label{eq:hatwawayfrom 0}
\hat{\varphi}=\frac12 \frac{\widehat{Vw}(p)}{p^2} \,\,\, \text{on} \,\,\, \R^2\setminus\{0\}.
\eq 
Since we are considering distributions, we have to determine the singular term at the origin. To this end we notice that
$$\varphi=1-2b\ln(r/a)+\tilde{w}$$
with $\tilde{w}\in L^1 (\R^2)\cap L^2 (\R^2)$ (this follows from the fact that $\tilde{w}$ describes the behaviour of the $L^2$- function $w$ for $|x|<R$). Thus
\bq \label{eq:scatt_fourier_ln}
\hat{\varphi}=(2\pi)^2(1+ 2b\ln(a))\delta_0-2b\widehat{\ln(r)}+\widehat{\tilde{w}}
\eq
where $\delta_0$ is the Dirac-delta distribution at the origin. Clearly $\widehat{\tilde{w}}\in L^\infty (\R^2)\cap L^2 (\R^2)$. We now need to compute the Fourier transform of the tempered distribution given by $\ln|x|$ on $\R^2$. We will denote it by $\mathcal{L}$. Introducing the distribution $\mathcal{P}$ given by
$$\mathcal{P}(\phi)=-2\pi\left(\int_{|p|\leq 1}\frac{\phi(p)-\phi(0)}{p^2}dp+\int_{|p|>1}\frac{\phi(p)}{p^2}dp\right)$$
we easily compute that 
$$(-\Delta \mathcal{P})(\check{\phi})=-2\pi \phi(0).$$
Here $\check{\phi}(p)=(2\pi)^{-2}\int \exp(ipx)\phi(x)dx$ denotes the inverse Fourier transform. Thus, since obviously $-\Delta \mathcal{L}=-2\pi\delta_0$,  the distribution 
$$\mathcal{M}(\phi):= \mathcal{L}(\phi)-\mathcal{P}(\check{\phi})$$
satisfies $-\Delta \mathcal{M}=0 $ which implies $p^2 \widehat{\mathcal{M}}=0$. Thus
$$\widehat{\mathcal{M}}=c_0 \delta_0 +c_1\partial_1\delta_0+c_2\partial_2\delta_0+c_3\partial_1\partial_2 \delta_0.$$
To determine the constants $c_i$ we will use the scaling properties of the Fourier transform. Let $\phi_\kappa (p):= \phi(\kappa p)$. Then from the definition of $\mathcal{M}$ we get
\begin{equation*}
\widehat{\mathcal{M}}(\phi_\kappa)=\mathcal{M}(\hat{\phi}_\kappa)=\mathcal{L}(\hat{\phi}_\kappa)-\mathcal{P}(\phi_\kappa).
\end{equation*}
After a change of variables we have 
\begin{equation*}
\begin{aligned}
\mathcal{P}(\phi_\kappa)=&-2\pi\left(\int_{|p|\leq \kappa}\frac{\phi(p)-\phi(0)}{p^2}dp+\int_{|p|>\kappa}\frac{\phi(p)}{p^2}dp\right) \\
&=-2\pi\left(\int_{|p|\leq 1}\frac{\phi(p)-\phi(0)}{p^2}dp+\int_{1<|p|\leq \kappa}\frac{\phi(p)-\phi(0)}{p^2}dp+\int_{|p|>\kappa}\frac{\phi(p)}{p^2}dp\right)\\
&=\mathcal{P}(\phi)+2\pi\int_{1<|p|\leq \kappa}\frac{\phi(0)}{p^2}dp=\mathcal{P}(\phi)+(2\pi)^2 \ln (|\kappa|) \phi(0)\\
&=\mathcal{P}(\phi)+\ln (|\kappa|) \int \hat{\phi}(p)dp.
\end{aligned}
\end{equation*} 
Thus
\begin{equation*}
\widehat{\mathcal{M}}(\phi_\kappa)=\mathcal{L}(\hat{\phi}_\kappa)+\mathcal{M}(\hat{\phi})-\mathcal{L}(\hat{\phi})-\ln (|\kappa|) \int \hat{\phi}(p)dp.
\end{equation*}
Since 
$$\mathcal{L}(\hat{\phi}_\kappa)=\int\ln|x|\frac{1}{\kappa^2}\hat{\phi}\left(\frac{x}{\kappa}\right)dx= \ln (|\kappa|) \int \hat{\phi}(p)dp+\mathcal{L}(\hat{\phi})$$
we arrive at
$$\widehat{\mathcal{M}}(\phi_\kappa)=\mathcal{M}(\hat{\phi}).$$
This scaling invariance together with the fact that $\delta_0 (\phi_\kappa)=\delta_0(\phi), (\partial_i\delta_0)(\phi_\kappa)=\kappa(\partial_i\delta_0)(\phi)$ and $(\partial_1\partial_2 \delta_0)(\phi_\kappa)=\kappa^2(\partial_1 \partial_2 \delta_0)(\phi)$ implies that $c_1=c_2=c_3=0$ and thus
$$\widehat{\mathcal{M}}=c_0\delta_0.$$
To compute $C_0$ we pick as test function $f(p)=\exp(-p^2/2)$. We have
$$c_0=\widehat{\mathcal{M}}(f)=\mathcal{L}(\hat{f})-\mathcal{P}(f)$$
which can be computed in polar coordinates and yields 
$$c_0=2(2\pi)^2\int_0^\infty r\ln (r) \exp(-r^2/2)dr=(2\pi)^2(\ln2 - \Gamma).$$
Thus altogether 
\bq  \label{eq:FToflog}
\widehat{\mathcal{L}}=(2\pi)^2(\ln2 - \Gamma)\delta_0+\mathcal{P}.
\eq
Remembering that away from the origin the action of the distribution $\hat{w}$ is given by \eqref{eq:hatwawayfrom 0}, plugging \eqref{eq:FToflog} into \eqref{eq:scatt_fourier_ln}, we obtain 
\begin{equation}
\begin{aligned}
\hat{\varphi}(\phi)=(2\pi)^2&(1+ 2b\ln(a)-2b\ln2 + 2b\Gamma)\phi(0) \\
&+\left(\int_{|p|\leq 1}\frac{\widehat{Vw}(p)\phi(p)-\widehat{Vw}(0)\phi(0)}{2p^2}dp+\int_{|p|>1}\frac{\widehat{Vw}(p)\phi(p)}{2p^2}dp\right).
\end{aligned}   \label{eq:FTwithoutnewdelta}
\end{equation}
Rewriting the integral $\int_{|p|\leq 1}$ as a sum $\int_{|p|\leq \epsilon} + \int_{\epsilon <|p|\leq 1}$ we can rewrite the second line in \eqref{eq:FTwithoutnewdelta} as  
\begin{equation*}
\begin{aligned}
\int_{|p|\leq 1}\frac{\widehat{Vw}(p)\phi(p)-\widehat{Vw}(0)\phi(0)}{2p^2}dp&+\int_{|p|>1}\frac{\widehat{Vw}(p)\phi(p)}{2p^2}dp  =\\
\int_{|p|\leq \epsilon}\frac{\widehat{Vw}(p)\phi(p)-\widehat{Vw}(0)\phi(0)}{2p^2}dp&+\int_{|p|>\epsilon}\frac{\widehat{Vw}(p)\phi(p)}{2p^2}dp + (2\pi)^2 2b(\ln \epsilon) \phi(0)
\end{aligned}   
\end{equation*}
where we used \eqref{eq:hatVphi0}. Choosing $\epsilon$ as in the statement of the lemma we cancel all $\delta$-terms in \eqref{eq:FTwithoutnewdelta} and obtain the desired result.
\end{proof}

\textbf{Acknowledgments.} SF was partially supported by a Sapere Aude grant from the Independent Research Fund Denmark, Grant number DFF--4181-00221. MN was supported by the National Science Centre (NCN) under the project Nr. 2016/21/D/ST1/02430. RR gratefully acknowledges the support of the Royal Society and Darwin College, Cambridge. JPS was partially supported by the Villum Centre of Excellence for the Mathematics of Quantum Theory (QMATH) and the ERC Advanced grant 321029.

\bibliographystyle{siam}

\end{document}